\newtheorem{thm}{Theorem}[section]
\newtheorem{lem}[thm]{Lemma}
\newtheorem{prop}[thm]{Proposition}
\newtheorem{cor}[thm]{Corollary}
\newtheorem*{thma}{Theorem A}
\newtheorem*{thmb}{Theorem B}
\theoremstyle{definition}
\newtheorem{dfn}[thm]{Definition}
\theoremstyle{remark}
\newtheorem{rmk}[thm]{Remark}
\numberwithin{equation}{section}
\begin{document}

\title[RG flow and the $\sigma$ model]{Homotopy RG flow and the non-linear $\sigma$-model}

\author{Ryan Grady}
\address{Montana State University\\Bozeman 59717\\USA}
\email{ryan.grady1@montana.edu} 
\thanks{R Grady was partially supported by the National Science Foundation under Award DMS-1309118.}

\author{Brian Williams}
\address{Northwestern University\\ Evanston 60208\\USA}
\email{bwill@math.northwestern.edu}
\thanks{B Williams enjoyed support of the
National Science Foundation as a graduate student research fellow under Award DGE-1324585.}

\subjclass[2010]{Primary 81T40, 81T17, 53C44; Secondary 81T15, 17B55}

\date{}

\begin{abstract}
The purpose of this note is to give a mathematical treatment to the low energy effective theory of the two-dimensional sigma model. Perhaps surprisingly, our low energy effective theory encodes much of the topology and geometry of the target manifold. In particular, we relate the $\beta$-function of our theory to the Ricci curvature of the target, recovering the physical result of Friedan. 

\end{abstract}

\maketitle

\tableofcontents

\section{Introduction}

In this note we give a mathematical treatment to the low energy effective theory of the two-dimensional $\sigma$-model.  
This $\sigma$-model involves maps from a Riemann surface $\Sigma$ to a Riemannian manifold $X$ with metric $h$. 
Physically, a field is a smooth map $\varphi : \Sigma \to X$ and the action functional is given by
\ben
S(\varphi) = \int_{\Sigma} h(\partial \varphi, \Bar{\partial} \varphi),
\een
where $\partial, \Bar{\partial}$ are the holomorphic and anti-holomorphic pieces of the de Rham differential on $\Sigma$. 
The map $\varphi$ satisfies the classical equations of motion if and only if it is harmonic. 
Classically, of course, the theory is conformal.
In particular, when we work locally on $\Sigma$ the theory is scale invariant. 
We study the failure of the classical theory to be scale invariant at the quantum level. The quantity that measures this is precisely the {\em $\beta$-function} of the quantum field theory. 

We do not study the full theory, but rather only the theory in perturbation around the space of constant maps in the space of all harmonic maps. 
Our low energy effective quantization encodes much of the topology and geometry of the target manifold. 
Our main result is to relate the one-loop $\beta$-function to the Ricci curvature of $X$. 

\begin{thma}
The one-loop $\beta$-function for the two-dimensional  $\sigma$-model with target the Riemannian manifold $(X,h)$ satisfies
\[
\beta^{(1)} (h) = -\frac{1}{12 \pi} \mathrm{Ric} (h).
\]
\end{thma}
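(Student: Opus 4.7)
The plan is to follow a Costello-style perturbative quantization about the space of constant maps and to read off the one-loop renormalization group flow directly from a small collection of Feynman diagrams. First, I would expand the classical action in Riemann normal coordinates centered at a point $x \in X$. Writing $\varphi = \exp_x(\xi)$ for a fluctuation $\xi \in T_xX$, the Taylor expansion of the metric takes the form $h_{ij}(\xi) = \delta_{ij} - \tfrac{1}{3} R_{iklj}\xi^k \xi^l + O(\xi^3)$, so that the action decomposes as a free $\partial\bar\partial$ kinetic term on $T_xX$-valued fields plus an infinite tower of interaction vertices, the leading one being the four-valent vertex $\tfrac{1}{3}R_{iklj}\xi^k \xi^l \, \partial\xi^i \bar\partial \xi^j$. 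This reduces the problem of quantizing the $\sigma$-model near constant maps to the study of a free BV theory on $\Sigma$ perturbed by a curvature-dependent interaction.

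Next I would construct the scale-dependent effective action using the heat kernel regulator associated with the Laplacian on $\Sigma$, so that the propagator between scales $\varepsilon$ and $L$ is $P_\varepsilon^L = \int_\varepsilon^L dt\, (\partial \bar\partial)^* K_t$. At one loop, the effective interaction is a sum over connected one-loop graphs built from the vertices above with $P_\varepsilon^L$ on internal edges. The diagram controlling the $\beta$-function is the tadpole obtained from the leading four-valent vertex by contracting two of its $\xi$ legs into a loop; the index algebra then collapses $R_{iklj}\delta^{kl}$ to $R_{ij}$ since the other admissible contractions vanish by the antisymmetries of the Riemann tensor. This yields an effective two-point interaction of the form $c(\varepsilon,L)\, \mathrm{Ric}(h)_{ij}\,\partial \xi^i \bar\partial\xi^j$.

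The third step is to isolate the scale dependence of $c(\varepsilon,L)$. The relevant tadpole integral reduces, after taking the diagonal of the heat kernel on $\Sigma$ and performing the Gaussian integral, to a logarithm in the scale. Taking the derivative with respect to $\log L$ to extract the RG flow yields the one-loop $\beta$-function as a constant multiple of $\mathrm{Ric}(h)$, with the constant independent of both $\Sigma$ and the choice of normal coordinate patch. This already establishes the statement up to a universal numerical constant and confirms the expected tensorial form.

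Finally I would pin down the precise coefficient $-\tfrac{1}{12\pi}$. This combines the factor $\tfrac{1}{3}$ from the normal coordinate expansion, the $\tfrac{1}{2}$ symmetry factor of the tadpole, the $\tfrac{1}{4\pi}$ from the planar Gaussian normalization of the heat kernel, and the sign conventions identifying $\mathrm{Ric}$ from the index contraction. I expect the main obstacle to be neither the tensorial identification nor the shape of the diagram but rather the careful bookkeeping of combinatorial factors, signs, and normalizations within the effective field theory formalism, together with verifying that contributions from higher-order vertices in the normal coordinate expansion either vanish by index symmetry or fail to produce a logarithmic dependence on scale and therefore do not contribute to $\beta^{(1)}(h)$.
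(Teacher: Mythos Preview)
Your proposal is essentially correct and is precisely the alternative proof the paper sketches in its closing Remark: work locally in geodesic normal coordinates, observe that in such a chart the $L_\infty$ algebra $\fg_X$ is abelian so the $I^X$ vertices disappear, and then compute the tadpole on the leading $\tfrac{1}{3}R_{ikjl}$ vertex to extract $\mathrm{Ric}(h)$ with the stated coefficient.

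The paper's main route is organized differently. Rather than fixing a point and normal coordinates, it works globally in the jet/$L_\infty$ formalism and argues cohomologically: a spectral sequence argument shows $\beta^{(1)}_X$ vanishes in $H^0(\sO_{\rm loc}(\sE))$, a power-counting proposition shows wheels with $\geq 2$ vertices are UV finite, and then the quasi-isomorphism $j_\infty$ is invoked to prove that the sum of all tadpoles $\sum_k \cT_k(\epsilon)$ is cohomologous to the single tadpole $\cT_2(\epsilon)$. Your approach trades these cohomological reductions for the a priori knowledge that $\beta^{(1)}$ is a globally defined element of $\Sym^2(T_X^\vee)$ (which follows from the obstruction-deformation computation), after which a pointwise normal-coordinate computation suffices. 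Each route has its virtue: the paper's is manifestly global and clarifies exactly where each piece of the deformation complex enters; yours is shorter and closer to the physics literature.

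One small correction: your claim that the higher-order vertices in the normal coordinate expansion ``fail to produce a logarithmic dependence on scale'' is not the right reason. Every tadpole $\cT_k$ carries the same self-contracted propagator integral and is therefore also $\log\epsilon$ divergent. The point is rather that the tadpole on the $(k+2)$-valent vertex has $k$ external legs, so for $k>2$ it contributes to the higher jet components of $\mathfrak{R}(\mathrm{Ric})$ rather than to the bivalent functional $\int(\text{2-tensor})(\partial\xi,\bar\partial\xi)$; once you know the answer is a tensor, the $k=2$ tadpole at a point determines it. Also, be careful with your factor bookkeeping: the paper's computation assembles $-\tfrac{1}{4\pi}\cdot\tfrac{1}{3}=-\tfrac{1}{12\pi}$ with no residual $\tfrac{1}{2}$, so make sure your tadpole symmetry factor and the combinatorics of contracting the two $\xi$-legs are not double-counted.
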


In 1985,  Dan Friedan \cite{Fried} gave a physical argument for the validity of Theorem A; see also the more recent geometric overview of Carfora \cite{Car}. This manuscript is by no means the first treatment of the Riemannian $\sigma$-model using the Batalin-Vilkovisky (BV) formalism. Most recently, using a different approach, Nguyen \cite{Nguyen} has constructed the BV quantization of the theory for a general target manifold and recovers Friedan's result from it.
 Below, we give a mathematically rigorous treatment for a general target Riemannian manifold $(X,h)$ using the language of $L_\infty$ spaces. Our approach also uses Costello's paradigm for BV theory, as well as the subsequent (rigorous) development of the $\beta$-function in this setting by Elliott, Williams, and Yoo \cite{EWY}.

The theorem identifies the one-loop contribution to the $\beta$-function, so to compute it we only consider the quantization to first-order in the quantum parameter $\hbar$. 
Key to this process, in the BV formalism, is to study the moduli of all such quantizations. 
Perturbatively, the object which controls this moduli space is the deformation complex.
Its cohomology represents the space of all functionals by which we can deform the classical action functional. 

\begin{thmb}
Let $\Def$ denote the obstruction deformation complex for the $\sigma$-model of maps $\CC \to (X,h)$.
There is a quasi-isomorphism
\ben
(\Def)^{{\rm Aff}(\CC)} \simeq \Riem(X,h) [1] \oplus \Omega^3_{cl, X},
\een
where $\Omega^3_{cl,X}$ is the sheaf of closed three-forms and
\[
\Riem(X,h) = T_X \xto{\d} \Sym^2(T_X^\vee) [-1].
\]
\end{thmb}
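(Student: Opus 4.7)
The plan is to identify $\Def$ with a tractable model of local Chevalley--Eilenberg cochains on the jet $L_\infty$ algebra of the target, take $\mathrm{Aff}(\CC)$-invariants, and then extract the cohomology through a spectral sequence whose $E_\infty$ page already displays the two summands on the right-hand side. In the $L_\infty$ space formalism set up earlier in the paper, the perturbative theory around constant maps is governed by an $L_\infty$ bundle $\mathfrak{g}_X$ with underlying graded bundle $T_X[-1]$, whose brackets encode the Levi-Civita connection of $(X,h)$. The deformation complex of local functionals takes the form
\[
\Def \; \simeq \; C^*_{\mathrm{loc}}\bigl(\Omega^{*}_\CC \otimes \mathfrak{g}_X\bigr),
\]
and, by standard jet-bundle manipulations, can be rewritten up to quasi-isomorphism as $\Omega^*_\CC \otimes_{D_\CC} C^*_{\mathrm{red}}(J\mathfrak{g}_X)$ with its Chevalley--Eilenberg differential twisted by the classical action.

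Translation invariance then localises the integral over $\CC$ to a pairing with the stalk of this complex at the origin, while invariance under dilations and rotations forces the holomorphic and antiholomorphic derivative weights of any surviving integrand to each equal one (so that it pairs with $\d z \, \d \bar z$). This trims $\Def^{\mathrm{Aff}(\CC)}$ down to a much smaller complex built from rank-$2$ tensor fields on $X$ paired with $\partial \varphi \, \bar\partial \varphi$-type combinations of the $1$-jet of $\varphi$, along with gauge generators coming from the ghosts that parametrise diffeomorphisms of $X$ and $B$-field shifts.

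Next, I filter by polynomial order in the jet variables (equivalently by the arity of Chevalley--Eilenberg cochains) and run the induced spectral sequence. The $E_1$ page is a Gelfand--Fuks style complex for the formal vector fields $W_n$, $n = \dim X$, with coefficients in the $\mathrm{GL}_n$-representations that survive the previous step. A weight count in the Hodge bigrading then isolates precisely two families: symmetric $2$-tensors contributing $\Sym^2(T_X^\vee)$ in degree $0$ and $T_X$ in degree $-1$, linked by the Lie-derivative differential $V \mapsto \mathcal{L}_V h$ and giving $\Riem(X,h)[1]$; and antisymmetric $2$-tensors whose quotient by the $B$-field gauge $B \mapsto B + \d\alpha$ is resolved by the truncated de Rham complex and identifies with $\Omega^3_{cl,X}$.

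The principal technical hurdle I expect is this Gelfand--Fuks style vanishing statement: all $W_n$-representations other than the symmetric and antisymmetric $2$-tensor types must be shown to yield acyclic invariant subcomplexes once the conformal weight constraints from Step~2 are imposed, and one has to verify that the twist by the classical action does not revive them on later pages. Once this vanishing is in hand, matching the induced differential to the Lie derivative on the metric side and to the exterior derivative on the $B$-field side is essentially formal and completes the quasi-isomorphism with $\Riem(X,h)[1] \oplus \Omega^3_{cl,X}$.
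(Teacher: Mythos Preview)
Your opening moves match the paper's: identify $\Def$ with reduced Chevalley--Eilenberg cochains on the jet $L_\infty$ algebra, pass to translation invariants via the derived tensor product over $\CC[\partial_z,\partial_{\bar z}]$, and impose the $\CC^\times$-weight constraint. From there, however, the paper proceeds quite differently and more directly than you propose.

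The paper does \emph{not} filter by jet order or invoke Gelfand--Fuks cohomology of $W_n$. Instead, it resolves $\CC\cdot \d^2z$ by the explicit Koszul complex $\CC[\epsilon,\bar\epsilon,\partial_z,\partial_{\bar z}]$ and then observes that $\CC^\times$-invariance already pins down the jet degree: only the monomials $\epsilon\bar\epsilon\,\d^2z$, $\bar\epsilon\,\d^2z\,y^\vee$, $\epsilon\,\d^2z\,\bar y^\vee$, $\d^2z\,y^\vee\bar y^\vee$, and $\d^2z\,(\d^2y)^\vee$ survive. This produces a five-term complex of $\fg_X$-modules by direct enumeration, with no spectral sequence and no vanishing theorem required. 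So the ``Gelfand--Fuks style vanishing'' you flag as the main technical hurdle simply does not arise in the paper's argument.

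The actual technical work in the paper is something you largely gloss over: identifying the differential. The internal Chevalley--Eilenberg differential $\{S_X,-\}$ is formal, but the piece $\{S_h,-\}$ coming from the metric part of the action must be computed explicitly. The paper shows it acts as the isomorphism $h^\flat$ between two of the five summands (killing them in cohomology) and that the remaining horizontal differential is the jet expansion of the covariant derivative $\nabla_h : \Omega^1_X \to \Omega^1_X \otimes \Omega^1_X$. A change of basis $(\omega_1,\omega_2)\mapsto(\omega_1+\omega_2,\omega_1-\omega_2)$ then decouples this into symmetric and antisymmetric parts, which are identified (via the classical fact that $\nabla_h$ symmetrised is the Lie derivative of $h$ and antisymmetrised is $\d_{\mathrm{dR}}$) with $\Riem(X,h)$ and the truncated de Rham complex computing $\Omega^3_{cl,X}$. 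Your sketch asserts the Lie-derivative and de Rham differentials appear, but does not indicate the mechanism---namely, that they come from $\{S_h,-\}$ and the Levi-Civita connection, not from any abstract representation-theoretic bookkeeping.
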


The first summand of $\Def$ deforms the metric on the target manifold, while the second summand introduces an $H$-flux.   In this note, we don't discuss how the addition of an $H$-flux modifies the $\beta$-function.  Though we expect that one could study this through the exact Courant algebroid determined by the $H$-flux 3-form; this would provide an interpretation of recent work of \v{S}evera and Valach \cite{SV} in the BV formalism.

There are many remaining mathematical questions, especially regarding the observable theory for the $\sigma$-model, which are beyond the scope of this paper.  For instance, the algebra of quantum observables \cite{cg2} should be quite large and include the $bc$-$\beta \gamma$-vertex algebra. 
There is also the question of studying the non-perturbative behavior of the $\beta$-function. 
The full $\beta$-function should be a section of some line bundle on the full space of classical solutions (that is, harmonic maps)  obtained by quantizing the theory in families. 

In Section 2, we construct the classical BV theory using Lie theoretic techniques as in \cite{wg2} and \cite{gg1}. We also compute the obstruction deformation complex of the theory, culminating in Proposition \ref{prop:obsdef}. Section 3 is about quantizing the $\sigma$-model to one-loop, i.e., modulo $\hbar^2$; a cohomological argument shows there is no obstruction. In Section 4, we summarize the relevant results from \cite{Cos1} and \cite{EWY} regarding the mathematical definition of the $\beta$-function. In particular, we have aimed to point out connections to and differences from the physics literature, e.g., as in Remark \ref{rmk:warning}.  Finally, in Section 5, we compute the one-loop $\beta$-function in terms of the Ricci tensor of the target manifold.

\subsection{Acknowledgements}

It's a pleasure to thank Kevin Costello, Chris Elliott, Owen Gwilliam, Steve Rosenberg, and Philsang Yoo for helpful feedback and discussion.  
We owe an especially large debt to Si Li and thank him for his insight and suggestions over the course of this project.
We appreciate Eric Berry's careful reading of a preliminary version of this work that helped correct numerous typos. 
In addition, we thank the referee for providing useful feedback that greatly improved the organization and exposition of the paper.

\section{The classical $\sigma$-model}\label{sect:classical}

In this section we set up the mathematical model that we use to study the the $\sigma$-model of maps from a Riemann surface $\Sigma$ to a Riemannian manifold $(X,h)$ where $h$ is the metric. 
Physically, a field is a map $\varphi : \Sigma \to X$ and the action functional is of the form
\ben
S(\varphi) = \int_{\Sigma} h(\partial \varphi, \Bar{\partial} \varphi),
\een
where $\partial, \Bar{\partial}$ are the holomorphic and anti-holomorphic pieces of the de Rham differential on $\Sigma$. 
The classical solutions to the equations of motion are harmonic maps $\Sigma \to X$. 
This whole mapping space is quite complicated and difficult to study, 
and the approach we take is to work perturbatively around a fixed harmonic map $\varphi_0 : \Sigma \to X$ (in fact, we take $\varphi_0$ to be a constant map). 

\subsection{Smooth geometry and $L_\infty$ algebras} 

The language we will use to describe the perturbative non-linear $\sigma$-model uses the formalism of $L_\infty$ spaces.
For examples on how this formalism has been used to study related theories see \cite{gg1, wg2, LiLi, GLL, ggw}.  
The goal is to describe the target by Lie algebraic data so that the theory we write down behaves formally like a gauge theory. 
In later sections we will see that such a description lends itself to a rigorous analysis in the BV-BRST formalism, and hence in a description of its behavior under local RG flow. 

Let $T_X$ be the sheaf of smooth vector fields and $\Omega^*_X$ be the sheaf of de Rham forms equipped with the de Rham differential. 
If we want to refer to the sheaf as a graded vector space, that is with the differential turned off, we use the notation $\Omega^\#_X$. 
If $\sE$ is the sheaf of smooth sections of a vector bundle $E$ with flat connection we let $\Omega^*_X(\sE)$ denote its associated de Rham complex.

The main input we need is a description of $X$ in terms of a certain curved $L_\infty$ algebra defined over the de Rham complex $\Omega^*_X$.  
For a definition of curved $L_\infty$ algebras and a proof of the following result we refer the reader to \cite{gg2}. 

\begin{prop}[Lemma 4.12 in \cite{gg2}] Let $X$ be a smooth manifold. Then, there is a curved $L_\infty$ algebra $\fg_X$ over $\Omega^*_X$ such that
\ben
\clie^*(\fg_X) \cong \Omega^*_X(\sJ_X),
\een
where $\sJ_X$ is the sheaf of $\infty$-jets of the trivial bundle on $X$. 
\end{prop}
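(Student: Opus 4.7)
The plan is to exploit the standard correspondence between curved $L_\infty$ structures on a graded $\Omega^\#_X$-module $\fg_X$ and degree $1$, square-zero, continuous derivations of the completed graded-symmetric algebra $\widehat{\Sym}_{\Omega^*_X}(\fg_X^\vee[-1])$. Since the target Chevalley--Eilenberg algebra is $\Omega^*_X(\sJ_X)$, whose underlying graded commutative algebra is non-canonically $\Omega^*_X(\widehat{\Sym}(T_X^\vee))$, the natural candidate is $\fg_X = \Omega^*_X(T_X[-1])$ as a graded $\Omega^\#_X$-module. All of the content is in producing the correct differential.

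First I would choose a torsion-free affine connection $\nabla$ on $TX$. In the spirit of Fedosov, the geodesic exponential map associated with $\nabla$ furnishes a bundle isomorphism
\ben
\sigma_\nabla : \sJ_X \xto{\cong} \widehat{\Sym}(T_X^\vee)
\een
of commutative $C^\infty_X$-algebras, sending the $\infty$-jet of a function at $x$ to its Taylor expansion in $\nabla$-normal coordinates. The jet bundle carries a canonical flat Grothendieck connection $\nabla^{\mathrm{Gr}}$, while $\widehat{\Sym}(T_X^\vee)$ carries only the (typically non-flat) connection $d^\nabla$ induced by $\nabla$; the discrepancy between these under $\sigma_\nabla$ is the essential geometric input and is what will store the $L_\infty$ brackets.

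Transporting $\nabla^{\mathrm{Gr}}$ through $\sigma_\nabla$ yields a flat connection of the form $d^\nabla + \alpha$ on $\widehat{\Sym}(T_X^\vee)$, where $\alpha$ is a $1$-form on $X$ valued in continuous $C^\infty_X$-linear derivations of $\widehat{\Sym}(T_X^\vee)$. Such derivations correspond to sections of $\widehat{\Sym}(T_X^\vee) \mathbin{\hat{\otimes}} T_X$, so decomposing $\alpha$ by polynomial degree in the $T_X^\vee$ factor produces, after appropriate shifts, a sequence of multilinear maps $\ell_k : \Sym^k(T_X[-1]) \to T_X[-1][2-k]$ for $k \geq 0$: the constant-in-fiber component of $\alpha$ furnishes the curving $\ell_0$ (a genuine curvature, not a 2-cocycle), the linear component combines with $d^\nabla$ to give $\ell_1$, and the higher polynomial components give the higher brackets.

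Finally I would check that the flatness condition $(d^\nabla + \alpha)^2 = 0$, equivalently the Maurer--Cartan equation $d^\nabla \alpha + \tfrac{1}{2}[\alpha,\alpha] = 0$ in the DGLA of derivations, unpacks term by term into the generalized Jacobi identities defining a curved $L_\infty$ algebra over $\Omega^*_X$. By construction the resulting Chevalley--Eilenberg algebra is $\Omega^*_X(\widehat{\Sym}(T_X^\vee)) \cong \Omega^*_X(\sJ_X)$ as commutative differential graded $\Omega^*_X$-algebras. The main obstacle is this last matching: the bookkeeping of signs, shifts, and grading conventions in the correspondence between flat connections on $\widehat{\Sym}(T_X^\vee)$ and Maurer--Cartan elements must be carried out with some care, and one should further verify that the resulting $\fg_X$ is independent of $\nabla$ up to canonical $L_\infty$-isomorphism, so that it is a bona fide invariant of the smooth manifold $X$.
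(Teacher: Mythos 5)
Your construction is correct and is essentially the argument behind the cited result (the paper itself defers the proof to \cite{gg2}): identify $\sJ_X$ with $\widehat{\Sym}(T_X^\vee)$ via a choice of connection and its exponential map, transport the flat Grothendieck connection, and read off its Taylor components as the curved $L_\infty$ brackets on $\fg_X = \Omega^\#_X \tensor T_X[-1]$, with flatness encoding the higher Jacobi identities. This is the same Fedosov-style approach used in \cite{gg2} (following Costello), including the final point about independence of the choice of connection up to equivalence.
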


\begin{rmk} As a graded $\Omega^\#_X$-module there is an identification $\fg_X \cong \Omega^\#_X \tensor_{C^\infty_X} T_X[-1]$. Moreover, the curved $L_\infty$ algebra $\fg_X$ is a natural object to consider for any smooth manifold $X$: up to isomorphism it is unique up to a contractible choice. 
\end{rmk} 

We will denote by $\{\ell_k\}$ the family of $L_\infty$ brackets of the curved $L_\infty$ algebra $\fg_X$. 
For example, $\ell_0 \in \Omega^*_X$ is of cohomological degree two and represents the ``curving" \footnote{To fix an $L_\infty$ structure on $\fg_X$ it suffices to choose a connection on $T_X$. In this case, $\ell_0$ is identified with the curvature of this connection.}. 
Similarly, the unary bracket $\ell_1 : \fg_X \to \fg_X$ gives $\fg_X$ the structure of a dg module over the dg ring $\Omega^*_X$. 

A choice of a Riemannian metric $h$ on $X$ determines an isomorphism $h : T_X \cong T_X^*$. 
This isomorphism extends to the $L_\infty$ algebra $\fg_X$ in the following way. 

\begin{lem}\label{lem:silly} Suppose $h$ is a Riemannian metric on $X$. 
Then, there is an isomorphism of $\Omega^*_X$-modules
\ben
h^\flat : \fg_X \cong \fg_X^\vee,
\een
where $\fg_X^\vee$ denotes the $\Omega^*_X$-linear dual. 
The inverse isomorphism is denoted $h^\#$. 
\end{lem}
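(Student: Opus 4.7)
The plan is to build $h^\flat$ by extending the underlying metric isomorphism $h : T_X \to T_X^\vee$ to all of $\fg_X$ via the graded identification noted in the remark preceding the lemma. First I would use $\fg_X \cong \Omega^\#_X \otimes_{C^\infty_X} T_X[-1]$ as graded $\Omega^\#_X$-modules to compute the $\Omega^*_X$-linear dual explicitly. Since $T_X$ is locally free of finite rank over $C^\infty_X$, the computation is formal and yields
\[
\fg_X^\vee \;\cong\; \Omega^\#_X \otimes_{C^\infty_X} T_X^\vee[1],
\]
where $T_X^\vee$ denotes the cotangent sheaf (note the shift flips sign under dualization).

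Next, I would define $h^\flat$ as the $\Omega^\#_X$-linear extension of the $C^\infty_X$-linear metric isomorphism, i.e.\ the map
\[
h^\flat \;=\; \mathrm{id}_{\Omega^\#_X} \otimes h \;:\; \Omega^\#_X \otimes_{C^\infty_X} T_X[-1] \;\to\; \Omega^\#_X \otimes_{C^\infty_X} T_X^\vee[1],
\]
understood with the appropriate degree convention coming from the shifts. Nondegeneracy of $h$ makes this manifestly an isomorphism of graded $\Omega^\#_X$-modules, with inverse $h^\# = \mathrm{id}_{\Omega^\#_X} \otimes h^{-1}$.

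The real content is upgrading this to an isomorphism of $\Omega^*_X$-modules, i.e.\ showing that $h^\flat$ intertwines the unary bracket $\ell_1$ on $\fg_X$ with the dual differential on $\fg_X^\vee$. As noted in the footnote to the preceding proposition, fixing an $L_\infty$ structure on $\fg_X$ is equivalent to picking a connection on $T_X$, and $\fg_X$ is unique only up to a contractible space of such choices. I would therefore select the Levi-Civita connection $\nabla$ of $h$; then the defining identity $\nabla h = 0$ translates immediately into the statement that $h^\flat$ commutes with $\ell_1$ (and hence with the $\Omega^*_X$-action upgraded by the de Rham differential). This metric-compatibility check is the main, and really the only, obstacle in the argument; everything else is bookkeeping of tensor products and degree shifts. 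The global existence of such a $\nabla$ on any Riemannian manifold and the contractibility of connection choices together ensure the construction is independent of auxiliary data.
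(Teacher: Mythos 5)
Your first two paragraphs are exactly the paper's proof: since $h : T_X \cong T_X^\vee$ is $C^\infty_X$-linear, tensoring with $\Omega^\#_X$ over $C^\infty_X$ under the identification $\fg_X \cong \Omega^\#_X \tensor_{C^\infty_X} T_X[-1]$ gives the isomorphism, with inverse $h^\sharp$ induced by $h^{-1}$. Where you go further is the third paragraph: the paper's proof stops at the level of the underlying graded $\Omega^\#_X$-modules and does not address compatibility with $\ell_1$ at all; that issue is only confronted later, in the gauge-fixing section, where the $L_\infty$ structure on $\fg_X$ is taken to be the one determined by the Levi-Civita connection and $[\ell_1, h^\sharp]=0$ is observed to be equivalent to $\nabla h = 0$. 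Your extra step is therefore not wrong---it is the paper's later argument moved forward---and it buys an honestly dg-linear statement if one reads ``isomorphism of $\Omega^*_X$-modules'' as including the differential; the trade-off is that your form of the lemma is tied to the specific (Levi-Civita) representative of $\fg_X$, whereas the paper only needs, and only proves, the graded statement at this point, valid for any choice of connection. One small bookkeeping point in your favor: the dual you compute, $\Omega^\#_X \tensor_{C^\infty_X} T_X^\vee[1]$, is the convention consistent with how $\fg_X^\vee$ is used later in the paper (e.g., the identifications $\fg_X[1] \leftrightarrow \sJ_{T_X}$ and $\fg_X^\vee[-1] \leftrightarrow \sJ_{T^\vee X}$ in the deformation-complex computation), even though the display in the paper's own proof carries the shift $[-1]$.
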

\begin{proof}
Since the isomorphism $h : T_X \cong T_X^\vee$ is of $C^\infty_X$-modules we have an induced isomorphism of graded $\Omega^\#_X$-modules
\ben
h : \Omega^\#_X \tensor_{C^\infty_X} T_X[-1] \cong \Omega^\#_X \tensor_{C^\infty_X} T^\vee_X [-1] .
\een
\end{proof}

Given any vector bundle $E$ on $X$ we can consider the $D_X$-module of smooth $\infty$ jets $\sJ_E$, and hence its de Rham complex $\Omega^*_X$. 
If $\sE$ denotes the sheaf of sections of $E$, there is a natural quasi-isomorphism
\ben
j_\infty : \sE \xto{\simeq} \Omega^*_X(\sJ_E)
\een 
sending a smooth section to its power series expansion. 
For example, in the case of the trivial bundle on $X$ this defines a quasi-isomorphism $C^\infty_X \simeq \clie^*(\fg_X)$. 
In general, we make the following definition. 

\begin{dfn} Let $E$ be a vector bundle on $X$. We define the $\fg_X$ module, which we still denote by $E$, whose Chevalley-Eilenberg complex is the $\clie^*(\fg_X)$ dg module 
\ben
\clie^*(\fg_X ; E) := \Omega^*_X(\sJ_E) .
\een 
\end{dfn} 

In the case of a Riemannian manifold $(X,h)$ we have a metric $h \in \Sym^2(T_X^\vee)$.
In particular, we can consider its $\infty$-jet 
\be\label{hinfty}
h_\infty := j_\infty (h) \in \Omega^*_X\left(\sJ \left(\Sym^2 (T_X^\vee)\right)\right) = \clie^*\left(\fg_X ; \Sym^2(T_X^\vee)\right) .
\ee
By the equality on the right-hand side, we can think of $h_\infty$ as being a functional on $\fg_X$ with values in the sheaf of symmetric $(0,2)$-tensors. 

\subsection{The BV formalism}

To write down a theory on $\Sigma$ in the BV formalism it suffices to prescribe a sheaf of elliptic complexes $\sE$ on $\Sigma$ equipped with a $(-1)$-shifted symplectic form -- the space of fields -- together with a local functional on $\sE$ -- the interaction functional. 
For a more precise definition we refer the reader to Definition 5.4.0.3 in \cite{cg2}. 

The sheaf of fields of the Riemannian $\sigma$-model is the sheaf on $\Sigma$ of $\Omega^\#_X$-modules
\ben 
\sE = C^\infty_X \tensor_\CC \fg_X[1] \oplus \Omega_X^2 \tensor_\CC \fg_X^\vee .
\een
We write the fields as $\varphi \in C^\infty_\Sigma \tensor \fg_X$ and $\psi \in \Omega^2_\Sigma \tensor \fg_X^\vee$. 
The shifted symplectic pairing of degree $(-1)$ is defined by
\ben
\<\varphi, \psi\> = \int_\Sigma \<\varphi, \psi\>_{\fg} \in \Omega^\#_X.
\een

We will write the classical action functional $S : \sE \to \Omega^*_X$ in the form
\ben
S = S_{\rm free} + I
\een
where $S_{\rm free}$ is the free part that is quadratic as a functional on $\sE$.
To define this functional consider the operator
\ben
\partial \Bar{\partial} \tensor h^\flat : C^\infty_\Sigma \tensor \fg_X \to \Omega^2_\Sigma \tensor \fg_X^\vee .
\een 
where $h^\flat$ is interpreted as in Lemma \ref{lem:silly}. 
There are internal differentials on $\fg_X$ (and $\fg_X^\vee$) given by $\ell_1$ (and its linear dual).
We let $Q = \partial \Bar{\partial} \tensor h + \ell_1$. 
Define
\ben
S_{\rm free} (\varphi, \psi) = \int_\Sigma \<\varphi, (\partial \Bar{\partial} \tensor h)(\varphi)\> .
\een
Note that $\<\varphi, (\partial \Bar{\partial} \tensor h) \varphi\> = h(\varphi, \partial \Bar{\partial} \varphi)$ so that this reduces to the familiar kinetic term of the $\sigma$-model.
Further, we remark that the operator $Q$ satisfies $Q^2 = \ell_1^2$ which is, in general, not zero, which reflects the fact that the $L_\infty$ algebra is curved.

To define the interaction we first recall from (\ref{hinfty}) that we can view the metric as determining an element $h_\infty$ which we view as a functional on $\fg_X$. 
We can extend this to a functional on $C^\infty_\Sigma \tensor \fg_X$ simply by multiplying the function component. 
Similarly, we can extend the brackets $\ell_k$ defining the $L_\infty$ structure on $\fg_X$.
The interaction is written as a sum of two terms $I = I^h + I^X$ where
\ben
I^h(\varphi) = \sum_{k \geq 1} \frac{1}{k!} \int_\Sigma h_\infty^{(k)} (\partial \varphi, \Bar{\partial} \varphi; \varphi^{\tensor k})
\een
and 
\ben
I^X (\varphi, \psi) = \sum_{k \geq 2} \frac{1}{(k+1)!} \int_{\Sigma} \<\psi, \ell_k(\varphi^{\tensor k}) )\>_{\fg_X} .
\een
For each $k$, $h_\infty^{(k)}$ denotes the $k$th term in the jet expansion and the `;' separates the tensorial indices from the jet indices.

Thus, the full action functional can be written succinctly as
\ben
S = \int_\Sigma h_\infty(\partial \varphi, \Bar{\partial} \varphi; e^\varphi) + \int_\Sigma \<\psi, \ell(e^\varphi)\>_{\fg_X} .
\een

\subsubsection{Functionals}

To describe functionals on $\sE$ we need to introduce the dual to the space of fields. 
This is defined by
\ben
\sE^\vee = {\rm Hom}_{\Omega^\#_X} (\sE, \Omega^\#_X) .
\een
The space of fields is built from the infinite dimensional space of functions on the Riemann surface $\Sigma$, which has the structure of a topological vector space.
The exact category of topological vector spaces we work with will not be important, but is discussed in depth in the Appendix of \cite{cg1}. 
In the notation above, the space of homomorphisms is taken to be those that are continuous.
With this convention, there is a natural embedding $\sE \hookrightarrow \sE^\vee[-1]$ defined by the shifted symplectic form.

The space of functionals $\sO(\sE)$ is defined by 
\ben
\sO(\sE) = \prod_{n \geq 0} {\rm Hom}_{\Omega^\#_X} (\Sym^n (\sE), \Omega^\#_X) .
\een
There is a subspace of {\em local functionals} $\sO_{\rm loc}(\sE) \subset \sO(\sE)$ consisting of functionals of the form $F(\varphi,\psi) = \int \sL$ where $\sL$ is a Lagrangian density. 
For a precise definition see \cite{cg2}.
The symplectic pairing on the space of fields induces a bracket of degree $+1$ on local functionals that we denote by $\{-,-\}$.

\begin{prop}\label{CME} The function $S = S_{\rm free} + I$ is local. 
Moreover, it satisfies the classical master equation
\ben
Q I + \frac{1}{2} \{I,I\} + F_{\ell_1} = 0,
\een\\
where $F_{\ell_1} \in \Oloc(\sE)$ satisfies $\{F_{\ell_1}, -\} = \ell_1^2$.  
\end{prop}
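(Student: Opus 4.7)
The plan is to verify locality by inspection and to establish the classical master equation by a direct expansion, invoking the curved $L_\infty$-relations of $\fg_X$ together with the Chevalley--Eilenberg closedness of the jet $h_\infty$. Locality is essentially immediate from the definitions: each summand of $I^h$ is of the form $\int_\Sigma h_\infty^{(k)}(\partial\varphi,\Bar{\partial}\varphi;\varphi^{\otimes k})$ with polynomial coefficient $h_\infty^{(k)}$ (the degree-$k$ jet of $h$), and each summand of $I^X$ has Lagrangian density $\langle\psi,\ell_k(\varphi^{\otimes k})\rangle$; both are manifestly polynomial in the jets of the fields. The resulting infinite sums converge in the topology on $\sO_{\rm loc}(\sE)$ by the usual formal power-series bounds coming from jet expansions.

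For the master equation, I would first recognize that $QI + \tfrac12\{I,I\} + F_{\ell_1} = 0$ is equivalent to the assertion that the full BRST operator $Q + \{I,-\}$ squares to zero on $\sO(\sE)$, with $F_{\ell_1}$ precisely cancelling the nonzero $Q^2 = \ell_1^2$ that reflects the curving of $\fg_X$. I then split $\{I,I\} = \{I^h,I^h\} + 2\{I^h,I^X\} + \{I^X,I^X\}$: because $I^h$ is independent of $\psi$ the first term vanishes; because $I^X$ is linear in $\psi$, the last term produces a $\psi$-linear functional whose $\varphi$-content is exactly the contraction of $\sum_{k\geq 2}\tfrac{1}{k!}\ell_k(\varphi^{\otimes k})$ with its own derivative. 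By the curved $L_\infty$ relations $\sum\ell_j\circ\ell_i = 0$, this collapses to the contributions from $\ell_0$ and $\ell_1$, which then match the $\ell_1$-part of $QI^X$ together with the $\ell_1^2$-Hamiltonian $F_{\ell_1}$.

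The most substantial step is the cancellation of the cross term $\{I^h,I^X\}$ against the remaining pieces $(\partial\Bar{\partial}\otimes h)I^h$ and $\ell_1 I^h$ of $QI^h$. After integration by parts on $\Sigma$ to move the $\partial\Bar{\partial}$ onto the appropriate factor, the residual obstruction reduces to the identity $d_{CE}(h_\infty) = 0$ in $\clie^*(\fg_X;\Sym^2 T_X^\vee)$, which holds because $h_\infty = j_\infty(h)$ lies in the image of the quasi-isomorphism $j_\infty$ and therefore in the CE-cocycles. I expect this CE-closedness step to be the conceptual heart of the argument, since it is what intertwines the metric-dependent kinetic sector with the $L_\infty$-gauge sector. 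The remaining difficulty is bookkeeping: tracking signs, reconciling the combinatorial coefficients $1/(k+1)!$ in $I^X$ against $1/k!$ elsewhere, and properly handling the Leibniz-rule interaction between the de Rham operators $\partial,\Bar{\partial}$ on $\Sigma$ and the symmetric-tensor jet expansion on $X$.
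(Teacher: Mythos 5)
Your proposal is correct and takes essentially the same route as the paper's proof: the bracket is split into the metric sector (which self-brackets to zero since it is independent of $\psi$), the Lie-theoretic sector (where $\{I^X,-\}$ together with $\ell_1$ assembles the Chevalley--Eilenberg differential of the square-zero extension $\fg_X \ltimes \fg_X^\vee[-1]$, whose curved relations leave exactly the $\ell_1^2$ term absorbed by $F_{\ell_1}$), and the cross term, which vanishes precisely because $h_\infty = j_\infty(h)$ is the jet expansion of a globally defined tensor and hence CE-closed --- the same key point the paper invokes. The only slip is bookkeeping: since $I^h$ has no $\psi$-slots, the operator $\partial\Bar{\partial}\otimes h$ acts trivially on $I^h$, and the kinetic contribution you need actually enters through its action on $I^X$ (equivalently through $\{S_{\rm free}, I^X\}$), which does not affect the substance of the argument.
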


\begin{proof} 
Due to the nature of the symplectic pairing it is immediate that $\{S_h, S_h\} = 0$. 
To see that $\{S_X,S_h\} = 0$ we observe that $S_h$ is defined using the jet expansion of a globally defined tensor, namely the metric. 
Next, we note that the operator $\{S_X, -\}$ is precisely the Chevalley-Eilenberg differential for the $L_\infty$ algebra $C^\infty(\Sigma) \tensor (\fg_X \ltimes {\rm vol}_\Sigma \cdot \fg_X^\vee[-1])$ with $L_\infty$ structure given by tensoring the algebra of functions with the $L_\infty$ structure on the square-zero extension $\fg_X \ltimes \fg_X^\vee[-1]$. 
Thus $\{S_X,S_X\}(\varphi,\psi) = \<\psi, \ell_1^2(\varphi)\>$ is equivalent to the closure of this curved $L_\infty$ structure. 
Since $\ell_1^2$ represents the curvature of the connection, we are done.
\end{proof}

An immediate corollary is that the operator $\{S,-\} = Q + \{I,-\}$ defines a differential on the space of local functionals (considered as an $\Omega^\sharp_X$-module). 
\begin{dfn}
The obstruction deformation complex is defined by
\ben
\Def = \left(\Oloc(\sE), \{S,-\}\right) = \left(\Oloc(\sE), Q + \{I,-\}\right) .
\een
\end{dfn}

\subsection{Calculation of the obstruction deformation complex}\label{sect:obsdef}

The definition of the obstruction deformation complex is a very natural one from the point of view of the classical field theory: natural deformations of the theory are given by deforming the local action functional. 
A local action functional is specified by two pieces of data: a functional on the space of fields depending only on the infinitesimal jet data of the fields (modulo constants), and a density on the manifold. 
A model for this space of local functionals is given by the following tensor product:
\ben
\Oloc(\sE) = {\rm Dens}_{M} \tensor_{D_M} \sO_{red}(\sJ\sE)
\een
where $D_M$ denotes the sheaf of differential operators on $M$. 
To make sense of this tensor product we recall that ${\rm Dens}_M$ has a natural right $D$-module structure and the infinite jet bundle has a canonical flat connection (and hence a left $D$-module structure).

A priori, the space of all deformations is very large and not very manageable. 
In this section, we restrict ourselves to studying deformations that respect certain symmetries of the $\sigma$-model and compute such deformations in terms of geometric quantities on the target Riemannian manifold. 

We consider the theory with source manifold $\Sigma = \CC$. 
In this case, we see that the theory is acted upon by the group of affine linear transformations ${\rm Aff} (\CC) = \CC^\times \ltimes \CC$ given by rotations and translations. 
This is the symmetry that we wish to impose on the deformations of the model. 
The subcomplex of deformations that are invariant for this symmetry is denoted $(\Def)^{{\rm Aff}(\CC)}$. 

\begin{rmk}
By $(\Def)^{\Aff(\CC)}$ we mean the strict (underived) fixed points of the affine group.
The deformation complex $\Def$ is the global sections of a $\Aff(\CC)$-equivariant vector bundle on $\CC$. 
This $\Aff(\CC)$ representation is induced from a finite dimensional $\CC^\times$-representation.
Since $\CC^\times$ is reductive, its derived and underived fixed points agree, so this operation is reasonable in our context.
\end{rmk}

Already, for the translation invariant subcomplex there is a vast simplification of the space of local functionals. 
Indeed, Lemma 6.7.1 of \cite{Cos1} implies that
\ben
\Oloc(\sE)^{\CC} \simeq \CC \cdot \d^2 z \tensor^{\mathbb{L}}_{\CC[\partial_{z},\partial_{\zbar}]} \sO_{red}(\sJ_0 \sE)
\een
where $\sJ_0 \sE$ denotes the fiber of the infinite jet bundle at $0 \in \CC$. 
This statement is a consequence of the fact that a translation invariant local functional is completely determined by its behavior in a neighborhood of the origin. 

Before stating the main result of this section, we will set up some notation. 
Consider the following sheaf of dg vector spaces on the Riemannian manifold $(X,h)$ 
\ben
T_X \xto{\d} \Sym^2(T_X^\vee) [-1]
\een
where the differential is defined by $\d(X) = L_X h$ where $L$ is the Lie derivative. 
In fact, the Lie bracket on vector fields together with the natural action of vector fields on symmetric tensors endows this sheaf with the structure of a sheaf of dg Lie algebras. 
This sheaf of dg Lie algebras describes the formal neighborhood of $(X,h)$ in the moduli space of all Riemannian structures on $X$. 
We will denote this sheaf by $\Riem(X,h)$. 

\begin{prop}\label{prop:obsdef} Consider the classical BV theory $\sE (\CC)$ placed on the Riemann surface $\Sigma = \CC$. 
There is an equivalence of $\Omega^*_X$-modules  
\ben
(\Def)^{{\rm Aff}(\CC)} \simeq \Omega^*_X(\sJ_F)
\een
where $F$ is the sheaf of dg vector spaces on $X$ given by
\ben
F = \Riem(X,h) [1] \oplus \Omega^3_{cl, X}
\een
where $\Omega^3_{cl,X}$ is the sheaf of closed three-forms. 
\end{prop}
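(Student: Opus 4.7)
The strategy is to reduce the computation to an algebraic one on formal jets at the origin, use the affine symmetry to pass to a weight-zero subcomplex, and identify the resulting complex as the Chevalley-Eilenberg cochains $\clie^*(\fg_X; F)$. This follows the template of \cite{gg1, wg2, GLL}.

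Starting from the translation-invariant reduction in Lemma 6.7.1 of \cite{Cos1}, I would resolve the derived tensor product
\ben
\Oloc(\sE)^{\CC} \simeq \CC \cdot \d^2 z \tensor^{\mathbb{L}}_{\CC[\partial_z, \partial_{\zbar}]} \sO_{red}(\sJ_0 \sE)
\een
by the Koszul complex $\CC[\partial_z, \partial_{\zbar}] \tensor \Lambda^\bullet(\xi, \bar \xi)$ with $\xi, \bar \xi$ in cohomological degree $-1$ and differentials $\xi \mapsto \partial_z$, $\bar \xi \mapsto \partial_{\zbar}$. Taking $\CC^\times$-invariants is exact since $\CC^\times$ is reductive, and picks out the weight-zero subspace, providing a manageable model for $(\Def)^{\Aff(\CC)}$.

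I would then decompose the $\Aff(\CC)$-invariant reduced functionals on $\sJ_0 \sE = \CC[[z,\zbar]] \tensor \fg_X[1] \oplus \CC[[z,\zbar]] \d z \, \d \zbar \tensor \fg_X^\vee$ according to polynomial degree in $\psi$. In the $\psi$-linear sector, after applying the pairing $h^\flat$ of Lemma \ref{lem:silly}, the weight-zero condition ($z, \zbar$ having weight $+1$, $\partial_z, \partial_{\zbar}$ weight $-1$, and $\d^2 z$ weight $+2$) leaves only two classes of invariants: terms pairing one $\partial_z \varphi$ with one $\partial_{\zbar} \varphi$ (whose coefficient is an element of $\Sym^2(T_X^\vee)$), and terms pairing $\psi$ linearly with $\varphi$ (whose coefficient is a vector field). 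The induced BV differential $\{S,-\}$ acts through the jet expansion of the metric in $I^h$ and, after unwinding, is identified with the Lie derivative $X \mapsto L_X h$, giving the $\Riem(X,h)[1]$ summand (the overall shift arising from the shift placing $\varphi$ in $\fg_X[1]$).

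In the $\psi$-independent sector, weight-zero affine invariance forces at least two derivatives on the $\varphi$ legs, so the lowest nontrivial contribution takes the form $\int_\Sigma \omega(\varphi, \partial \varphi, \bar \partial \varphi) \d^2 z$ with $\omega \in \Omega^3_X$. Integration by parts kills $\d \Omega^2$, and the closure condition $\d \omega = 0$ is imposed by matching higher-order terms in $\{S,-\}$ that mix with quartic and higher $\varphi$-interactions. Assembling all orders, the $\psi$-independent contribution becomes jets of the truncated de Rham complex $\Omega^0_X \to \Omega^1_X \to \Omega^2_X \to \Omega^3_{cl,X}$, which is quasi-isomorphic to $\Omega^3_{cl,X}$ placed in the appropriate degree. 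The two summands do not mix under $\{S,-\}$ since both $Q$ and the $\ell_k$-derived interactions preserve $\psi$-degree.

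The main obstacle is the bookkeeping: carrying cohomological degrees and $\CC^\times$-weights through the Koszul resolution, the volume factor $\d^2 z$, and the various shifts between $\fg_X$, $\fg_X^\vee$, and the fields. The most substantive step is verifying that the differential induced by $\{S, -\}$ on the $\Riem(X,h)[1]$ summand is the Lie derivative $L_- h$; this requires carefully linearizing the jet interaction $h_\infty(\partial \varphi, \bar\partial \varphi; e^\varphi)$ and observing that the $\fg_X$-Chevalley-Eilenberg differential of $h_\infty$ encodes the covariant derivative of the metric, so that the pairing with a vector-field-valued cochain produces precisely $L_X h$.
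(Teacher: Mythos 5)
Your skeleton — Costello's Lemma 6.7.1, a Koszul resolution of $\CC\cdot \d^2z$, passing to the $\CC^\times$ weight-zero subcomplex, then exhibiting a Lie-derivative strand and a de Rham strand — is the same as the paper's, but two structural claims in your write-up are wrong, and they sit exactly where the content of the computation lies. First, the differential $\{S,-\}$ does \emph{not} preserve $\psi$-degree: the $\partial\bar{\partial}\tensor h^\flat$ part of $Q$ and the bracket $\{S_h,-\}$ both convert a $\psi$-leg into $\varphi$-legs — indeed this is how you yourself produce $V\mapsto L_V h$ from the $\psi$-linear vertex — so your justification that the two summands ``do not mix since $Q$ and the $\ell_k$-interactions preserve $\psi$-degree'' is inconsistent with your own argument. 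In the paper this non-preservation is the key cancellation: $\{S_h,-\}$ carries the $\psi$-linear summand $\clie^*(\fg_X;\fg_X[1])$ isomorphically (via $h$) onto the $\partial\bar{\partial}\varphi$-coefficient summand $\clie^*(\fg_X;\fg_X^\vee[-1])$, killing both; the $T_X$ in $\Riem(X,h)$ then comes from the one-derivative invariants $\int g(\varphi)\,\partial\varphi\,\d^2z$ and $\int g(\varphi)\,\bar{\partial}\varphi\,\d^2z$ (identified with vector fields via $h$), and the splitting into the $\Riem$ and de Rham strands comes from decomposing the two-derivative coefficient — which lies in the full tensor square $T_X^\vee\tensor T_X^\vee$, not in $\Sym^2(T_X^\vee)$ as you assert — into its symmetric part (receiving the Lie derivative) and its antisymmetric part (receiving the de Rham differential). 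By symmetrizing at the outset you discard precisely the $B$-field component that generates the second summand, and your invariant list also omits the zero-derivative and $\partial\bar{\partial}\varphi$ terms that occupy the outer degrees and enter these cancellations.

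Second, the de Rham strand itself: there is no affine-invariant local functional with a $3$-form coefficient (two-derivative invariants carry two tensor indices), so your ``lowest contribution $\int\omega(\varphi,\partial\varphi,\bar{\partial}\varphi)\,\d^2z$ with $\omega\in\Omega^3_X$'' is not a cochain of the complex, and the complex you assemble, $\Omega^0_X\to\Omega^1_X\to\Omega^2_X\to\Omega^3_{cl,X}$, is quasi-isomorphic to the constants, not to $\Omega^3_{cl,X}$. The correct statement, and the one the paper establishes, is that the antisymmetric strand is the \emph{reduced} truncated complex $\sO_{red}\to\Omega^1_X\to\Omega^2_X$ in degrees $-2,-1,0$ (the $\Omega^0$ and $\Omega^1$ pieces being exactly the zero- and one-derivative invariants your ``at least two derivatives'' claim ruled out), whose only cohomology is $\Omega^2_X/\d\Omega^1_X\cong\Omega^3_{cl,X}$ in degree $0$, i.e.\ $B\mapsto \d B$ produces the $H$-flux. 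With these corrections your outline would reproduce the paper's proof; as written, the classification of invariants, the mechanism of the direct-sum splitting, and the identification of the $\Omega^3_{cl,X}$ summand each contain a genuine error.
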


As with the de Rham complex of any infinite jet bundle we observe immediately that there is an equivalence of $C^\infty_X$-modules $(\Def)^{{\rm Aff}(\CC)} \simeq F$. 
In fact, we can describe explicitly a quasi-isomorphism 
\ben
\mathfrak{R} : F \to (\Def)^{{\rm Aff}(\CC)}
\een
as follows. 
Let $j_\infty : F \to \Omega^*_X(\sJ_F)$ denote the quasi-isomorphism given by taking the $\infty$-jet of a section. 
For instance, if $\alpha$ is a section in $\Sym^2(T_X)$ we have $j_\infty(\alpha) \in \Omega^*(X , \sJ \Sym^2(T_X))$. 
Moreover, we know that there is an isomorphism of $\Omega^*_X$-modules $\clie^*(\fg_X ; \Sym^2(T_X^\vee))$. 
We obtain a local functional via the formula
\ben
\mathfrak{R}(\alpha) (\varphi, \psi) = \int j_\infty (\alpha)(\partial \varphi, \overline{\partial} \varphi; e^{\varphi}) \d^2 z .
\een

Geometrically speaking, these deformations and symmetries are apparent from the point of view of the $\sigma$-model.
Indeed, we see all deformations of the Riemannian structure on the target manifold.
The formal neighborhood of $(X,h)$ in the moduli of Riemannian structures is described precisely by the dg Lie algebra ${\rm Riem}(X,h)$. 

For the other component of the deformation complex, note that the degree zero cohomology of the sheaf $\Omega^{3}_{cl,X}$ can be thought of as the collection of H-fluxes. The deformation of the action is realized as follows: every closed 3-form $H$ is locally exact, so choose a 2-form $B$ with $dB=H$. We then have the following functional on maps $\varphi : \Sigma \to X$:
\[
S_H (\varphi) = \int_\Sigma \varphi^\ast B.
\]

\subsubsection{}

It is convenient to introduce the notation $\sL = \sE[-1]$ for the $L_\infty$ algebra obtained by shifting the space of fields up by one. 
In this case, the piece of the differential $\{S_X,-\}$ determines an identification
\ben
\left(\sO_{red}(\sJ_0 \sE), \{S_X,-\}\right) = \cred^*(\sJ_0 \sL) .
\een
To handle the translation invariant piece of the deformation complex, Lemma 6.7.1 of \cite{Cos1}, referenced above, says we must compute the derived tensor product $\CC \cdot \d^2 z \tensor^\LL_{\CC[\partial_z,\partial_{\zbar}]} \cred^*(\sJ_0 \sL)$. 
To compute this we choose a free resolution of the trivial $\CC[\partial_z, \partial_{\zbar}]$ module $\CC \cdot \d^2 z$ given by the complex
\ben
M = \left(\CC[\epsilon, \Bar{\epsilon}, \partial_z, \partial_{\zbar}], \d = \frac{\partial}{\partial \epsilon} \partial_z + \frac{\partial}{\partial \Bar{\epsilon}} \partial_{\zbar}\right) 
\een
where $\epsilon,\Bar{\epsilon}$ have degree $-1$. 
To remind us that we are resolving the trivial module $\CC \cdot \d^2 z$ we will write $M \cdot \d^2 z$. 

We see that the derived tensor product computing the invariant deformation complex reduces to
\be\label{cplx1}
\xymatrix{
\ul{-2} & \ul{-1} & \ul{0} \\
(\epsilon \Bar{\epsilon} \cdot \d^2 z) \cdot \cred^*(\sJ_0 \sL) \ar[r]^-{\d_1} & (\epsilon\cdot \d^2 z) \cdot \cred^*(\sJ_0 \sL) \oplus (\Bar{\epsilon} \cdot \d^2 z)\cdot \cred^*(\sJ_0 \sL) \ar[r]^-{\d_2} & \d^2 z \cdot \cred^*(\sJ_0 \sL) 
}
\ee
where the differentials $\d_1, \d_2$ are determined by the action of $\CC[\partial_z, \partial_{\zbar}]$ on $\sJ_0 \sE$. 
We have omitted the contribution of the part of the differential given by $\{S_h,-\}$. 

Now, we see that the $\infty$-jets at $0 \in \CC$ of the $L_\infty$ algebra $\sL$ are of the form
\ben
J_0 \sL = \fg_X \ltimes \d^2 y \cdot \fg_X^\vee [-1] [y,\Bar{y}],
\een
where we have denoted the formal jet variables by $y$ and $\Bar{y}$.

We have yet to account for invariance for the other piece of the affine group $\CC^\times$ given by scaling. 
Let $\lambda \in \CC^\times$.
The volume element $\d^2 z$ scales as $\lambda \cdot \d^2 z = \lambda \Bar{\lambda} \d^2 z$.
On the variables $\epsilon, \Bar{\epsilon}$ scaling acts as $\lambda \cdot \epsilon = \lambda \epsilon$ and $\lambda \Bar{\epsilon} = \Bar{\lambda} \Bar{\epsilon}$.
Similarly, $\CC^\times$ acts on the jet complex via $\lambda \cdot y = \lambda y$ and $\lambda \cdot \Bar{y} = \Bar{\lambda} \Bar{y}$. 
Thus, in the complex (\ref{cplx1}) above we see that the $\CC^\times$-invariant subcomplex will be given by summands labelled by the following $\CC^\times$-invariant elements:
\ben
\epsilon \Bar{\epsilon} \d^2 z \;\; , \;\; \Bar{\epsilon} \d^2 z (y^\vee) \;\; , \;\; \epsilon \d^2 z (\Bar{y}^\vee) \;\; , \;\; \d^2 z (y^\vee \Bar{y}^\vee) \;\; , \;\; \d^2 z (\d^2 y)^\vee .
\een
We read these terms off as follows:
\begin{itemize}
\item[(1)] the term labeled by $\epsilon \Bar{\epsilon} \d^2 z$ contributes the summand $\cred^*(\fg_X)$ in degree $-2$ (remembering the overall shift by $-2$);
\item[(2)] the term labeled by $\Bar{\epsilon} \d^2 z (y^\vee)$ contributes the summand $\clie^*(\fg_X ; \fg_X^\vee[-1])$ in degree $-1$;
\item[(3)] the term labeled by $ \epsilon \d^2 z (\Bar{y}^\vee)$ contributes the summand $\clie^*(\fg_X ; \fg_X^\vee[-1])$ in degree $-1$;
\item[(4)] the term labeled by $\d^2 z (y^\vee \Bar{y}^\vee)$ contributes the summand $\clie^*(\fg_X ; \fg_X^\vee [-1] \tensor \fg_X^\vee [-1] \oplus \fg_X^\vee[-1])$ in degree $0$;
\item[(5)] the term labeled by $\d^2 z (\d^2 y)^\vee$ contributes the summand $\clie^*(\fg_X ; \fg_X[1])$ in degree $-1$.
\end{itemize}

In all, we see that the $\CC^\times$ invariant subcomplex of (\ref{cplx2}) is of the form
\be\label{cplx2}
\xymatrix{
\ul{-2} & \ul{-1} & \ul{0} \\
\cred^*(\fg_X) \ar[r]^-{\d_1} \ar[dr]^-{\d_1} & \clie^*(\fg_X ; \fg_X^\vee[-1]) \ar[r]^-{\d_2} & \clie^*(\fg_X ; \fg_X^\vee [-1] \tensor \fg_X^\vee [-1]) \oplus \clie^*(\fg_X ; \fg_X^\vee[-1]) \\
& \clie^*(\fg_X ; \fg_X^\vee[-1]) \ar[ur]^-{\d_2} & \clie^*(\fg_X ; \fg_X[1])  \ar@{.>}[]!<1ex,-2ex>;[u]!<-6ex,1ex> \ar@{.>}[]!<1ex, 2ex>;[u]!<9ex,1ex>
}
\ee

We have seen that the differential of the deformation complex is of the form $\{S_h + S_X, -\} = \{S_h,-\} + \{S_X,-\}$.
Moreover, in the proof of Proposition \ref{CME} we showed that $\{S_h, -\}$ and $\{S_X, -\}$ commute and are each of square zero. 
The differential $\{S_X,-\}$ has the effect of turning on the Chevalley-Eilenberg differential for $\fg_X$ (and its modules). 
We now study what $\{S_h,-\}$ does to the complex above. 

Note that $S_h$ is a functional purely of the $\varphi$ variables. So, its Poisson bracket $\{S_h,-\}$ only acts non-trivially on the deformation complex containing functionals of the $\psi$ field. 
The only term above involving the $\psi$ field is term 5 which is given by $\clie^*(\fg_X ; \fg_X [1]) \cong \Omega^*_X(\sJ_{T_X})$.
The differential $\{S_h,-\}$ maps this term to the the factor $\clie^*(\fg_X ; \fg_X^\vee[-1]) \cong \Omega^*_X(\sJ_{T^\vee X})$ in term 4. 
In fact, this is nothing but the isomorphism $\Omega^*_X(\sJ_{T_X}) \cong \Omega^*_X(\sJ_{T^\vee X})$ determined by the metric $h$. 
Thus, the part of the differential $\{S_h,-\}$ has the effect of killing term 5 and the second factor of term 4 above.
We have included $\{S_h,-\}$ using the dotted arrow in the complex (\ref{cplx2}). 

Before stating the following lemma, we recall the following basic construction. 
The metric $h$ on $X$ induces the Levi-Civita connection on the tangent bundle $TX$ and also a covariant derivative on the cotangent bundle $T^*X$. 
At the level of sheaves, the covariant derivative is of the form
\ben
\nabla_h : \Omega^1_X \to \Omega^1_X \tensor \Omega^1_X .
\een 

\begin{lem}
There is a quasi-isomorphism of (\ref{cplx2}) (including the dotted arrows) and the complex
\be\label{cplx3}
\xymatrix{
\ul{-2} & \ul{-1} & \ul{0} \\
\sJ \sO_{red}(X) \ar[r]^-{\d_1} \ar[dr]^-{\d_1} & \sJ \Omega^1_X \ar[r]^-{\d_2'} & \sJ (\Omega^1_X \tensor \Omega^1_X)  \\
& \sJ \Omega^1_X \ar[ur]^-{\d_2'} & 
}
\ee
where $\d_2'$ is equal to the infinite jet expansion of the map
\ben
(\nabla_h, - \sigma \nabla_h) : \Omega^1_X \oplus \Omega^1_X \to \Omega^1_X \tensor \Omega^1_X
\een
sending $(\omega_1,\omega_2)$ to $\nabla_h \omega_1 - \sigma \nabla_h \omega_2$, with $\sigma$ the permutation operator sending $\omega_1 \tensor \omega_2 \mapsto \omega_2 \tensor \omega_1$. 
\end{lem}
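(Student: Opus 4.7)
The plan is to realize (\ref{cplx2}) as jets of natural sheaves on $X$ via a Gelfand--Kazhdan type equivalence, exhibit the dotted $\{S_h,-\}$ piece as an acyclic subcomplex, and then identify the induced differentials on the quotient with those of (\ref{cplx3}). Since $\clie^*(\fg_X; V) \simeq \Omega^*_X(\sJ_E)$ whenever $V$ is the $\fg_X$-module associated to a vector bundle $E$, and since $\fg_X \cong T_X[-1]$ as a graded $\Omega^\#_X$-module, each term of (\ref{cplx2}) identifies canonically with $\infty$-jets of a natural sheaf:
\ben
\cred^*(\fg_X) \simeq \sJ \sO_{red}(X), \quad \clie^*(\fg_X;\fg_X^\vee[-1]) \simeq \sJ \Omega^1_X, \quad \clie^*(\fg_X;\fg_X[1]) \simeq \sJ T_X,
\een
and $\clie^*(\fg_X;\fg_X^\vee[-1] \tensor \fg_X^\vee[-1]) \simeq \sJ(\Omega^1_X \tensor \Omega^1_X)$.

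By Lemma \ref{lem:silly}, the dotted arrow $\{S_h,-\}$ is the jet expansion of the metric isomorphism $h^\flat: T_X \to T_X^\vee$, realized as an isomorphism from $\clie^*(\fg_X; \fg_X[1])$ in degree $-1$ onto the second $\clie^*(\fg_X; \fg_X^\vee[-1])$-factor of term 4 in degree $0$. This identifies an acyclic two-term subcomplex of (\ref{cplx2}), and quotienting by it yields a quasi-isomorphic complex whose graded pieces match exactly those of (\ref{cplx3}).

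It remains to identify the induced differentials on the quotient. The $\d_1$ arrows arise from combining the resolution derivatives $\partial_z, \partial_{\bar z}$ in (\ref{cplx1}) with the Chevalley--Eilenberg differential $\{S_X,-\}$; under the Gelfand--Kazhdan identification this becomes the jet expansion of the de Rham differential $d: \sO_{red}(X) \to \Omega^1_X$ on both degree $-1$ summands, as required. For $\d_2'$ the essential input is that when $\fg_X$ is built using the Levi--Civita connection (per the recipe of \cite{gg2}), the combination of $\{S_X,-\}$ with the remaining $\d_2$-direction corresponds, under Gelfand--Kazhdan, to the jet expansion of $\nabla_h: \Omega^1_X \to \Omega^1_X \tensor \Omega^1_X$.

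The hardest step will be pinning down the swap $\sigma$ and the relative sign in $(\omega_1, \omega_2) \mapsto \nabla_h \omega_1 - \sigma \nabla_h \omega_2$. Term 2 carries the label $\bar\epsilon \d^2 z (y^\vee)$ while term 3 carries $\epsilon \d^2 z (\bar y^\vee)$, so the resolution differential $\d = (\partial/\partial \epsilon)\partial_z + (\partial/\partial\bar\epsilon)\partial_{\bar z}$ appends a $\bar y$-derivative in the first case and a $y$-derivative in the second. Under the fixed identification $y^\vee \bar y^\vee \simeq \Omega^1_X \tensor \Omega^1_X$, the two $\nabla_h$-contributions therefore land in the two tensor slots in opposite orders, producing the transposition $\sigma$ on one of them; the relative minus sign is the Koszul sign coming from the antisymmetry of $\epsilon, \bar\epsilon$ inside $\d$. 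Unwinding this bookkeeping, together with the identifications above, gives the stated formula and completes the quasi-isomorphism.
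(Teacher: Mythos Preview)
The paper does not supply a separate proof block for this lemma; the argument is the paragraph immediately preceding it, which identifies the dotted $\{S_h,-\}$ arrow with the jet of the metric isomorphism $h^\flat:T_X\to T_X^\vee$ and observes that this kills term~5 together with the spare $\clie^*(\fg_X;\fg_X^\vee[-1])$ summand of term~4. Your proposal follows exactly this route and fills in the bookkeeping the paper leaves implicit: the identification of each Chevalley--Eilenberg term with jets of the corresponding natural sheaf on $X$, and the origin of the swap $\sigma$ and the relative sign in $\d_2'$ from the $\epsilon,\bar\epsilon$ resolution. Your invocation of the Levi--Civita choice to pin down $\d_2'$ as $j_\infty(\nabla_h)$ is likewise at the same level of detail as the paper. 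The proposal is correct and essentially the paper's own argument, made explicit.
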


Note that we can make the following change of coordinates in the degree $-1$ part of the complex (\ref{cplx3}) sending $(\omega_1,\omega_2) \mapsto (\omega_1+\omega_2,\omega_1-\omega_2)$ so that $\d_2'$ decouples into a symmetric and antisymmetric summand $\d_2' = {\rm Sym}^2 (\d_2') - \wedge^2 \d_2'$, where
\ben
\Sym^2 (\d_2') : \sJ \Omega^1_X \to \sJ \Sym^2 \Omega^1_X
\een
and
\ben
\wedge^2(\d_2') : \sJ \Omega^1_X \to \sJ \Omega^2_X .
\een 
We can check locally that former differential is nothing but the Lie derivative and the former the de Rham differential. (This is classical, see for instance Proposition 2.54 of \cite{GHL}.)

We are left with the following complex 
\ben
\xymatrix{
\sJ \sO_X^{red} \ar[r]^-{\d_{dR}} & \sJ \Omega^1_X \ar[r]^-{\d_{dR}} & \sJ \Omega^2_X \\
& \sJ T_X \ar[r]^-{L_{(-)} h} & \sJ \Sym^2 \Omega^1_X .
 }
 \een

\section{One-loop quantization}

In this section we construct the one-loop quantization of the classical BV theory introduced above. 
In the BV formalism this means that we construct a solution to the quantum master equation modulo $\hbar^2$. 
We proceed in the formalism developed in \cite{Cos1, cg1, cg2} which combines the effective functional approach to the path integral with the BV-BRST formalism for analyzing gauge symmetries. 
There are two steps to quantization.
First, we must write down an effective family of functionals satisfying the (homotopical) renormalization group equations. 
We utilize a regularization technique based on heat kernels to construct this effective family. 
Next, we must consider the {\em quantum master equation} (QME) for this effective family.
This can be viewed as studying quantizations that respect the natural gauge symmetries of the theory. 

We will only produce a quantization for the $\sigma$-model with source given by a two-dimensional disk. 

\subsection{Gauge fixing}
Before writing down the heat kernel and propagator we must choose a gauge fixing operator. 
For us, this is the degree $(-1)$ operator $Q^{GF}$ on the complex of fields $\sE$ given by
\ben
\xymatrix{
 \ul{-1} & \ul{0} \\
C^\infty_\Sigma \tensor \fg_X  & \ar[l]_-{\star \tensor h^\sharp} \Omega^2_\Sigma \tensor \fg_X^\vee 
}
\een 
where $\star$ is the Hodge star operator on $\Sigma$ and $h^\sharp$ is the inverse to $h^\flat$. 

The generalized Laplacian for this gauge fixing operator is
\begin{align*}
[Q, Q^{GF}] & = [\partial \Bar{\partial} \tensor h^\flat + 1 \tensor \ell_1, \star \tensor h^\sharp] \\
		   & = {\rm D} \tensor 1 + 1 \tensor [\ell_1, h^\sharp] .
\end{align*}
The operator ${\rm D}$ is the Laplace-Beltrami operator acting on functions and two-forms on $\Sigma$. 
Now, we have the freedom to choose the $L_\infty$ structure on $X$ to be compatible with the metric $h$. 
Indeed, we can choose the $L_\infty$ structure coming from the Levi-Civita connection $\nabla$ associated with $h$.
Once we do this, the operator $\ell_1$ is identified with $\nabla$.
Moreover, $[\ell_1, h^\sharp] = 0$ is equivalent to the condition $\nabla h = 0$. 
Thus, the generalized Laplacian is simply ${\rm D} \tensor 1$. 

\subsection{Heat kernel and the propagator}\label{sec:prop}
For $L > 0$ the heat kernel $K_L \in \sE \tensor \sE$ splits into three pieces. 
First, there is the scalar heat kernel for the Laplacian acting on functions on $\Sigma = \CC$ which has the explicit form
\ben
k_L(x,y) = \frac{1}{4 \pi L} e^{-|x-y|^2 / 4L} .
\een 
Second, there is the term $\d^2 x \tensor 1 - 1 \tensor \d^2 y$  which accounts for the natural pairing between functions and two-forms. 
Finally, there is the element $\id_{\fg_X} + \id_{\fg_X^\vee} \in \fg_X \tensor \fg_X^\vee \oplus \fg_X^\vee \tensor \fg_X$ accounting for the natural pairing between $\fg_X$ and its dual. 
In all, we write the heat kernel as
\ben
K_L(x,y) =  k_L(x,y) (\d^2 x \tensor 1 - 1 \tensor \d^2 y) (\id_{\fg_X} + \id_{\fg_X^\vee}) .
\een 
The defining property of the heat kernel is that it satisfies the relation
\ben
\<K_L(x,y), \Phi(y) \> = \left(e^{-L [Q,Q^{GF}]} \Phi\right) (x) 
\een
for any $\Phi \in \sE$. 

The propagator is defined by
\begin{align*}
P_{\epsilon \to L} (x,y) & = \int_{t = \epsilon}^L (Q^{GF} \tensor 1) K_t(x,y) \d t \\
& = \int_{t = \epsilon}^L \frac{1}{4 \pi t} e^{-|x-y|^2 / 4t} \d t \tensor h.
\end{align*}

As we discuss in Appendix \ref{app:app}, the propagator gives rise to a homotopy RG flow operator $W(P_{\epsilon \to L},-)$ and the corresponding homotopy Renormalization Group Equation (hRGE).
As we will see in Section \ref{sec:betaoverview}, the limit as $\epsilon \to 0$ of the hRG flow $W(P_{\epsilon \to L}, I)$ is not well-defined in general, and we must introduce counterterms as in \cite{Cos1}. In {\it loc. cit.}, it is shown that counterterms exist (given a choice of renormalization scheme), they are local functionals, and  they only depend on the UV parameter $\epsilon$.  Identification of the one-loop counterterms will play a critical role in the computation of the $\beta$-function for our theory; this is explained in the subsequent sections.

We can now define our na\"{i}ve one-loop quantization. 

\begin{dfn} 
The {\em na\"{i}ve} one-loop quantization of the two-dimensional $\sigma$-model of maps $\CC \to X$ is 
\ben
I[L] := \lim_{\epsilon \to 0} \sum_{\Gamma} W(P_{\epsilon \to L} , I + I^{CT}(\epsilon))
\een
where the sum is over all genus zero and one graphs $\Gamma$, and $I^{CT} (\epsilon)$ denotes a specific choice of counterterms dependent on fixing a renormalization scheme. 
\end{dfn}

This one-loop quantization indeed defines a {\it pre-theory} (modulo $\hbar^2$) in the parlance of \cite{Cos1}, i.e., the family of functionals $\{I[L]\}$ satisfies the hRGE (modulo $\hbar^2$) and is asymptotically local in the $L \to 0$ limit.
Further, $\{I[L]\}$ is a {\it pre-quantization} (to one loop) of the $\sigma$-model, as
\[
\lim_{L \to 0} I[L] \equiv I^h + I^X \; \text{  modulo } \; \hbar.
\]

\subsection{The quantum master equation}

To define a consistent QFT, the collection of interactions must satisfy the {\em quantum master equation}. 
This quantum master equation should be thought of as a quantization of the classical master equation we have already encountered.

The starting point is to introduce the operator $Q + \hbar \Delta_L$ where $\Delta_L : \sO(\sE) \to \sO(\sE)$ is the scale $L$ BV Laplacian defined by contraction with the scale $L$ heat kernel $K_L$. 
The issue here is that this operator is not square zero, in fact for every $L > 0$ one has
\ben
(Q + \hbar \Delta_L)^2 = \ell_1^2 .
\een

To rectify this, we introduce the modified $Q$-differential
\ben
Q_L = Q + \ell_1^2 \int_{t=0}^L Q^{GF} e^{-t {\rm D}} \d t .
\een

\begin{lem} 
For any $L>0$, the operator $Q_L$ satisfies
\ben
\left(Q_L + \hbar \Delta_L + \frac{1}{\hbar} F_{\ell_1} \right)^2 = C
\een
for some $C \in \Omega^\sharp_X$.
\end{lem}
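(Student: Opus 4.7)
The plan is to expand the square of the odd operator
\[
T := Q_L + \hbar \Delta_L + \tfrac{1}{\hbar} F_{\ell_1}
\]
on $\sO(\sE)$ --- where the third summand is interpreted as multiplication by the odd local functional $F_{\ell_1}$ --- and show that every operator-valued contribution cancels, leaving only an element of $\Omega^\sharp_X$. The graded anticommutator produces six terms. Two diagonal pieces vanish for general reasons: $\Delta_L^2 = 0$ is the standard nilpotency of the scale-$L$ BV Laplacian, and $F_{\ell_1}\cdot F_{\ell_1} = 0$ because $F_{\ell_1}$ has odd cohomological degree in the graded-commutative algebra $\sO(\sE)$.

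The first main computation is $Q_L^2$. Writing $Q_L = Q + R_L$ with $R_L := \ell_1^2 \int_0^L Q^{GF} e^{-tD}\, dt$, I would use $Q^2 = \ell_1^2$, the gauge-fixing relation $\{Q, Q^{GF}\} = \mathrm{D}$ established in the preceding subsection, and the Bianchi-type identity $[Q, \ell_1^2] = 0$ (reflecting that the curvature of $\fg_X$ is $Q$-closed) to simplify $\{Q, R_L\}$. The upshot is that $Q_L^2$, modulo scalar contributions valued in $\Omega^\sharp_X$ coming from $R_L^2$, acts on fields as a heat-kernel-regulated version of $\ell_1^2$ of the form $\ell_1^2 \cdot e^{-L\mathrm{D}}$.

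The cross terms involving $F_{\ell_1}$ manufacture the cancellation of this residual piece. Because $F_{\ell_1}$ acts by multiplication, $\{Q_L, F_{\ell_1}\}$ is itself the multiplication operator $Q_L(F_{\ell_1})\,\cdot$, a pure scalar contribution. The second-order derivation property of $\Delta_L$ splits $\{\Delta_L, F_{\ell_1}\}$ as multiplication by $\Delta_L(F_{\ell_1})$ together with the scale-$L$ Poisson bracket operator $\{F_{\ell_1}, -\}_L$; the defining relation $\{F_{\ell_1}, -\} = \ell_1^2$ at scale zero, transported to scale $L$ via the heat kernel, yields precisely the $\ell_1^2 \cdot e^{-L\mathrm{D}}$ operator needed to cancel the surviving piece of $Q_L^2$. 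The final term $\hbar\{Q_L, \Delta_L\}$ is controlled by the standard identity relating $[Q, \Delta_L]$ to contraction against $[Q, K_L]$, supplemented by the contribution of $R_L$; its operator-valued part should likewise drop out. What remains on the right-hand side is then a scalar $C \in \Omega^\sharp_X$, assembled from $R_L^2$, $\Delta_L(F_{\ell_1})$, and $Q_L(F_{\ell_1})$.

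The hardest step is the scale-$L$ bookkeeping --- verifying that the heat-kernel-regulated bracket $\{F_{\ell_1}, -\}_L$ and the propagator piece of $\{Q, R_L\}$ match on the nose, and confirming that $\hbar\{Q_L, \Delta_L\}$ leaves no operator-valued residue beyond $\Omega^\sharp_X$. The correction term $R_L$ in the definition of $Q_L$ is engineered precisely to interpolate between the classical identity $\{F_{\ell_1}, -\} = \ell_1^2$ and its scale-$L$ counterpart, so establishing this matching is the real content of the lemma.
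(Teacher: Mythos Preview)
Your plan follows the same skeleton as the paper's proof: expand the square of $T = Q_L + \hbar\Delta_L + \hbar^{-1}F_{\ell_1}$ into its six graded-commutator pieces, dispose of $\Delta_L^2$ and $F_{\ell_1}^2$ on general grounds, and show that the remaining operator-valued contributions cancel modulo scalars in $\Omega^\sharp_X$. The paper's own argument is far terser---it records only the two identities $Q_L^2 = -\{F_{\ell_1},-\} = -\ell_1^2$ and $[\Delta_L, F_{\ell_1}] = \ell_1^2$ and stops, leaving the $\hbar^{\pm 1}$ cross terms implicit.

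Where you diverge is in the precise form of the cancelling operators. The paper asserts that both $Q_L^2$ and $[\Delta_L, F_{\ell_1}]$ equal the \emph{unregulated} $\pm\ell_1^2$, whereas you argue they are the heat-kernel-smeared $\ell_1^2 e^{-L\mathrm{D}}$. Your version is arguably the more careful accounting: the scale-$L$ bracket $\{F_{\ell_1},-\}_L$ genuinely picks up a convolution with $K_L$, and the integral correction $R_L$ in $Q_L$ is built to produce the matching $e^{-L\mathrm{D}}$ factor (indeed $\{Q,R_L\} = \ell_1^2(1-e^{-L\mathrm{D}})$ up to sign, so the role of $R_L$ is exactly to interpolate between $\ell_1^2$ and $\ell_1^2 e^{-L\mathrm{D}}$). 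Either reading yields the same cancellation at order $\hbar^0$. What you flag as the ``hardest step''---the scale-$L$ bookkeeping matching $\{F_{\ell_1},-\}_L$ against the $R_L$ contribution---is precisely the detail the paper's two-line proof suppresses, and your outline for handling it is sound.
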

\begin{proof}
This follows from the observations that $Q_L^2 = -\{F_{\ell_1},-\} = - \ell_1^2$ and $[\Delta_L, F_{\ell_1}] = \ell_1^2$.
\end{proof}

\begin{dfn} The family of functionals $\{I[L]\}$ is said to satisfy the scale $L$ quantum master equation (QME) if the operator
\ben
Q_L + \hbar \Delta_L + \{I[L],-\}_L 
\een 
is square zero acting on $\sO(\sE)$.
\end{dfn}

\begin{lem} The family of functionals $I[L]$ satisfies the scale $L$ QME if and only if 
\ben
Q_L I[L] + \hbar \Delta_L I[L] + \frac{1}{2} \{I[L],I[L]\}_L + F_{\ell_1} 
\een 
is zero modulo constant terms, i.e. elements in the ring $\Omega^\sharp_X$.
\end{lem}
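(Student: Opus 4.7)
The plan is to reduce the lemma to the standard BV conjugation identity. Set $D := Q_L + \hbar\Delta_L + \{I[L],-\}_L$ for the QME differential and $\tilde Q := Q_L + \hbar\Delta_L + \hbar^{-1} F_{\ell_1}$ for the operator of the previous lemma, whose square is a central element $C \in \Omega^\sharp_X$. Write $R[L]$ for the functional in the statement. The crux of the argument is the operator identity
\[
e^{-I[L]/\hbar}\,\tilde Q\, e^{I[L]/\hbar} \;=\; D + \hbar^{-1} R[L],
\]
where $R[L]$ on the right denotes the multiplication operator. This is a routine unwinding of three facts: $Q_L$ is an odd derivation, so conjugation contributes $\hbar^{-1} Q_L I[L]$; the BV Laplacian obeys
\[
\Delta_L\bigl(e^{I[L]/\hbar} f\bigr) = e^{I[L]/\hbar}\Bigl(\Delta_L f + \hbar^{-1}\{I[L],f\}_L + \hbar^{-1}(\Delta_L I[L])f + \frac{1}{2\hbar^2}\{I[L],I[L]\}_L f\Bigr),
\]
which produces the remaining pieces of $R[L]/\hbar$ together with the derivation $\{I[L],-\}_L$; and $F_{\ell_1}$ is itself a functional, so commutes with $e^{I[L]/\hbar}$.

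Next I would square both sides. The left-hand side is $C$ by the previous lemma (conjugation preserves the central scalar). On the right, the pure multiplication term $\hbar^{-2} R[L]^2$ vanishes because $R[L]$ has odd cohomological degree. The cross-term is computed by combining the derivation property of $Q_L$ and $\{I[L],-\}_L$ with the second-order BV identity $\Delta_L(ab) - \Delta_L(a)b - (-1)^{|a|}a\Delta_L(b) = \{a,b\}_L$, giving
\[
D\circ R[L] + R[L]\circ D \;=\; D(R[L]) + \hbar\{R[L],-\}_L,
\]
where the first term on the right acts by multiplication. Rearranging produces the master identity
\[
D^2 \;=\; \bigl(C - \hbar^{-1} D(R[L])\bigr) \;-\; \{R[L],-\}_L .
\]

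Reading off the iff is then immediate. The right-hand side splits as (multiplication by a function) plus (odd derivation), and these two classes of operators coincide only when both vanish separately. The derivation piece $\{R[L],-\}_L$ vanishes exactly when $R[L]$ lies in the center of the shifted Poisson bracket, which is precisely $\Omega^\sharp_X$; this is the \emph{zero modulo constants} condition in the statement. Once $R[L]$ is a constant, $\Delta_L R[L] = 0$ and $\{I[L],R[L]\}_L = 0$, so $D(R[L])$ reduces to $Q_L(R[L])$, and the residual scalar equation $Q_L(R[L]) = \hbar C$ is forced by evaluating the master identity on $1 \in \sO(\sE)$ and invoking the previous lemma.

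The main obstacle I anticipate is bookkeeping: carefully tracking parities and signs through the cross-term computation, and confirming that the constant-matching step at the very end is a formal consequence of the identities $Q_L^2 = -\{F_{\ell_1},-\}$ and $[\Delta_L, F_{\ell_1}] = \ell_1^2$ underpinning the previous lemma, rather than a genuinely new constraint.
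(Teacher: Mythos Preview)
The paper states this lemma without proof, so there is nothing to compare your argument against directly. Your approach via the conjugation identity
\[
e^{-I[L]/\hbar}\,\tilde Q\, e^{I[L]/\hbar} \;=\; D + \hbar^{-1} R[L]
\]
is the standard one in the BV formalism and is the natural proof here; the expansion you give for the conjugate of $\hbar\Delta_L$ is correct, and the splitting of the squared operator into a multiplication piece and a derivation piece is exactly the right mechanism for extracting the ``if and only if''.

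One caution on the backward direction. You correctly isolate the residual scalar condition $C - \hbar^{-1}Q_L(R[L]) = 0$ that must hold once $R[L]$ is constant, and you assert it is ``forced'' by the identities $Q_L^2 = -\{F_{\ell_1},-\}$ and $[\Delta_L,F_{\ell_1}] = \ell_1^2$ from the previous lemma. This is the right place to look, but the verification is not entirely formal: you need to unwind what $C$ actually is (expand $\tilde Q^2$ using those two identities and $\Delta_L^2=0$, $F_{\ell_1}^2=0$) and check that it matches $\hbar^{-1}Q_L$ applied to the constant part of $R[L]$. In practice the paper's convention ``zero modulo constant terms'' is meant to absorb exactly this ambiguity, so the statement is really that $D^2$ is multiplication by a constant in $\Omega^\sharp_X$ if and only if $R[L]$ is such a constant---and that version follows immediately from your master identity without the extra matching step. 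Either reading is fine, but you should make explicit which one you are proving.
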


\subsection{The obstruction}

Not every na\"{i}ve quantization of a classical field theory defines a quantization. 
The obstruction is precisely the failure to satisfy the quantum master equation we have just defined. 
In this section, we show that to order $\hbar$ the obstruction, though not identically zero, determines a trivial class in cohomology.
This is enough to determine the existence of a one-loop quantization of the classical $\sigma$-model.

\begin{prop} The obstruction to satisfying the quantum master equation modulo $\hbar^2$ is cohomologically trivial in the deformation complex $\Def$.
\end{prop}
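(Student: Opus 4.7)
The plan is to combine standard obstruction theory from \cite{Cos1} with the explicit computation of the affine-invariant part of the deformation complex carried out in Proposition~\ref{prop:obsdef}. By the general formalism, the failure of $\{I[L]\}$ to satisfy the scale-$L$ QME is, modulo $\hbar^2$, a local functional of the form $\hbar\,\mathcal{O}[L]$, where $\mathcal{O}[L] \in \Oloc(\sE)$ has cohomological degree $+1$. Because $\{I[L]\}$ is a pre-theory, the $L \to 0$ limit $\mathcal{O} := \lim_{L \to 0} \mathcal{O}[L]$ exists as a local functional, is closed under $\{S,-\} = Q + \{I,-\}$, and its class $[\mathcal{O}] \in H^1(\Def)$ is independent of the renormalization scheme.

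The next step is to show that $\mathcal{O}$ lies in the affine-invariant subcomplex $(\Def)^{\Aff(\CC)}$. The heat kernel $k_L(x,y)$, the propagator $P_{\epsilon \to L}$, and the gauge-fixing operator $Q^{GF}$ are translation- and rotation-invariant, and the interaction $I = I^h + I^X$ is built from globally defined tensors on $X$. A renormalization scheme equivariant for $\Aff(\CC)$ can be arranged by assigning $L$ its natural dilation weight and averaging over the compact rotation subgroup $U(1) \subset \CC^\times$; then the one-loop wheel diagrams contributing to $\mathcal{O}$ assemble into a local functional of net $\Aff(\CC)$-weight zero, so $[\mathcal{O}]$ defines a class in $H^1((\Def)^{\Aff(\CC)})$.

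The proposition is then immediate from a purely cohomological observation. By Proposition~\ref{prop:obsdef}, $(\Def)^{\Aff(\CC)} \simeq \Omega^*_X(\sJ_F)$ with $F = \Riem(X,h)[1] \oplus \Omega^3_{cl,X}$, and this reduces at the level of cohomology to $F$ itself. Since $\Riem(X,h) = T_X \xrightarrow{d} \Sym^2(T_X^\vee)[-1]$ is concentrated in degrees $0$ and $1$, its shift $\Riem(X,h)[1]$ lies in degrees $-1$ and $0$, while $\Omega^3_{cl,X}$ sits in degree $0$. Hence $H^k((\Def)^{\Aff(\CC)}) = 0$ for all $k \geq 1$, forcing $[\mathcal{O}] = 0$. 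The main obstacle in executing this plan is the second step: verifying that a genuinely $\Aff(\CC)$-equivariant renormalization scheme can be chosen and is compatible with the counterterms $I^{CT}(\epsilon)$. Once invariance is in place, the vanishing of the obstruction class follows formally from the cohomological degree calculation of Proposition~\ref{prop:obsdef}.
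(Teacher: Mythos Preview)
Your approach is correct and genuinely different from the paper's. The paper proves the proposition by an explicit diagrammatic computation: invoking a lemma of Li--Li it writes the obstruction $\Theta[L]$ as a sum over one-loop wheels with a distinguished edge, argues that only the two-vertex wheel survives, evaluates its $L\to 0$ limit as a concrete local functional $\Theta$, and then exhibits by hand a tadpole-type functional $J$ with $(Q+\{I,-\})J=\Theta$. You instead argue that the obstruction is a degree $+1$ cocycle in the $\Aff(\CC)$-invariant subcomplex and appeal to Proposition~\ref{prop:obsdef}, whose output is concentrated in degrees $\leq 0$, forcing $H^1=0$. Your route is cleaner and shows that unobstructedness is a structural consequence of the deformation complex rather than a diagrammatic accident. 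The paper's route, however, buys something you do not: the explicit cochain $J$ is used immediately afterward to build the actual one-loop quantization $\{I[L]+\hbar J[L]\}$, and the $\beta$-function computation in Section~\ref{sect:betasigma} relies on knowing that $J$ has scaling weight zero and is defined by a tree-level expansion (hence contributes no UV divergences). Your argument would produce an $\Aff(\CC)$-invariant trivializing cochain abstractly but not its explicit form. The technical point you flag about arranging an $\Aff(\CC)$-equivariant renormalization scheme is genuine but mild in this setting: translation and $U(1)$ invariance are immediate from the heat kernel, and dilation equivariance follows once $L$ and $\epsilon$ carry their natural weights, since the classical interaction is scale-invariant.
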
 

By Lemma C.5 in \cite{LiLi}, we see that the obstruction satisfies
\ben
\Theta [L] = \lim_{\epsilon \to 0} \sum_{\Gamma, e} W_{\Gamma,e}(P_{\epsilon}^L, K_\epsilon - K_0, I)^{sm} 
\een
where $``sm''$ denotes the smooth component of the functional. 
The sum is over one loop connected graphs $\Gamma$ and edges $e$ of the graph. 
The notation $W_{\Gamma, e}(A, B, I)$ where $A,B \in \Sym^2(\sE)$ means that we put $A$ on each edge besides the distinguished edge, where we put $B$. 

We find that the tadpole diagram is purely singular, hence does not contribute to $\Theta[L]$. 
Moreover, the diagrams involving three or more vertices vanish by type reasons. 
That leaves the wheel with two vertices with internal edges labeled by $P_{\epsilon \to L}$ and $K_{\epsilon} - K_0$, 
as shown here:
\begin{center}
        \begin{tikzpicture}
\node[circle] (l) at (2,0) [draw] {$I$};
\node[circle] (r) at (4,0) [draw] {$I$};

 \node (uul) at (0.6,0.95) {};
 \node (ul) at (0.35, 0.45) {};
  \node (dl) at (0.35,-0.45) {};
  \node (ddl) at (0.6,-0.95) {};
\node[rotate=90] at (0.55,0) {$\dotsb$};

 \node (uur) at (5.4,0.95) {};
 \node (ur) at (5.65, 0.45) {};
  \node (dr) at (5.65,-0.45) {};
  \node (ddr) at (5.4,-0.95) {};
\node[rotate=90] at (5.45,0) {$\dotsb$};

\draw[thick] (l.40) --  (r.140) node[midway,above] {$K_{\epsilon} - K_0$};
\draw[thick] (l.-40) -- (r.-140) node[midway,below] {$P_{\epsilon \to L}$};

\draw[thick] (uul)--(l.180);
\draw[thick] (ul)--(l.180);
\draw[thick] (dl) --(l.180);
\draw[thick] (ddl) --(l.180);

\draw[thick] (uur)--(r.0);
\draw[thick] (ur)--(r.0);
\draw[thick] (dr) --(r.0);
\draw[thick] (ddr) --(r.0);

        \end{tikzpicture}
\end{center}

The non-vanishing part of the weight of the diagram is of the form
\ben
\int_{z,w} \int_{t = \epsilon}^L \frac{1}{4 \pi \epsilon} \frac{1}{4 \pi t} e^{-|z-w|^2 / 4t} \varphi(z,w) \d^2 z \d^2 w \d t
\een
where $\varphi$ is some compactly supported function on $\CC^2$ depending on the inputs of the diagram. 
Applying Wick's formula, we see that as $L \to 0$ the smooth part of the diagram can be written as the following local functional
\ben
- \frac{\log 2}{4\pi} \Theta (\varphi,\psi)
\een
where $\Theta$ is represented by the diagram:

\begin{center}
        \begin{tikzpicture}
\node[circle] (l) at (2,0) [draw] {$I$};
\node[circle] (r) at (4,0) [draw] {$I$};

 \node (uul) at (0.6,0.95) {};
 \node (ul) at (0.35, 0.45) {};
  \node (dl) at (0.35,-0.45) {};
  \node (ddl) at (0.6,-0.95) {};
\node[rotate=90] at (0.55,0) {$\dotsb$};

 \node (uur) at (5.4,0.95) {};
 \node (ur) at (5.65, 0.45) {};
  \node (dr) at (5.65,-0.45) {};
  \node (ddr) at (5.4,-0.95) {};
\node[rotate=90] at (5.45,0) {$\dotsb$};

\draw[thick] (l.40) --  (r.140) node[midway,above] {$K_0$};
\draw[thick] (l.-40) -- (r.-140) node[midway,below] {${\rm Id}$};

\draw[thick] (uul)--(l.180);
\draw[thick] (ul)--(l.180);
\draw[thick] (dl) --(l.180);
\draw[thick] (ddl) --(l.180);

\draw[thick] (uur)--(r.0);
\draw[thick] (ur)--(r.0);
\draw[thick] (dr) --(r.0);
\draw[thick] (ddr) --(r.0);

        \end{tikzpicture}
\end{center}

It suffices to show that we can make this functional exact in the deformation complex. 
For this, consider the following local functional 

\begin{center}
        \begin{tikzpicture}
        
\node at (5,0) {$J (\varphi,\psi) \;\;\;= $};        
\node[circle] (c) at (8,0) [draw] {$I$};

 \node (uulc) at (6.6,0.95) {};
 \node (ulc) at (6.35, 0.45) {};
  \node (llc) at (6.35,-0.45) {};
  \node (lllc) at (6.6,-0.95) {};
\node[rotate=90] at (6.55,0) {$\dotsb$};
\tikzset{every loop/.style={min distance=10mm,in=-35,out=35,looseness=7}};
\path[thick] (c) edge  [loop above]  ();
\node at (9.25,0) {${\rm Id}$};

\draw[thick] (ulc)--(c.180);
\draw[thick] (llc)--(c.180);
\draw[thick] (uulc) --(c.180);
\draw[thick] (lllc) --(c.180);

        \end{tikzpicture}
\end{center}

\noindent
The differential of the deformation complex is of the form $\{S,-\} = Q + \{I,-\}$. 
Applying this operator to the functional $J$ we find

\begin{center}
 \begin{tikzpicture}
 
 \node at (1.5,-4) {$(6)$};
        
\node at (4.5,-4) {$QJ + \{I,J\} = $};       
\node at (6,-4) {$Q$};
\node[circle] (c) at (8,-4) [draw] {$I$};

 \node (uulc) at (6.6,-3.05) {};
 \node (ulc) at (6.35, -3.55) {};
  \node (llc) at (6.35,-4.45) {};
  \node (lllc) at (6.6,-4.95) {};
\node[rotate=90] at (6.55,-4) {$\dotsb$};
\tikzset{every loop/.style={min distance=10mm,in=-35,out=35,looseness=7}};
\path[thick] (c) edge  [loop above]  ();
\node at (9.25,-4) {${\rm Id}$};

\node (plu) at (6.45,-3) {};
\node (pll) at (6.45,-5){};
\draw[thick] (plu) to[bend right=15] (pll);

\node (pru) at (9.45,-3) {};
\node (prl) at (9.45,-5){};
\draw[thick] (pru) to[bend left=15] (prl);

\draw[thick] (ulc)--(c.180);
\draw[thick] (llc)--(c.180);
\draw[thick] (uulc) --(c.180);
\draw[thick] (lllc) --(c.180);

  \node at (10,-4) {$+$};

\end{tikzpicture}
 \begin{tikzpicture}

\node[circle] (l) at (2,0) [draw] {$I$};
\node[circle] (r) at (4,0) [draw] {$I$};

 \node (uul) at (0.6,0.95) {};
 \node (ul) at (0.35, 0.45) {};
  \node (dl) at (0.35,-0.45) {};
  \node (ddl) at (0.6,-0.95) {};
\node[rotate=90] at (0.55,0) {$\dotsb$};

 \node (uur) at (5.4,0.95) {};
 \node (ur) at (5.65, 0.45) {};
  \node (dr) at (5.65,-0.45) {};
\node[rotate=90] at (5.45,0) {$\dotsb$};

\draw[thick] (l.0) --  (r.180) node[midway,above] {$K_0$};

\draw[thick] (uul)--(l.180);
\draw[thick] (ul)--(l.180);
\draw[thick] (dl) --(l.180);
\draw[thick] (ddl) --(l.180);

\draw[thick] (uur)--(r.0);
\draw[thick] (ur)--(r.0);
\draw[thick] (dr) --(r.0);

\tikzset{every loop/.style={min distance=10mm,in=240,out=300,looseness=7}};
\path[thick] (r) edge  [loop above]  ();

\node at (4, -1.1) {${\rm Id}$};

\end{tikzpicture}

\end{center}

\noindent
Note that the functional in the parentheses is equal to $\Tilde{\Delta} (I)$ where $\Tilde{\Delta}$ is the BV Laplacian. 
Moreover, one has 

\begin{center}
 \begin{tikzpicture}
\node at (-0.25,0) {$\Tilde{\Delta}$};
\node[circle] (l) at (2,0) [draw] {$I$};
\node[circle] (r) at (4,0) [draw] {$I$};

 \node (uul) at (0.6,0.95) {};
 \node (ul) at (0.35, 0.45) {};
  \node (dl) at (0.35,-0.45) {};
  \node (ddl) at (0.6,-0.95) {};
\node[rotate=90] at (0.55,0) {$\dotsb$};

\node (plu) at (0.35,1.1) {};
\node (pll) at (0.35,-1.1){};
\draw[thick] (plu) to[bend right=15] (pll);

 \node (uur) at (5.4,0.95) {};
 \node (ur) at (5.65, 0.45) {};
  \node (dr) at (5.65,-0.45) {};
  \node (ddr) at (5.4,-0.95) {};
\node[rotate=90] at (5.45,0) {$\dotsb$};

\node (pru) at (5.65,1.1) {};
\node (prl) at (5.65,-1.1){};
\draw[thick] (pru) to[bend left=15] (prl);

\draw[thick] (l.0) --  (r.180) node[midway,above] {$K_0$};

\draw[thick] (uul)--(l.180);
\draw[thick] (ul)--(l.180);
\draw[thick] (dl) --(l.180);
\draw[thick] (ddl) --(l.180);

\draw[thick] (uur)--(r.0);
\draw[thick] (ur)--(r.0);
\draw[thick] (dr) --(r.0);
\draw[thick] (ddr) --(r.0);

\node at (6.2,0) {$=$};

\node[circle] (l) at (8,1.7) [draw] {$I$};
\node[circle] (r) at (10,1.7) [draw] {$I$};

 \node (uul) at (6.6,2.65) {};
 \node (ul) at (6.35, 2.15) {};
  \node (dl) at (6.35,1.25) {};
  \node (ddl) at (6.6,.75) {};
\node[rotate=90] at (6.55,1.7) {$\dotsb$};

 \node (uur) at (11.4,2.65) {};
 \node (ur) at (11.65, 2.15) {};
  \node (dr) at (11.65,1.25) {};
\node[rotate=90] at (11.45,1.7) {$\dotsb$};

\draw[thick] (l.0) --  (r.180) node[midway,above] {$K_0$};

\draw[thick] (uul)--(l.180);
\draw[thick] (ul)--(l.180);
\draw[thick] (dl) --(l.180);
\draw[thick] (ddl) --(l.180);

\draw[thick] (uur)--(r.0);
\draw[thick] (ur)--(r.0);
\draw[thick] (dr) --(r.0);

\tikzset{every loop/.style={min distance=10mm,in=240,out=300,looseness=7}};
\path[thick] (r) edge  [loop above]  ();

\node at (10, 0.6) {${\rm Id}$};

\node at (7,0) {$+$};

\node[circle] (l) at (9,-0.5) [draw] {$I$};
\node[circle] (r) at (11,-0.5) [draw] {$I$};

 \node (uul) at (7.6,0.45) {};
 \node (ul) at (7.35, -0.05) {};
  \node (dl) at (7.35,-0.95) {};
  \node (ddl) at (7.6,-1.45) {};
\node[rotate=90] at (7.55,-0.5) {$\dotsb$};

 \node (uur) at (12.4,0.45) {};
 \node (ur) at (12.65, -0.05) {};
  \node (dr) at (12.65,-0.95) {};
  \node (ddr) at (12.4,-1.45) {};
\node[rotate=90] at (12.45,-0.5) {$\dotsb$};

\draw[thick] (l.40) --  (r.140) node[midway,above] {$K_0$};
\draw[thick] (l.-40) -- (r.-140) node[midway,below] {${\rm Id}$};

\draw[thick] (uul)--(l.180);
\draw[thick] (ul)--(l.180);
\draw[thick] (dl) --(l.180);
\draw[thick] (ddl) --(l.180);

\draw[thick] (uur)--(r.0);
\draw[thick] (ur)--(r.0);
\draw[thick] (dr) --(r.0);
\draw[thick] (ddr) --(r.0);

\end{tikzpicture} 
\end{center}

\noindent
Since $Q$ and $\Tilde{\Delta}$ commute, we find that the right-hand side of Equation (6) becomes

\begin{center}
\begin{tikzpicture}

\node at (-1.25,0) {$\Tilde{\Delta}$};

\node at (-0.25,0) {$QI +$};

\node[circle] (l) at (2,0) [draw] {$I$};
\node[circle] (r) at (4,0) [draw] {$I$};

 \node (uul) at (0.6,0.95) {};
 \node (ul) at (0.35, 0.45) {};
  \node (dl) at (0.35,-0.45) {};
  \node (ddl) at (0.6,-0.95) {};
\node[rotate=90] at (0.55,0) {$\dotsb$};

\node (plu) at (-0.65,1.1) {};
\node (pll) at (-.65,-1.1){};
\draw[thick] (plu) to[bend right=15] (pll);

 \node (uur) at (5.4,0.95) {};
 \node (ur) at (5.65, 0.45) {};
  \node (dr) at (5.65,-0.45) {};
  \node (ddr) at (5.4,-0.95) {};
\node[rotate=90] at (5.45,0) {$\dotsb$};

\node (pru) at (5.65,1.1) {};
\node (prl) at (5.65,-1.1){};
\draw[thick] (pru) to[bend left=15] (prl);

\draw[thick] (l.0) --  (r.180) node[midway,above] {$K_0$};

\draw[thick] (uul)--(l.180);
\draw[thick] (ul)--(l.180);
\draw[thick] (dl) --(l.180);
\draw[thick] (ddl) --(l.180);

\draw[thick] (uur)--(r.0);
\draw[thick] (ur)--(r.0);
\draw[thick] (dr) --(r.0);
\draw[thick] (ddr) --(r.0);

\node at (6.3,0) {$-$};

\node[circle] (l) at (8.2,0) [draw] {$I$};
\node[circle] (r) at (10.2,0) [draw] {$I$};

 \node (uul) at (6.8,0.95) {};
 \node (ul) at (6.55, 0.45) {};
  \node (dl) at (6.55,-0.45) {};
  \node (ddl) at (6.8,-0.95) {};
\node[rotate=90] at (6.75,0) {$\dotsb$};

 \node (uur) at (11.6,0.95) {};
 \node (ur) at (11.85, 0.45) {};
  \node (dr) at (11.85,-0.45) {};
  \node (ddr) at (11.6,-0.95) {};
\node[rotate=90] at (11.65,0) {$\dotsb$};

\draw[thick] (l.40) --  (r.140) node[midway,above] {$K_0$};
\draw[thick] (l.-40) -- (r.-140) node[midway,below] {${\rm Id}$};

\draw[thick] (uul)--(l.180);
\draw[thick] (ul)--(l.180);
\draw[thick] (dl) --(l.180);
\draw[thick] (ddl) --(l.180);

\draw[thick] (uur)--(r.0);
\draw[thick] (ur)--(r.0);
\draw[thick] (dr) --(r.0);
\draw[thick] (ddr) --(r.0);

\end{tikzpicture} 
\end{center}

\noindent
Finally, we note that the expression inside the parentheses is zero by the classical master equation. 
We conclude that
\ben
(Q + \{I,-\}) \left(\frac{\log 2}{2 \pi} J\right) = \Theta,
\een
as desired. 
We note that $J$ has scaling weight zero. 

To obtain the quantization from the na\"{i}ve quantization we consider the family of functionals 
\[
\{I[L] + \hbar J [L]\}.
\]
Here, $I[L]$ is, as above, equal to a sum over graphs of genus $\leq 1$. 
Since we are working modulo $\hbar^2$, the quantity $\hbar J[L]$ is equal to the sum of the weights of trees where at a single vertex we put the functional $\hbar J$ and at the remaining vertices we put the functional $I$. 
It is an immediate consequence of the compatibility of homotopy RG-flow and the quantum differential that this family of functionals satisfies the quantum master equation modulo $\hbar^2$.

\section{$\beta$-function generalities}\label{sec:betaoverview}

Having defined and quantized (to one-loop) the $\sigma$-model, we now recall some generalities and computational lemmas regarding $\beta$-functions. We compute the $\beta$-function for our quantization in Section \ref{sect:betasigma} below.

Consider a translation invariant BV theory $(E,Q, I)$ on $\RR^n$.
We denote by $E_0$ the fiber of the bundle $E$ over $0 \in \RR^n$. 
The group $\RR_{>0}$ acts on the fields $\sE = C^\infty(\RR^n) \tensor E_0$ via the action induced by rescaling $\RR^n$. 
Further, $\RR_{>0}$ acts on the space of functionals, $\sO(\sE)$, and preserves the subspace of local functionals $\sO_{loc} (\sE)$. We will denote this action of $\RR_{>0}$ on $\sO(\sE)$ by $\rho_\lambda$. 
Making this action explicit, one can easily check the following :
\begin{equation}\label{eqn:weight}
\rho_\lambda (I_k) = \lambda^{n+\frac{k(2-n)}{2}}I_k, \quad \text{ where } \quad I_k (\phi) = \int_{\RR^n} \phi (x)^k.
\end{equation}

\begin{dfn}
A classical field theory $(E,Q,I)$ is {\em scale-invariant} if $\rho_\lambda (I) = I$ for all $\lambda \in \RR_{>0}$.
\end{dfn}

From equation (\ref{eqn:weight}) above, one can see that $\phi^4$ theory is scale-invariant on $\RR^4$ and $\phi^3$ theory is scale-invariant on $\RR^6$. Scale-invariance is a stronger condition than simply being {\it (classically) renormalizable}, which is the condition that $\rho_\lambda (I)$ flows to a fixed point as $\lambda \to 0$.

We now recall how to extend the action of $\RR_{>0}$ to QFTs in the BV formalism.
This action is called the {\it (local) renormalization group flow}, or simply RG flow. 
We start with a classical BV theory given by the data $(E, Q, I)$.
Following Costello \cite{Cos1}, a quantum field theory is a collection of functionals $\{I[L]\}$ where $L > 0$ that satisfy homotopy RG flow (the renormalization group equation) and satisfy the quantum master equation.  
It is a quantization if modulo $\hbar$ the family of functionals converges to the classical interaction as $L \to 0$. 
It is also necessary to equip our quantum BV theory with a gauge-fixing operator $Q^{GF}$. 
Further, suppose that the action of $\RR_{>0}$ on $Q^{GF}$ is as follows:
\[
\rho_\lambda \cdot Q^{GF} := \rho_\lambda Q^{GF} \rho_{\lambda^{-1}} = \lambda^k Q^{GF}.
\]
Here the {\it scaling dimension} of $Q^{GF}$ is allowed to be a rational number, i.e., $k \in \QQ$. In the scalar theories below, $k = n/2-1$.

\begin{dfn}
Let $\{I[L]\}$ be a family of translation-invariant effective interactions.
Define a rescaled effective family $\{I_\lambda [L]\}$ by
\[
I_\lambda [L] := \rho_\lambda \cdot I[\lambda^{-k} L].
\]
\end{dfn}

It can be shown explicitly, Lemma 3.4 of \cite{EWY}, that $\{I[L]\}$ satisfies homotopy RG flow and the QME if and only if the rescaled family $\{I_\lambda [L]\}$ does.

We are (finally) in a position to define the $\beta$-functional.  We will distinguish between the $\beta$-functional and $\beta$-function as the latter is the cohomology class of the former in the BV algebra of (quantum) observables.

\begin{dfn}
Let $\{I[L]\}$ be an effective quantization of the BV theory $(E,Q,I)$ on $\RR^n$. For $L \in \RR_{>0}$, define the {\it scale $L$ $\beta$-functional} to be the functional
\[
\sO_\beta[L] := \lim_{\lambda \to 1} \lambda \frac{d}{d \lambda} I_\lambda [L].
\]
\end{dfn}

We can expand the $\beta$-functional in powers of $\hbar$:
\[
\sO_\beta [L] = \sO^{(0)}_\beta [L] + \hbar \sO^{(1)}_\beta [L] + O(\hbar^2).
\]
The superscript indicates the loop depth, e.g., $\sO^{(1)}_\beta [L]$ is the {\it one-loop $\beta$-functional}. For a theory where the classical theory is scale invariant, the zero-loop $\beta$-functional, $\sO^{(0)}_\beta [L]$, is identically zero.

\begin{prop}[Corollary 3.12 of \cite{EWY}]\label{prop:oneloop}
Let $\{I[L]\}$ be an effective quantization of the translation and scale invariant BV theory $(E,Q,I)$ on $\RR^n$.  Then
\[
\sO^{(1)}_\beta := \lim_{L \to 0} \sO^{(1)}_\beta [L]
\]
exists and determines a closed element in $\sO_{loc} (\sE)$.
\end{prop}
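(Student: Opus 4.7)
The plan is to use the interplay of three ingredients — classical scale invariance, the homotopy RG equation for the family $\{I[L]\}$, and the QME — to reduce the proposition to the locality-of-counterterms theorem of \cite{Cos1}. The vanishing of $\sO^{(0)}_\beta[L]$ is immediate: scale invariance gives $\rho_\lambda I = I$, and at tree level the effective family is determined rigidly by $I$ and the propagator, whose rescaling under $\rho_\lambda$ is exactly compensated by the $\lambda^{-k}$ shift in $L$, so the $\lambda$-derivative at $\lambda=1$ annihilates the tree-level part. Hence all content lives at one loop.

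For existence and locality of the limit, I would proceed as follows. Because $\{I_\lambda[L]\}$ satisfies the hRGE (Lemma 3.4 of \cite{EWY}), differentiating in $\lambda$ at $\lambda = 1$ yields the linearized hRGE for $\{\sO_\beta[L]\}$ around $\{I[L]\}$,
\[
\sO_\beta[L'] = W(P_{L \to L'}, I[L]; \sO_\beta[L]),
\]
where the insertion notation refers to sums over connected graphs with one marked $\sO_\beta$-vertex. With the classical part vanishing, the one-loop piece $\sO^{(1)}_\beta[L]$ flows only along tree graphs carrying a single $\sO^{(1)}_\beta$-vertex and remaining vertices labeled by the local functional $I$ (modulo $\hbar$). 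The short-distance analysis of Costello (Chapter 9 of \cite{Cos1}) then shows that the $L \to 0$ asymptotics of such a family are controlled by a finite sum of local counterterms; because a renormalization scheme has already been fixed in the construction of $\{I[L]\}$, these local tails cancel and the limit $\sO^{(1)}_\beta$ lies in $\Oloc(\sE)$.

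Closedness follows from differentiating the QME. For each $\lambda$ the rescaled family $\{I_\lambda[L]\}$ satisfies the scale $L$ QME; differentiating in $\lambda$ at $\lambda = 1$ yields the linearized QME
\[
(Q + \hbar \Delta_L + \{I[L], -\}_L)\,\sO_\beta[L] = 0.
\]
Extracting the coefficient of $\hbar^1$ and passing to the $L \to 0$ limit — using that $I[L] \to I$ modulo $\hbar$ and that the $\hbar\Delta_L$-term is subleading at this order — gives $(Q + \{I,-\})\,\sO^{(1)}_\beta = 0$, the desired cocycle condition in $\Def$.

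The main obstacle is the existence claim. One must argue carefully that the asymptotic-locality property of $\{I[L]\}$ upgrades to an honest $L \to 0$ limit in $\Oloc(\sE)$ for the $\lambda$-derivative, despite the subtlety that differentiating Feynman integrals in the scaling parameter can in principle introduce new logarithmic singularities. Scale invariance of the classical theory is precisely what forces such singularities to be absent at tree level so that, at one loop, only a local-functional residue survives; this residue is the desired $\sO^{(1)}_\beta$.
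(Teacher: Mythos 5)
The paper does not prove this statement itself; it imports it as Corollary 3.12 of \cite{EWY}. Measured against that proof, your tree-level vanishing argument and your closedness argument (differentiate the QME for $I_\lambda[L]$ at $\lambda=1$, extract the $\hbar^1$ coefficient, use $\sO^{(0)}_\beta=0$ to drop the $\hbar\Delta_L$ term, and pass to $L\to 0$ where the scale-$L$ bracket becomes the classical BRST differential $Q+\{I,-\}$) are essentially the standard ones and are fine. The genuine gap is in the existence-and-locality step, and you in effect concede it in your last paragraph: the assertion that ``because a renormalization scheme has already been fixed, these local tails cancel and the limit lies in $\sO_{loc}(\sE)$'' is the statement to be proven, not an argument. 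Asymptotic locality of a family satisfying (linearized, tree-level) homotopy RG flow does \emph{not} upgrade to an honest local $L\to 0$ limit: the one-loop effective interaction $I^{(1)}[L]$ itself satisfies exactly the kind of flow equation you invoke, and its $L\to 0$ limit is only asymptotically local. So nothing in your argument distinguishes $\sO^{(1)}_\beta[L]$ from a generic such family.

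The missing idea is to locate all of the $\lambda$-dependence in the counterterms before differentiating. Using $\rho_\lambda$-equivariance of graph weights, $\rho_\lambda P_{\epsilon \to \lambda^{-k}L} = P_{\lambda^{k}\epsilon \to L}$, and scale invariance $\rho_\lambda I = I$, one gets (after the substitution $\epsilon \mapsto \lambda^{k}\epsilon$)
\[
I_\lambda[L] \;=\; \lim_{\epsilon \to 0} W\bigl(P_{\epsilon \to L},\, I + \hbar\, \rho_\lambda I^{CT}(\lambda^{-k}\epsilon)\bigr) \quad (\text{mod } \hbar^2),
\]
so the only $\lambda$-dependence sits in the rescaled counterterm. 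Differentiating at $\lambda = 1$, the $\log\epsilon$ part of the one-loop counterterm (which has scaling weight zero) contributes exactly $k\, I^{CT}_{\log}$, while the power-divergent parts contribute terms that disappear in the $\epsilon\to 0$, $L \to 0$ limits; this is precisely Proposition \ref{prop:logdiv} (Proposition 3.23 of \cite{EWY}), which the paper later relies on. Existence of $\lim_{L\to 0}\sO^{(1)}_\beta[L]$ and its locality are then immediate, since counterterms are local and $L$-independent by Costello's theorem. Your write-up should be reorganized around this computation rather than around a general appeal to asymptotic locality; once that is in place, the closedness argument you give completes the proof.
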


\begin{rmk}\label{rmk:warning}
The higher loop $\beta$-functionals, $\sO^{(k)}_\beta = \lim_{L \to 0} \sO^{(k)}_\beta [L]$, are not necessarily well-defined. 
Even if one na\"{i}vely defines $\sO^{(k)}_\beta$ to be the $\log \epsilon$ divergence at $k$-loops (compare Proposition \ref{prop:logdiv}), these functionals are not closed with respect to the BRST differential $Q + \{I,-\}$. 
One can prove that if $\sO^{(i)}_\beta \equiv 0$ for $i < k$, then $\sO^{(k)}_\beta [L]$ satisfies homotopy RG flow and is BRST closed; in particular, the $k$-loop $\beta$-functional, $\sO^{(k)}_\beta$, exists.
\end{rmk}

\begin{dfn}
Let $\{I[L]\}$ be an effective quantization of a translation invariant BV theory $(E,Q,I)$ on $\RR^n$. The {\it scale $L$ $\beta$-function}, $\beta[L]$, is the cohomology class
\[
\beta[L] := [ \sO_\beta [L]]\in H^0 (\sO(\sE) \otimes C^\infty((\epsilon, L))).
\]
\end{dfn}

\begin{rmk} Typically, the $\beta$-function cannot be decomposed by $\hbar$ degrees; the complex of functionals is only filtered, not graded.
\end{rmk}

In light of the previous remark, we do have the following (a corollary of Proposition \ref{prop:oneloop}).

\begin{prop}
Let $\{I[L]\}$ be an effective quantization of the translation and scale invariant BV theory $(E,Q,I)$ on $\RR^n$.  Then
\[
\beta^{(1)} := \left [\sO^{(1)}_\beta \right ] \in H^0 (\sO_{loc}(\sE))
\]
is well-defined.
\end{prop}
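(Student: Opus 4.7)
The plan is to deduce the proposition as a near-immediate corollary of Proposition~\ref{prop:oneloop}. The strategy has three steps: (i) show that the tree-level $\beta$-functional vanishes under the scale-invariance hypothesis, so the one-loop term carries the leading scaling behavior; (ii) invoke Proposition~\ref{prop:oneloop} to obtain existence, locality, and closedness of $\sO^{(1)}_\beta$; and (iii) pass to cohomology.

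For step (i), the hypothesis $\rho_\lambda I = I$ together with equation~\eqref{eqn:weight} implies that the classical (i.e. $\hbar^0$) component of the rescaled effective family $I_\lambda[L]$ is constant in $\lambda$. Applying $\lim_{\lambda \to 1} \lambda \tfrac{d}{d \lambda}$ to this piece yields $\sO^{(0)}_\beta[L] \equiv 0$, so $\sO_\beta[L] = \hbar\, \sO^{(1)}_\beta[L] + O(\hbar^2)$ and the one-loop term is indeed the leading contribution to the $\beta$-functional.

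For step (ii), I would apply Proposition~\ref{prop:oneloop} (Corollary 3.12 of \cite{EWY}) directly: it asserts that the limit $\sO^{(1)}_\beta := \lim_{L \to 0} \sO^{(1)}_\beta[L]$ exists as an element of $\sO_{loc}(\sE)$ and is closed under the classical BRST differential $Q + \{I,-\}$. The content of that result, which I would import without reproof, is that the family $\{\sO^{(1)}_\beta[L]\}$ satisfies a linearized homotopy RG equation obtained by differentiating the RG flow for $\{I_\lambda[L]\}$ at $\lambda = 1$, with the $L \to 0$ limit controlled by locality of counterterms; closedness then comes from extracting the $\hbar$-linear component of the differentiated scale $L$ QME.

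For step (iii), since $\sO^{(1)}_\beta$ is a cohomological-degree-zero, $(Q + \{I,-\})$-closed element of $\sO_{loc}(\sE)$, it represents a canonical class $\beta^{(1)} = [\sO^{(1)}_\beta] \in H^0(\sO_{loc}(\sE))$, which is all that ``well-defined'' means here. The only substantive content is imported from \cite{EWY} via Proposition~\ref{prop:oneloop}; I do not expect a genuine obstacle, since the remaining verification is the mechanical vanishing in step (i) and a standard passage to cohomology in step (iii).
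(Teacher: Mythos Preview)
Your proposal is correct and matches the paper's own treatment: the paper states this proposition as ``a corollary of Proposition~\ref{prop:oneloop}'' without further proof, and your three steps simply unpack that dependency explicitly (using the already-noted vanishing of $\sO^{(0)}_\beta[L]$ under scale invariance and passing to cohomology).
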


\subsection{As a local functional}\label{sect:betafunc}

Consider a BV theory $(E,Q,I)$ and the associated BRST complex $(\sO_{loc} (\sE), Q + \{I,-\})$. Cohomology classes in this complex define first order deformations of the theory. Hence, we could think of $H^0 (\sO_{loc} (\sE))$ as the space of {\it coupling constants} for the given theory. Now if $(E,Q,I)$ is translation and scale invariant, then the one-loop $\beta$-functional is BRST closed. Consequently, we can think of the one-loop $\beta$-function as a function on $H^0 (\sO_{loc} (\sE))$. 

In order to derive an explicit formula, we should fix {\it bare values} (a basis) for the space $H^0 (\sO_{loc} (\sE))$, and we should also check how $\beta^{(1)}$ transforms under a change of basis (Proposition 3.18 of \cite{EWY}). Moreover, we typically would like to distinguish a finite dimensional subspace of $H^0 (\sO_{loc}(\sE))$ and realize $\beta^{(1)}$ as a function of a $\RR^n$ valued parameter ${\bf c}$. In some good cases, e.g., $\phi^3$ theory below, we can find a one-dimensional subspace and $\beta^{(1)}$ is a function of a real number $c$.

Ideally, we would like to choose the distinguished  subspace to be the one spanned by the classical interaction $I$ under homotopy $RG$ flow.  However, as we discuss in Section \ref{sect:dynamic}, homotopy RG flow may not actually preserve this subspace and can introduce {\it dynamic coupling constants}. We can however restrict to the subspace spanned by functionals which are $\log \epsilon$ divergent in the $\epsilon \to 0$ limit to obtain a well-defined function of just a few coupling constants.

\subsection{Computational lemmas}

The gloss above illustrates the development and interpretation of the $\beta$-function/functional. However, if one is interested in computations at one-loop, they can utilize the following result.

\begin{prop}[Proposition 3.23 of \cite{EWY}]\label{prop:logdiv}
Let $(E,Q,I)$ be a translation and scale-invariant BV theory and $\{I[L]\}$ an effective quantization at one-loop. Let $k$ denote the scaling dimension of $Q^{GF}$. Then the one-loop $\beta$-functional satisfies
\[
\sO^{(1)}_\beta = k I^{CT}_{\log}.
\]
\end{prop}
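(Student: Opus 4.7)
\medskip

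\noindent\textbf{Proof plan.} The strategy is to push the rescaling action $\rho_\lambda$ through the explicit graph-theoretic expression for the one-loop quantization and isolate the piece of the $\lambda$-derivative that survives the $\epsilon \to 0$ limit. Write the effective family in the form
\[
I[L] \;=\; \lim_{\epsilon \to 0}\Big( \sum_{\Gamma}W\!\left(P_{\epsilon \to L},\, I\right) \;-\; I^{CT}(\epsilon)\Big),
\]
with the sum ranging over connected graphs of genus $\leq 1$. By Costello's analysis of one-loop counterterms, after choosing a renormalization scheme we may assume $I^{CT}(\epsilon) = I^{CT}_{\log}\cdot \log \epsilon + I^{CT}_{\mathrm{fin}}(\epsilon)$ where $I^{CT}_{\mathrm{fin}}(\epsilon)$ is a local functional with a finite $\epsilon \to 0$ limit. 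The whole argument reduces to tracking how $\log \epsilon$ is shifted by $\rho_\lambda$.

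\medskip

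\noindent First, I would record the behavior of the propagator under rescaling. Because $\rho_\lambda Q^{GF} \rho_{\lambda^{-1}} = \lambda^{k} Q^{GF}$ and $\rho_\lambda$ commutes with $Q$, the generalized Laplacian $[Q,Q^{GF}]$ scales with weight $k$; consequently the heat kernel satisfies $\rho_\lambda K_t = K_{\lambda^{-k} t}$ (up to the density weight already absorbed in the definition of $\rho_\lambda$ on $\sO(\sE)$), and hence
\[
\rho_\lambda P_{\epsilon \to L} \;=\; P_{\lambda^{-k}\epsilon \,\to\, \lambda^{-k} L}.
\]
Using the equivariance $\rho_\lambda W(P,J) = W(\rho_\lambda P,\rho_\lambda J)$ of Feynman weights together with the scale-invariance hypothesis $\rho_\lambda I = I$, I obtain, after substituting $L \mapsto \lambda^{-k}L$ in the definition of $I_\lambda[L]$ and performing the change of variables $\epsilon' = \lambda^{-k}\epsilon$ in the limit,
\[
I_\lambda[L] \;=\; \lim_{\epsilon'\to 0}\Big(\sum_\Gamma W(P_{\epsilon' \to L}, I) \;-\; \rho_\lambda I^{CT}(\lambda^{k}\epsilon')\Big).
\]

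\medskip

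\noindent Second, I would expand the counterterm contribution using the assumed logarithmic form. One has
\[
\rho_\lambda I^{CT}(\lambda^{k}\epsilon') \;=\; \rho_\lambda I^{CT}_{\log}\cdot\bigl(\log \epsilon' + k\log \lambda\bigr) \;+\; \rho_\lambda I^{CT}_{\mathrm{fin}}(\lambda^k \epsilon').
\]
Both $I^{CT}_{\log}$ and the finite part are themselves local functionals arising from the scale-invariant theory; by the usual argument that one can choose a renormalization scheme compatible with the scaling symmetry, one shows $\rho_\lambda I^{CT}_{\log} = I^{CT}_{\log}$ (modulo $\hbar^2$), and the $\lambda \to 1$ limit of the derivative of the finite piece contributes no $\log \epsilon'$ divergence and so drops out after the $\epsilon'\to 0$ limit commutes with $\lambda\partial_\lambda\!\mid_{\lambda=1}$. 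Thus
\[
I_\lambda[L] \;-\; I[L] \;=\; k\log \lambda \cdot I^{CT}_{\log} \;+\; O\bigl((\log\lambda)^2\bigr) \quad \text{modulo } \hbar^2,
\]
where the higher order in $\log \lambda$ is accounted for by the scheme-dependent finite piece.

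\medskip

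\noindent Finally, I apply $\lim_{\lambda \to 1}\lambda \partial_\lambda$ to both sides and extract the order-$\hbar$ component to arrive at $\sO^{(1)}_\beta[L] = k I^{CT}_{\log}$. Taking $L\to 0$ and invoking Proposition \ref{prop:oneloop} to justify that the limit exists and is BRST-closed, we obtain the stated identity $\sO^{(1)}_\beta = k I^{CT}_{\log}$. The main technical obstacle is the justification of the interchange of $\lim_{\epsilon\to 0}$ with $\lambda\partial_\lambda\!\mid_{\lambda=1}$ in the definition of $I_\lambda[L]$: this requires the local structure theorem for the counterterms at one loop (Costello) ensuring that the $\epsilon$-dependence is at most polynomial in $\log \epsilon$, so that only the linear $\log$ coefficient survives the derivative-then-limit procedure. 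Everything else is a matter of carefully tracking how the rescaling $\rho_\lambda$ acts on heat kernels and graph weights, which is routine given the equivariance of the Feynman calculus.
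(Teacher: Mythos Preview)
The paper does not contain a proof of this proposition: it is quoted verbatim as Proposition 3.23 of \cite{EWY} and no argument is supplied. So there is nothing in the present paper to compare your proposal against.

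That said, your argument is essentially the expected one and is correct in outline. The key computation---that $\rho_\lambda P_{\epsilon\to L} = P_{\lambda^{-k}\epsilon \to \lambda^{-k}L}$, so that rescaling the theory is equivalent to shifting the regularization parameter $\epsilon \mapsto \lambda^{-k}\epsilon$, which in turn shifts $\log\epsilon$ by $-k\log\lambda$---is exactly what drives the identification of the one-loop $\beta$-functional with $k$ times the log counterterm. One point you could sharpen: the claim $\rho_\lambda I^{CT}_{\log} = I^{CT}_{\log}$ follows directly from matching the $\log\epsilon$ coefficients in the identity $\rho_\lambda I^{CT}_{\mathrm{sing}}(\epsilon) = I^{CT}_{\mathrm{sing}}(\lambda^{-k}\epsilon)$ (which itself is forced by the equivariance of the weights and the scale-invariance of $I$), rather than requiring a separate choice of compatible renormalization scheme. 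With that in hand, the interchange of $\lim_{\epsilon\to 0}$ with $\lambda\partial_\lambda|_{\lambda=1}$ is unproblematic, since the only $\lambda$-dependence surviving the limit is the explicit $k\log\lambda\cdot I^{CT}_{\log}$ term.
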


Finally, we note a convenient property of the $\beta$-function: homotopy invariance.

\begin{prop}[Proposition 3.20 of \cite{EWY}]
The $\beta$-function is locally constant on the space of quantum field theories with fixed classical BV complex and gauge fixing operator.
\end{prop}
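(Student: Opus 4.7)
The plan is to show that along any smooth one-parameter family $\{I_t[L]\}_{t \in [0,1]}$ of quantizations with fixed classical BV data $(E,Q,I)$ and fixed gauge fixing $Q^{GF}$, the corresponding family $\sO_{\beta,t}[L]$ of $\beta$-functionals differs by a BRST-exact term, so that the cohomology class $\beta[L]$ is constant in $t$.

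First, I would differentiate the scale $L$ QME satisfied by $I_t[L]$ at each $t$. This shows that the infinitesimal variation $J_t[L] := \tfrac{d}{dt} I_t[L]$ is closed under the quantum BRST differential $Q_L + \hbar \Delta_L + \{I_t[L],-\}_L$, so it represents a tangent vector in the deformation complex at $I_t[L]$. Because the classical interaction is held fixed along the family, $J_t[L]$ is of order $\hbar$, and in particular is asymptotically local in the $L \to 0$ limit. Next, interchanging $\tfrac{d}{dt}$ with the RG-flow derivative $\lim_{\lambda \to 1} \lambda \tfrac{d}{d\lambda} \rho_\lambda(-)$ that defines $\sO_\beta$ yields
\[
\frac{d}{dt} \sO_{\beta,t}[L] = \lim_{\lambda \to 1} \lambda \frac{d}{d\lambda} \rho_\lambda J_t[\lambda^{-k} L],
\]
so the $t$-derivative of the $\beta$-functional is itself the $\beta$-functional of the closed variation $J_t$.

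The key step is then to exhibit an explicit primitive for the right-hand side. Using the compatibility of homotopy RG flow with rescaling from Lemma 3.4 of \cite{EWY}, one shows that $\rho_\lambda J_t[\lambda^{-k} L]$ is closed for each $\lambda$ with respect to the scale $L$ BRST differential, so that its $\lambda$-logarithmic derivative at $\lambda = 1$ is BRST-exact via the standard identity expressing a derivative of a cocycle along a family of equivalences as a coboundary. Integrating the resulting primitive in $t$ from $0$ to $1$ then produces a global BRST primitive relating $\sO_{\beta,0}[L]$ and $\sO_{\beta,1}[L]$, which establishes constancy of $\beta[L]$ along the path. Passing to the $L \to 0$ limit via Proposition \ref{prop:oneloop} (with Remark \ref{rmk:warning} at higher loops) transports the statement to $\sO_{loc}(\sE)$-cohomology.

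The main obstacle is the analytic control over the primitive: one must verify that the $\lambda$-antiderivative of $\rho_\lambda J_t[\lambda^{-k} L]$ extends smoothly through $\lambda = 1$, depends smoothly on $t$, and survives the $L \to 0$ limit. This is essentially the same heat-kernel analysis that underlies the existence of one-loop counterterms and of $\sO^{(1)}_\beta$ itself, so it can be handled by the same regularization scheme used to define the quantization $\{I_t[L]\}$ in the first place.
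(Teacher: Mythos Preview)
The paper does not prove this proposition at all: it is simply quoted from \cite{EWY} (as ``Proposition 3.20 of \cite{EWY}'') among the computational lemmas in Section~\ref{sec:betaoverview}, with no argument given. There is therefore no proof in the present paper against which to compare your proposal.

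On the merits of your sketch itself: the overall architecture---differentiate the QME in $t$ to see that $J_t[L]=\tfrac{d}{dt}I_t[L]$ is closed for the quantum differential, swap $\tfrac{d}{dt}$ with the scaling derivative, and then argue that the result is exact---is the natural one. But the key step, where you assert that the $\lambda$-logarithmic derivative of $\rho_\lambda J_t[\lambda^{-k}L]$ at $\lambda=1$ is BRST-exact ``via the standard identity expressing a derivative of a cocycle along a family of equivalences as a coboundary,'' is not yet a proof. You need to say precisely which family of chain maps is in play and produce the homotopy explicitly: $\rho_\lambda$ together with RG flow intertwines the quantum observable complex at $I_t[L]$ with the one at $(I_t)_\lambda[L]$, not a fixed complex with itself, so the Cartan-type formula you are alluding to has to be set up with care. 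In addition, the quantum differential $Q_L+\hbar\Delta_L+\{I_t[L],-\}_L$ for which $J_t[L]$ is closed depends on $t$; when you integrate your primitive from $0$ to $1$ you must track this $t$-dependence, or else the ``global primitive'' you obtain will not be a primitive for the correct differential at either endpoint. These are genuine points to be filled in rather than routine heat-kernel analysis.
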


As an example, the first and second order formulations of Yang-Mills theory are homotopy equivalent, so the $\beta$-function can be computed using the first order formalism.

\subsection{First examples: scalar field theory}\label{sect:scalar}

Building on \cite{Cos1}, we describe the $\beta$-functional/function for certain scalar field theories on $\RR^n$. That is, our theory has field content
\[
\sE : C^\infty_c (\RR^n) \xrightarrow{\; D \;} C^\infty_c (\RR^n)[-1],
\]
where $D$ is the Laplacian, and action 
\[
S(\phi) = \int_{\RR^n} \phi D \phi + I (\phi), \text{ for } \phi \in \sE.
\]
In this setting the propagator is determined by the heat kernel $K_\ell (x_1, x_2)$ on $\RR^n$.

We will choose a renormalization scheme where $\epsilon^{-1}$ and $\log \epsilon$ are purely singular. Moreover, we will consider the space $\cS \subset H^{0} (\sO_{loc} (\sE))$ spanned by those functionals which are $\log \epsilon$ divergent in the $\epsilon \to 0$ limit.  In the theories below, one can verify that $\cS$ is one-dimensional and we will choose the classical interaction functional $I$ as a basis vector, i.e., $\cS = \{c I : c \in \RR\}$.

\subsubsection{$\phi^4$ theory on $\RR^4$}

Consider the scalar theory on $\RR^4$ with interaction
\[
I(\phi)= I_{0,4}  = \frac{1}{4!} \int_{x \in \RR^4} \phi(x)^4 .
\]
The only term that contributes to one-loop $\log \epsilon$ divergence, and hence the one-loop $\beta$-functional via Proposition \ref{prop:logdiv}, comes from the following graph.

\vspace{1ex}

\begin{center}
        \begin{tikzpicture}
 \node (ul) at (-1.5,0.75) {};
  \node (ll) at (-1.5,-0.75) {};
 \node (ur) at (4.5,0.75) {};
 \node (lr) at (4.5,-0.75) {};
\node at (-2.5,0) {$\gamma_4 :$};

\node[circle] (l) at (0,0) [draw] {$I_{0,4}$};
\node[circle] (r) at (3,0) [draw] {$I_{0,4}$};
\draw[thick] (l.25) --  (r.155) node[midway,above] {$P_{\epsilon \to L}$};
\draw[thick] (l.-25) --  (r.-155) node[midway,below] {$P_{\epsilon \to L}$};
\draw[dashed,thick] (ul) -- (l.180);
\draw[dashed,thick] (ll) -- (l.180);
\draw[dashed,thick] (ur)--(r.0);
\draw[dashed,thick] (lr)--(r.0);

        \end{tikzpicture}
\end{center}

The corresponding counterterm is computed by Costello in Section 4, Chapter 4 of \cite{Cos1}. Therefore we have,
\[
\sO_\beta^{(1)} = I^{CT}_{1,4} (\epsilon) = - \pi^{-2} 2^{-8} c^2 \log \epsilon \int_{x \in \RR^4} \phi(x)^4.
\]

\subsubsection{$\phi^3$ theory on $\RR^6$}

Again, in the case 
\[
I(\phi) = I_{0,3} = \frac{1}{3!} \int_{x \in \RR^6} \phi(x)^3,
\]
there is only one term that contributes to the one-loop $\log \epsilon$ divergence: the wheel with three vertices.

\vspace{1ex}

\begin{center}
        \begin{tikzpicture}
 \node (u) at (2.25, 3.45) {};
  \node (ll) at (-1.3,-1) {};
 \node (lr) at (5.8,-1) {};
\node at (-1.75,1.5) {$\gamma_3 :$};

\node[circle] (l) at (0,0) [draw] {$I_{0,3}$};
\node[circle] (r) at (4.5,0) [draw] {$I_{0,3}$};
\node[circle] (t) at (2.25,1.95) [draw] {$I_{0,3}$};

\draw[thick] (l) --  (r) node[midway,below] {$P_{\epsilon \to L}$};
\draw[thick] (l) --  (t) node[midway,above left] {$P_{\epsilon \to L}$};
\draw[thick] (t) -- (r) node[midway, above right] {$P_{\epsilon \to L}$};
\draw[dashed,thick] (u) -- (t);
\draw[dashed,thick] (ll) -- (l);
\draw[dashed,thick] (lr)--(r);
        \end{tikzpicture}
\end{center}

Let us compute the weight of this graph with respect to the propagator $P_{\epsilon \to L}$.

\begin{align*}
 \MoveEqLeft[2] \omega_{\gamma_3} \\[1ex]
 =& \int_{\ell_1 , \ell_2 , \ell_3 \in [\epsilon, L]}  \int_{x_1, x_2, x_3 \in \RR^6} K_{\ell_1} (x_1, x_2) K_{\ell_2} (x_2, x_3) K_{\ell_3} (x_3, x_1)  d{\bf x} d{\bf \ell}\\[1ex]
=& \frac{1}{(4 \pi)^9} \int_{\ell_1 , \ell_2 , \ell_3 \in [\epsilon, L]}  \int_{x_1, x_2, x_3 \in \RR^6} (\ell_1 \ell_2 \ell_3 )^{-3} e^{- \lVert x_1 - x_2 \rVert^2/4 \ell_1 - \lVert x_2-x_3 \rVert^2/4 \ell_2 - \lVert x_3-x_1 \rVert^2/4 \ell_3}   d{\bf x} d{\bf \ell}\\[1ex]
\overset{(1)}{=}&\frac{3}{8(4 \pi)^9} \int_{\ell_1 , \ell_2 , \ell_3 \in [\epsilon, L]}  \int_{y, z_1, z_2 \in \RR^6} (\ell_1 \ell_2 \ell_3)^{-3}e^{-\lVert z_1 \rVert^2/\ell_1 - \lVert z_2 \rVert^2/\ell_2 - \lVert z_1 +z_2 \rVert^2/\ell_3}  dy d{\bf z} d {\bf \ell}\\[1ex]
\overset{(2)}{=}& \frac{3}{2^{21} \pi^3} \int_{\ell_1 , \ell_2 , \ell_3 \in [\epsilon, L]} (\ell_1 \ell_2 \ell_3)^{-3} \left ( \ell_1^{-1} \ell_2^{-1} + \ell_1^{-1} \ell_3^{-1} + \ell_2^{-1} \ell_3^{-1} \right )^{-3} d {\bf \ell}\\[1ex]
=& \frac{3}{2^{21} \pi^3} \int_{\ell_1 , \ell_2 , \ell_3 \in [\epsilon, L]} \frac{1}{(\ell_1 + \ell_2 +\ell_3)^3} d {\bf \ell}.
\end{align*}

\vspace{1ex}

\noindent
In the equality $(1)$ we have used the coordinate substitution
\[
z_1 = \frac{1}{2} (x_2-x_3) , \quad z_2 = \frac{1}{2} (x_3 -x_1), \quad y = \frac{1}{2} (x_1+x_2+x_3).
\]
Further, in equality $(2)$ we have used the equality
\[
\int_{y, z_1, z_2 \in \RR^6} (\ell_1 \ell_2 \ell_3)^{-3}e^{-\lVert z_1 \rVert^2/\ell_1 - \lVert z_2 \rVert^2/\ell_2 - \lVert z_1 +z_2 \rVert^2/\ell_3}  dy d{\bf z} = \pi^6 (\det A)^{-3},
\]
where $A$ is the matrix corresponding to the quadratic form
\[
\lVert z_1 \rVert^2/\ell_1 + \lVert z_2 \rVert^2/\ell_2 + \lVert z_1 +z_2 \rVert^2/\ell_3.
\]

Actually, all we care about is the $\log \epsilon$ divergence of the graph weight $\omega_{\gamma_3}$:
\[
\mathrm{sing}_{\log \epsilon} \omega_{\gamma_3} = \frac{3}{2^{22} \pi^3} \log \epsilon .
\]
Therefore, after accounting for graph automorphisms, we find that the $\log \epsilon$ counterterm and hence the one loop $\beta$ functional is given by
\[
\sO_\beta^{(1)} = 2 I^{CT}_{1,3}(\epsilon) =  2^{-21} \pi^{-3} c^3 \log \epsilon \int_{x \in \RR^6} \phi (x)^3.
\]

\subsubsection{Turning on a mass} 

When one introduces a mass term, $m^2 \int_{\RR^n} \phi (x)^2$, the theory is often no longer scale invariant. One could choose to ignore this subtlety and formally compute the one loop $\beta$-functional as the $\log \epsilon$ divergence at one loop.

In a massive theory, the heat kernel (and hence propagator) changes as follows
\[
K_\ell^m (x_1 , x_2) = K_\ell^{m=0} (x_1, x_2) e^{-tm^2},
\]
where $K_\ell^m$ is the kernel for the massive theory and $K_\ell^{m=0}$ is the kernel for the massless theory.

One can then show directly that for any admissible graph $\gamma$ and interaction $I$, we have
\[
\mathrm{sing}_{\{\epsilon^{-1} , \log \epsilon\}} \omega (P^m_{\epsilon \to L}, I) (\phi) = \mathrm{sing}_{\log \epsilon} \omega (P^{m=0}_{\epsilon \to L}, I) (\phi) + O (\epsilon^{-1}).
\]
Consequently, a mass introduces no further $\log \epsilon$ corrections and hence doesn't change the one loop $\beta$-functional/function.

\subsubsection{Dynamic coupling constants}\label{sect:dynamic}

We note that renormalization/homotopy renormalization group flow can introduce {\it dynamic coupling constants}. That is, homotopy RG flow does not preserve the $\RR$-linear subspace of $\sO_{loc} (\sE)$ spanned by the original (classical) interaction $I^{CL} = \sum_n I_{0,n}$.  This phenomenom already appears in $\phi^3$ theory on $\RR^6$ (and $\phi^4$ theory on $\RR^4$).

Indeed, in $\phi^3$ theory on $\RR^6$, we find that
\[
I_{1,2}^{CT}[\epsilon] \sim \epsilon^{-1} \phi (x)^2.
\]
This term could be interpreted as a {\it dynamic mass}. As noted above, these mass terms are not scale invariant; in $\phi^3$ theory this term has scaling weight 2.

There are (at least) two ways to account for this defect/feature of  renormalization in terms of the $\beta$-functional.  First, one could impose additional symmetry so that these dynamic terms are not invariant.  Secondly, one could note that flow generated by the $\beta$-functional  (suitably interpreted as a vector field on $\sO_{loc} (\sE)$) preserves the subspace of local functionals which are $\log \epsilon$ divergent in the $\epsilon \to 0$ limit. We will take the second approach in what follows; the former way is illustrated in the case of Yang-Mills in \cite{EWY}.

\section{The one-loop $\beta$-function for the $\sigma$-model}\label{sect:betasigma}

We now use the framework from the preceding section to the compute the one-loop $\beta$ function for our theory.
Recall that the classical action consists of two pieces, $S = S_h + S_X$. Therefore, there are two flavors of classical interaction terms (vertex types): $I^h_{0,k}$ and $I^X_{0,\ell}$.  As noted in Section \ref{sec:prop}, the propagator only pairs the $\phi$ fields, and one can check that the (singular) BV operator $\Delta_L$ pairs $\phi$ and $\psi$ fields.
Diagramatically, we can represent these two functionals/operators as follows.

\vspace{1ex}

\begin{center}
        \begin{tikzpicture}
        
\node[circle] (h) at (-11,0) [draw] {$I^h_{0,k}$};
\node[circle] (x) at (-7,0) [draw] {$I^X_{0,\ell}$};
\draw[thick] (-11,1.25) -- (h.90);
\draw[thick] (-11.887,-0.887) --(h.225);
\draw[thick] (-10.113,-0.887) -- (h.-45);
\draw[thick] (-12.25,0)--(h.180);
\draw[thick] (-9.75,0)--(h.0);
\draw[thick] (-11.887,0.887)--(h.135);
\draw[thick] (-10.113,0.887)--(h.45);

\node at (-11,-0.75) {$\dotsb$};
\node at (-11,1.5) {$\partial \phi$};
\node at (-12.15, 1.15) {$\overline{\partial} \phi$};
\node at (-9.9,-1.05) {$\phi$};
\node at (-12.5,0) {$\phi$};
\node at (-12.05,-1.05) {$\phi$};
\node at (-9.5,0) {$\phi$};
\node at (-9.9, 1.05) {$\phi$};

\draw[thick] (-7,1.25) -- (x.90);
\draw[thick] (-7.887,-0.887) --(x.225);
\draw[thick] (-6.113,-0.887) -- (x.-45);
\draw[thick] (-8.25,0)--(x.180);
\draw[thick] (-5.75,0)--(x.0);
\draw[thick] (-7.887,0.887)--(x.135);
\draw[thick,decorate, decoration={snake, segment length=5,amplitude=1}] (-6.113,0.887)--(x.45);

\node at (-7,-0.75) {$\dotsb$};
\node at (-7,1.5) {$\phi$};
\node at (-8.05, 1.05) {$\phi$};
\node at (-5.9,-1.05) {$\phi$};
\node at (-8.5,0) {$\phi$};
\node at (-8.05,-1.05) {$\phi$};
\node at (-5.5,0) {$\phi$};
\node at (-5.9, 1.05) {$\psi$};

 \node (l1) at (-4,0) {$\phi$};
 \node (m1) at (-3,0) {$\times$};
 \node at (-3,-0.6) {$P_{\epsilon \to L}^h$};
\node (r1) at (-2,0) {$\phi$};

\draw[thick] (l1)--(r1);

 \node (l2) at (-1,0) {$\phi$};
 \node (m2) at (0,0) {$\times$};
 \node at (0,-0.6) {$\Delta_L$};
\node (r2) at (1,0) {$\psi$};

\draw[thick] (l2) -- (0,0);
\draw[thick,decorate, decoration={snake, segment length=5,amplitude=1}] (-0.03,0) --(r2);

        \end{tikzpicture}
\end{center}

\subsection{Reduction to $\beta_h$: a cohomological calculation}

We decompose the one-loop $\beta$ function into two pieces and show that only the component depending on the metric $h$ survives at the level of cohomology.

Note that in the proof of Proposition \ref{CME}, we showed that the two pieces of the action functional bracket to zero, i.e.,  $\{S_X , S_h \} = 0$.
In particular, the differentials $\{S_X, - \}$ and $\{S_h, -\}$ commute.
The following is an immediate corollary.

\begin{cor}
We can decompose $\beta^{(1)}$ as
\[
\beta^{(1)} = \beta^{(1)}_h + \beta^{(1)}_X \in H^{0} (\sO_{loc} (\sE)),
\]
where $\beta^{(1)}_h$ (resp. $\beta^{(1)}_X$) only involves diagrams with vertex type $I^h_{0,k}$ (resp. $I^X_{0,k}$).
\end{cor}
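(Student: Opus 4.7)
The strategy is to combine the diagrammatic structure of $\sO^{(1)}_\beta$ with the bicomplex structure on $\sO_{loc}(\sE)$ induced by the commuting operators $\{S_h,-\}$ and $\{S_X,-\}$.

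First, I would classify the one-loop Feynman graphs contributing to $\sO^{(1)}_\beta$ by their vertex content, writing $a$ and $b$ for the numbers of vertices of type $I^h_{0,k}$ and $I^X_{0,\ell}$ respectively. Since the propagator $P_{\epsilon\to L}^h$ pairs only $\varphi$--$\varphi$ legs while each $I^X$ vertex carries a single $\psi$-leg that must be external, a graph with vertex content $(a,b)$ produces a functional of $\psi$-degree exactly $b$. At the level of functionals this yields
\[
\sO^{(1)}_\beta = \sO_h + \sO_X + \sO_{\mathrm{mix}},
\]
where $\sO_h$ collects the graphs with $b=0$, $\sO_X$ collects the graphs with $a=0$, and $\sO_{\mathrm{mix}}$ collects those with $a,b\geq 1$.

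Next, the commutation $[\{S_h,-\},\{S_X,-\}] = 0$ endows $\sO_{loc}(\sE)$ with a bicomplex structure that is refined by the vertex-content bigrading $(a,b)$, and whose total differential is $Q+\{I,-\}$. The key step is to show that $\sO_{\mathrm{mix}}$ represents the zero class in $H^0(\sO_{loc}(\sE))$. I would argue this via a standard spectral-sequence collapse: the commutation ensures off-diagonal classes in the bicomplex admit primitives supported on the two axes. Equivalently, one constructs a BV-primitive $\eta$ for $\sO_{\mathrm{mix}}$ iteratively, starting at the smallest bidegree $(1,1)$ and correcting the defect using the QME satisfied by $\sO^{(1)}_\beta$ at each stage, consuming one power of $\{I^h,-\}$ or $\{I^X,-\}$ per iteration until the mixed content is exhausted. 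Setting $\beta^{(1)}_h := [\sO_h]$ and $\beta^{(1)}_X := [\sO_X]$ in $H^0(\sO_{loc}(\sE))$ then gives the stated equality $\beta^{(1)} = \beta^{(1)}_h + \beta^{(1)}_X$.

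The main obstacle is the exactness of $\sO_{\mathrm{mix}}$: although the commutation of the two bracket operators morally trivializes the off-diagonal cohomology, producing an explicit primitive requires careful bookkeeping of how $\{I^h,-\}$ acts on pure-$X$ diagrams (introducing an $I^h$ vertex and thereby a mixed correction) and symmetrically how $\{I^X,-\}$ acts on pure-$h$ diagrams, and reconciling these corrections with the strata of $\sO_{\mathrm{mix}}$ through the bicomplex zig-zag. Proposition \ref{prop:obsdef} provides a useful sanity check: the deformation complex is already a direct sum $\Riem(X,h)[1] \oplus \Omega^3_{cl,X}$, so any cohomology class splits uniquely into two components, and the diagrammatic interpretation of these summands matches the pure-$h$ and pure-$X$ contributions respectively.
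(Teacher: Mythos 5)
Your bookkeeping by vertex content is fine (and your observation that each $I^X$ vertex's $\psi$-leg must be external is correct, since the propagator only pairs $\varphi$-legs), but the heart of your argument --- the exactness of $\sO_{\mathrm{mix}}$ --- is both unproven and rests on a false principle. The fact that $\{S_h,-\}$ and $\{S_X,-\}$ commute (Proposition \ref{CME}) makes $(\sO_{loc}(\sE),\{S,-\})$ a double complex, but it does \emph{not} ``ensure off-diagonal classes admit primitives supported on the two axes'': a cocycle of mixed bidegree in a bicomplex with commuting differentials need in general be neither exact nor cohomologous to a sum of cocycles of pure type. The iterative zig-zag you sketch would require a vanishing statement for cohomology in mixed bidegrees which you never establish, and the quantum master equation does not supply it. So, as written, the proposal does not prove the corollary. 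Your closing appeal to Proposition \ref{prop:obsdef} also does not help: knowing that the invariant deformation complex splits as $\Riem(X,h)[1]\oplus\Omega^3_{cl,X}$ says nothing about which Feynman diagrams represent a given class, and in particular the two summands do not correspond to ``pure-$h$'' versus ``pure-$X$'' diagrams.

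The paper needs no such argument, because at one loop there simply are no mixed contributions to kill: by Proposition \ref{prop:logdiv} the one-loop $\beta$-functional is a multiple of the $\log\epsilon$ counterterm, and the only $\log\epsilon$-divergent one-loop diagrams are the single-vertex tadpoles, since wheels with two or more vertices are UV finite (this is exactly the power-counting proposition proved immediately after the corollary). A single vertex is of type $I^h_{0,k}$ or of type $I^X_{0,\ell}$, so the mixed part vanishes identically rather than merely being exact; the commutation $\{S_h,S_X\}=0$ is what the paper invokes to make the two summands separately meaningful and compatible with the differential. The spectral sequence of the bicomplex does appear in the paper, but one step later and for a different purpose: via Lemma \ref{lem:difflemma} it shows that $\beta^{(1)}_X$ vanishes in cohomology (Proposition \ref{prop:betaisbetah}), not that a mixed term is exact. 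If you want to salvage your route, replace the ``spectral-sequence collapse'' for $\sO_{\mathrm{mix}}$ by the UV-finiteness/power-counting argument for multi-vertex wheels; that is where the actual content lies.
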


\begin{prop}\label{prop:betaisbetah}
The term $\beta^{(1)}_X$ vanishes. Consequently,
\[
\beta^{(1)} = \beta^{(1)}_h \in H^0 (\sO_{loc} (\sE)).
\]
\end{prop}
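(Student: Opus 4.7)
My plan is to show $\beta^{(1)}_X = 0$ already at the cochain level, from which the statement in cohomology is immediate. By the preceding corollary together with Proposition \ref{prop:logdiv}, $\beta^{(1)}_X$ is proportional to the $\log\epsilon$-counterterm assembled from one-loop graphs whose every vertex is of type $I^X_{0,k}$. The first task is to pin down what such graphs can look like: each $I^X_{0,k}$ vertex has exactly one $\psi$-leg and $k\geq 2$ $\varphi$-legs, while the propagator $P^h_{\epsilon\to L}$ pairs only $\varphi$-legs. Thus every $\psi$-leg is external, the unique cycle of the one-loop graph is a $\varphi$-cycle of some length $v\geq 1$, and any tree branches hang off the cycle via $\varphi$-propagators.

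I would then split the analysis by the cycle length $v$. For $v\geq 2$, a power-counting argument rules out any $\log\epsilon$ divergence: no derivatives appear at $I^X$-vertices or in $P^h$, the spatial Gaussian integration over the $v-1$ relative positions yields (by a matrix-tree calculation) a factor proportional to $\prod t_i/(t_1+\cdots+t_v)$, and the scale integral reduces to
\[
\int_\epsilon^L\!\cdots\!\int_\epsilon^L \prod_{i=1}^v dt_i\; \frac{1}{t_1+\cdots+t_v} .
\]
Rescaling $t_i = \tau u_i$ with $\sum u_i = 1$ turns this into $\int d\tau\, \tau^{v-2}$ times an integral over a compact simplex, which is finite as $\epsilon\to 0$ precisely when $v\geq 2$. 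Tree branches attached to the loop contribute $\epsilon$-independent factors and cannot reinstate a divergence.

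The only remaining case is the tadpole $v = 1$, where indeed $P^h(z,z) = \frac{\log(L/\epsilon)}{4\pi}\, h$ (with $h\in\fg_X\tensor\fg_X$ the inverse metric obtained from $Q^{GF} = \star\tensor h^\sharp$ acting on the heat kernel) contributes a genuine $\log\epsilon$-divergent local functional proportional to $h^{\mu\nu}\ell_k(\partial_\mu,\partial_\nu, -, \ldots, -)$ at the tadpole vertex. The decisive algebraic input here is the graded-symmetry of the $L_\infty$-brackets: since $\fg_X = T_X[-1]$ is concentrated in odd degree, graded-symmetry of $\ell_k : \Sym^k \fg_X \to \fg_X$ translates into full antisymmetry of the underlying bundle map $\wedge^k T_X \to T_X$. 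Contracting two antisymmetric slots against the symmetric inverse metric $h^{\mu\nu}$ produces zero. Tree branches attached to the remaining $\varphi$-legs of the tadpole vertex do not spoil this, since the antisymmetry lives in the two looped slots alone. Hence $I^{CT}_{\log,X} = 0$, so $\beta^{(1)}_X = 0$.

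The main obstacle I anticipate is the dictionary check in the tadpole step. One must trace carefully through Section \ref{sec:prop} to confirm that the $\fg_X\tensor\fg_X$-factor in the propagator is really the symmetric inverse metric (rather than some sign-twisted variant), and one must verify that the Levi-Civita-adapted $L_\infty$-structure from \cite{gg2} indeed factors through $\wedge^k T_X \to T_X$ as a bundle map (this is the standard graded-to-ungraded symmetry translation, but it is worth being explicit because of the curving $\ell_0$ and the presence of the $\Omega^*_X$-module structure). Once these identifications are secured, the symmetric-against-antisymmetric collapse delivers the vanishing, and the power counting in the previous step takes care of everything else.
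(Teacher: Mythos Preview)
Your power-counting step for cycles of length $v\geq 2$ is fine and in fact parallels what the paper proves later. The fatal issue is the tadpole step. You assert that because $\fg_X = T_X[-1]$ sits in odd degree, the brackets $\ell_k$ factor through $\wedge^k T_X$, so that contracting two slots with the symmetric inverse metric kills the tadpole. This is a sign slip. In the $L_\infty$ convention used here the brackets are graded \emph{anti}symmetric on $\fg_X$, and graded antisymmetry on odd elements is ordinary \emph{symmetry}: $\ell_k(\ldots,a,b,\ldots)=-(-1)^{|a||b|}\ell_k(\ldots,b,a,\ldots)=+\ell_k(\ldots,b,a,\ldots)$ for $|a|=|b|=1$. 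Equivalently, the defining identification $\clie^*(\fg_X)\cong \Omega^*_X(\sJ_X)$ forces $\clie^*(\fg_X)=\widehat{\Sym}(T_X^\vee)$ (ordinary symmetric, since $\fg_X^\vee[-1]=T_X^\vee$ sits in degree $0$), and the dual coderivation is built from symmetric maps $\Sym^k T_X\to T_X$. Concretely, with the Levi--Civita choice one has $\ell_2(e_i,e_j)=\Gamma^m_{ij}e_m$, manifestly symmetric; the tadpole trace $h^{ij}\Gamma^m_{ij}$ is nonzero in general. Hence the $I^X$-tadpole does contribute a genuine $\log\epsilon$ counterterm, and $\beta^{(1)}_X$ is \emph{not} zero at the cochain level.

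The paper accordingly does not attempt a cochain-level vanishing. Its argument is purely cohomological: one views $\bigl(\sO_{\rm loc}(\sE),\{S_X,-\}+\{S_h,-\}\bigr)$ as a double complex, uses Lemma~\ref{lem:difflemma} to see that $\beta^{(1)}_h$ survives to the $E_1$ page (part (a)) and is closed there (part (b)), and then observes that the $E_2$ page is concentrated on a single row so the spectral sequence collapses. This yields $[\beta^{(1)}_h+\beta^{(1)}_X]=[\beta^{(1)}_h]$ in $H^0$, which is the statement of the proposition. To repair your approach you would need an entirely different reason for the tadpole contribution to be exact (not zero), and at that point you are essentially reproducing the cohomological argument.
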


To prove the proposition, we will proceed via a spectral sequence argument, where we consider 
\[
(\sO_{loc} (\sE), \{S,-\} )=(\sO_{loc} (\sE), \{S_x,-\} + \{S_h, -\})
\] as a double complex.

\begin{lem}\label{lem:difflemma}
For the decomposition $\beta^{(1)} = \beta^{(1)}_h + \beta^{(1)}_X$, we have
\begin{itemize}
\item[(a)] $\left \{S_X , \beta^{(1)}_h \right \} = 0$;
\item[(b)] $\left \{S_h, \beta^{(1)}_h \right \} = \pm \left \{S_X, \beta^{(1)}_X \right \}$.
\end{itemize}
\end{lem}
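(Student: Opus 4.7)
The strategy combines two ingredients: BRST-closedness of $\beta^{(1)}$ in the deformation complex $(\Oloc(\sE), \{S,-\})$, and the decomposition of the BV differential as $\{S,-\} = \{S_h,-\} + \{S_X,-\}$ into anticommuting pieces (cf.\ the proof of Proposition~\ref{CME}).

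For part (a), the plan is to recognize $\beta^{(1)}_h$ as a Chevalley--Eilenberg cocycle for $\fg_X$. By construction, $\beta^{(1)}_h$ is the $\log\epsilon$-coefficient of wheel diagrams whose vertices are all of type $I^h_{0,k}$. These vertices pair external $\varphi$-legs against the $\infty$-jet of the metric $h$, and the propagator uses $h^\sharp$; both ingredients are assembled from globally defined tensorial data on $X$. Consequently, the Feynman integration produces a functional of the form $\mathfrak{R}(\alpha)$ for some global section $\alpha \in \Gamma(\Sym^2 T_X^\vee)$. By the same observation used in the proof of Proposition~\ref{CME} (jets of globally defined tensors are flat sections of the jet bundle, hence CE-closed) and because $\{S_X,-\}$ acts on $\varphi$-only functionals as the CE differential for $\fg_X$, we obtain $\{S_X, \beta^{(1)}_h\} = 0$.

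For part (b), I would begin with $\{S, \beta^{(1)}\} = 0$ (BRST-closedness, via Proposition~\ref{prop:oneloop}) and expand:
\[
0 = \{S_h, \beta^{(1)}_h\} + \{S_X, \beta^{(1)}_h\} + \{S_h, \beta^{(1)}_X\} + \{S_X, \beta^{(1)}_X\}.
\]
By part (a) the second summand vanishes. To extract (b) as a separate identity, I would exploit a natural bigrading of $\Oloc(\sE)$---for instance, by the $\psi$-valence, which is preserved by $\{S_X,-\}$ and decreased by one by $\{S_h,-\}$---to separate the remaining three terms into independent equations. Pairing $\{S_h,\beta^{(1)}_h\}$ with the top $\psi$-component of $\{S_X,\beta^{(1)}_X\}$ and isolating $\{S_h,\beta^{(1)}_X\}$ in the complementary bigraded summand, one obtains $\{S_h,\beta^{(1)}_h\} = \pm\{S_X,\beta^{(1)}_X\}$ together with the vanishing of the remaining cross-term, yielding (b) with the appropriate sign.

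The main technical obstacle is the identification of $\beta^{(1)}_h$ with $\mathfrak{R}(\alpha)$ for a globally defined tensor $\alpha$; this requires tracking the Feynman calculus in terms of globally defined geometric operations on $X$---essentially the same bookkeeping that, once carried out explicitly, identifies $\alpha$ with the Ricci tensor in Theorem~A. The bigrading argument underlying (b) is also delicate because $\{S_h,-\}$ and $\{S_X,-\}$ square to zero only modulo the curving $F_{\ell_1}$ (cf.\ Proposition~\ref{CME}), so a clean spectral-sequence-style separation into bigraded pieces needs verification that the curving contribution does not spoil the matching.
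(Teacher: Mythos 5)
The paper states this lemma without proof, so I can only judge your argument on its own terms; as written, both halves have genuine gaps. For (a), your conclusion would follow if $\beta^{(1)}_h$ were \emph{literally} of the form $\mathfrak{R}(\alpha)$ for a globally defined $\alpha \in \Sym^2(T_X^\vee)$, but the Feynman calculus does not hand you that: the log-divergent part of the pure-$h$ diagrams is (after discarding the UV-finite wheels) the sum of tadpole weights $\sum_k \cT_k$, which is the jet expansion $j_\infty(h)$ with two jet slots traced against the \emph{pointwise} tensor $h^\sharp$ supplied by the propagator. Tracing a horizontal (flat) section of the jet bundle against a tensor that is not itself a jet does not produce a horizontal section, so this functional is not manifestly $\{S_X,-\}$-closed; indeed the paper itself only identifies $\sum_k \cT_k$ with $\mathfrak{R}(\mathrm{Ric})$ (up to constant) \emph{up to cohomology} (Lemma \ref{lem:cohomologous}, Theorem \ref{thm:main}), which is strictly weaker than the on-the-nose vanishing $\{S_X,\beta^{(1)}_h\}=0$ asserted in (a). So the step you defer as ``bookkeeping'' is not merely the identification of $\alpha$ with $\mathrm{Ric}$; it is the actual content of (a). A route that could work is a Leibniz/Ward-type argument applied directly to the diagram weights: $\{S_X,-\}$ acts on a weight through the vertices and through the Lie-theoretic factor of the propagator, the vertices are killed because $\{S_X, I^h\}=0$ (globality of $h$), and one must then control the propagator contribution using the compatibility $\nabla h = 0$ of $h^\sharp$ with the $L_\infty$-structure chosen from the Levi-Civita connection (as in the gauge-fixing section).

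For (b), the bigrading you propose does not separate the terms the way you claim. With the $\psi$-valence grading (which you correctly note is preserved by $\{S_X,-\}$ and lowered by one by $\{S_h,-\}$), $\beta^{(1)}_h$ sits in valence $0$, so $\{S_h,\beta^{(1)}_h\}$ vanishes identically (there is no $\psi$-slot for the bracket to pair with), while $\{S_X,\beta^{(1)}_X\}$ sits in valence $\geq 1$: the two sides of (b) lie in \emph{different} bigraded summands, so ``pairing'' them via the grading is not coherent. What the valence-zero component of $\{S,\beta^{(1)}\}=0$ actually gives is $\{S_X,\beta^{(1)}_h\} + \{S_h,\beta^{(1),[1]}_X\} = 0$, a relation between the two cross terms — precisely the term $\{S_h,\beta^{(1)}_X\}$ that your separation leaves untreated. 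Under your grading, (b) reduces to the nontrivial claim that $\{S_X,\beta^{(1)}_X\}$ vanishes (or is at least $\{S_X,-\}$-exact so that it dies on the $E_1$ page), and that requires its own argument, again presumably diagrammatic. (Your worry about the curving is not the problem: the paper records in Section \ref{sect:obsdef} that $\{S_h,-\}$ and $\{S_X,-\}$ commute and square to zero on this complex; the problem is that the proposed bigraded separation does not yield the stated identity.)
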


\begin{proof}[Proof of Proposition \ref{prop:betaisbetah}]
Consider the spectral sequence of the relevant double complex with $E_1$ page given by $H^\ast (\sO_{loc} (\sE), \{S_X, -\})$. Via the preceding lemma (part (a)), we see that $\beta^{(1)}_h$ is closed and hence survives to the $E_1$ page.  Part (b) of the lemma further implies that $[\beta^{(1)}_h]_{E_1}$ is closed with respect to the differential and hence survives to the $E_2$ page. The $E_2$ page is concentrated on a single row, so the spectral sequence collapses. Hence, we have
\[
\beta^{(1)} = \left [\beta^{(1)}_h + \beta^{(1)}_X \right ]_{E_\infty} = \left [\beta^{(1)}_h \right ]_{E_2} = \beta^{(1)}_h.
\]
\end{proof}

\subsection{Computation of $\beta_h$.}

We now compute $\beta_h$ by reducing to a single Feynman weight computation and identifying the result in terms of geometric data on the target manifold $X$.

Recall that the one-loop quantization consisted of a family of functionals of the form $\{I[L] + \hbar J[L]\}$ where $I[L]$ was the naive quantization defined in terms of the weight expansion of the classical interaction via the propagators, and $J[L]$ was a quantum correction term we introduced to ensure the theory satisfied the QME.
Since we are working modulo $\hbar^2$, the quantity $\hbar J[L]$ is defined in terms of a Feynman expansion over genus zero graphs, that is, trees.
Hence, it has a well-defined $\epsilon \to 0$ limit and does not contribute to the divergences. 

Thus, the only part of the theory that contributes to the $\beta$-function is the na\"{i}ve quantization 
\ben
I[L] = \lim_{\epsilon \to 0} W(P_{\epsilon \to L}, I + I^{CT}(\epsilon)) .
\een

\begin{prop}
If $\Gamma$ is a wheel with at least two vertices then the weight $W_{\Gamma}(P_{\epsilon \to L}, I)$ has well-defined $\epsilon \to 0$ limit. 
That is, these graphs are UV finite, and hence do not contribute any counterterms.
\end{prop}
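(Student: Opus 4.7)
The strategy is a heat-kernel power count on $\Sigma = \CC$, specific to real dimension two. Fix a wheel $\Gamma$ with $n \ge 2$ vertices $v_1,\ldots,v_n$, cyclic internal edges $e_1,\ldots,e_n$, and a Wick decoration of the external legs. Each internal edge carries $P_{\epsilon\to L}(x_i,x_{i+1}) = \int_\epsilon^L (4\pi t)^{-1}e^{-|x_i-x_{i+1}|^2/4t}dt$ (tensored with $h$), possibly with $\partial$ or $\bar\partial$ when the leg came from the $\partial\varphi$ or $\bar\partial\varphi$ slot of an $I^h$ vertex. The weight is then an iterated integral over proper times $t_i \in [\epsilon,L]$ and vertex positions $x_i \in \CC$ with a Gaussian integrand in the $x_i$'s.

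First I would treat the case where no internal leg carries a derivative. Setting $y_i = x_i - x_1$ for $i \ge 2$, the Gaussian in $y_2,\ldots,y_n \in \RR^2$ has quadratic form the cycle Laplacian; by the matrix-tree theorem its determinant is $(t_1+\cdots+t_n)/(t_1\cdots t_n)$, so the position integration contributes $(4\pi)^{n-1}(t_1\cdots t_n)/(t_1+\cdots+t_n)$. Combining this with the $\prod_i(4\pi t_i)^{-1}$ prefactor yields
\[
W_\Gamma(\Phi) \;\propto\; \int_{\CC} F(x_1)\,dx_1 \int_{[\epsilon,L]^n} \frac{dt_1\cdots dt_n}{t_1+\cdots+t_n},
\]
where $F$ is smooth and compactly supported (a product of external field values). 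Parameterising $t_i = s u_i$ on the simplex $\sum u_i = 1$, the time measure becomes $s^{n-1}ds\,d\mu(u)$, so the integrand is $s^{n-2}$ times a bounded function; this is absolutely integrable at $s = 0$ precisely when $n \ge 2$, so the $\epsilon \to 0$ limit exists.

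I would then incorporate the derivative decorations from internal $\partial\varphi, \bar\partial\varphi$ legs. Each such derivative replaces $e^{-|y_i-y_{i+1}|^2/4t_e}$ by its first derivative, producing an extra factor $-(y_i-y_{i+1})/(2t_e)$. After performing the Wick contractions, these polynomial insertions become polynomials in the bounded entries of the inverse cycle Laplacian, which are themselves ratios $t_j/(\sum_i t_i)$; thus the $1/(\sum_i t_i)$ appearing in the time integrand is not worsened and the estimate of the undecorated case still applies. Taylor expansion of the external fields around $x_1$ (used to reduce the weight to a local functional) generates further $y$-polynomial insertions that produce positive powers of $s$ after integration and only improve convergence.

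The \emph{main obstacle} is to rule out any derivative configuration that would upgrade the denominator past $\sum_i t_i$. Dimensionally, each internal $\partial$ brings $y/t_e$ while the Gaussian enforces $\langle|y|^2\rangle \sim t$, so the two effects trade against each other and leave the power of $(\sum_i t_i)$ unchanged; one must check this case by case over the cycle. The potentially dangerous configuration is a single edge carrying both $\partial_z$ and $\partial_{\bar z}$ at its endpoints, so that its kernel $\partial_z\partial_{\bar z}P$ becomes (up to a constant) a delta function on the diagonal. This is not in fact a counterexample: the delta collapses the two endpoints of that edge into a single vertex, so the decorated wheel reduces either to a smaller wheel (finite by induction) or to a tadpole (outside the hypothesis, and analyzed separately as the sole source of UV divergences).
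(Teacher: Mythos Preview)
Your strategy is the paper's power-counting argument made explicit, and the undecorated wheel is handled correctly: the matrix-tree determinant yields $\int_{[\epsilon,L]^n}(\sum_i t_i)^{-1}\,d^nt$, whose radial behaviour $s^{n-2}$ is integrable for $n\ge 2$, matching the paper's asserted $\epsilon^{k-1}\log\epsilon$ asymptotics.

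The gap is in the derivative analysis. You correctly note that each internal $\partial$ brings down a factor $(y_i-y_{i+1})/t_e$, but the conclusion that Wick-contracting these leaves only ``bounded entries of the inverse cycle Laplacian'' drops the $1/t_e$ prefactors. A contracted pair actually gives $\langle\delta y_e\,\delta y_{e'}\rangle/(t_et_{e'})$; for the cycle the covariances scale like $t_et_{e'}/(\sum_i t_i)$ (e.g.\ $\langle|y_2|^2\rangle=4t_1t_2/(t_1+t_2)$ for $n=2$), so each derivative pair contributes an \emph{extra} factor $(\sum_i t_i)^{-1}$ to the time integrand, not a bounded one. With $d$ such pairs the integrand becomes $(\sum_i t_i)^{-1-d}$ and the radial integral $\int s^{n-2-d}\,ds$ diverges once $d\ge n-1$; at $n=2$ a single internal derivative pair already spoils convergence (concretely: one edge carrying $\partial\bar\partial P\propto K_L-K_\epsilon$ against a plain edge gives $\int K_\epsilon\cdot P\sim\log\epsilon$). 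Your collapse-and-induct fallback for a $\partial\bar\partial$-edge fails at the same place: at $n=2$ the collapsed graph is a tadpole, which is divergent, so you have exhibited a divergence in the 2-wheel rather than removed it --- observing that the tadpole lies ``outside the hypothesis'' does not establish finiteness of the 2-wheel that produced it. The paper's own one-line proof (``same as scalar field theory in dimension~2'') does not address the $I^h$ derivative legs either, so this is a point where genuinely more argument is needed than bare power counting.
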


\begin{proof}
The result follows from a simple power counting argument. The singular behavior of the diagrams is the same as scalar field theory in dimension 2.
It is then a computation--completely analogous to those of Section \ref{sect:scalar}--that such a wheel with $k$ vertices has $\epsilon \to 0$ asymptotic behavior $\epsilon^{k-1} \log \epsilon + O(\epsilon^{k-1})$ and thus converges absolutely for $k \ge 2$.
\end{proof}

Hence, we are left to consider the weight of following type of one-loop diagrams.
The index $k$ indicates the number of external edges at the vertex.
We denote the $\epsilon$-dependent weight by $\cT_k (\epsilon)$. 

\vspace{1ex}

\begin{center}
        \begin{tikzpicture}
%

\node at (8.25,-1.5) {$\cT_k (\epsilon)$};
\node[circle] (c) at (8,0) [draw] {$I^h_{0,k+2}$};

 \node (uulc) at (6.6,0.95) {};
 \node (ulc) at (6.35, 0.45) {};
  \node (llc) at (6.35,-0.45) {};
  \node (lllc) at (6.6,-0.95) {};
\node[rotate=90] at (6.55,0) {$\dotsb$};
\tikzset{every loop/.style={min distance=10mm,in=-35,out=35,looseness=7}};
\path[thick] (c) edge  [loop above]  ();
\node at (10.25,0) {$P_{\epsilon \to L}$};

\draw[thick] (ulc)--(c.180);
\draw[thick] (llc)--(c.180);
\draw[thick] (uulc) --(c.180);
\draw[thick] (lllc) --(c.180);

        \end{tikzpicture}
\end{center}

\begin{lem}\label{lem:cohomologous}
As elements of $ (\sO(\sE)\otimes C^\infty ((\epsilon,L)), \{S,-\})$, the following are cohomologous
\[
\sum_{k} \cT_k(\epsilon) \sim \cT_2 (\epsilon) .
\]
\end{lem}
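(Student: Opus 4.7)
The plan is to reduce the sum $\sum_k \cT_k(\epsilon)$ to its $k=2$ term by exhibiting BRST primitives for each $\cT_k$ with $k \neq 2$, using the classical master equation (Proposition \ref{CME}) together with the UV-finiteness of multi-vertex wheels that was just established in the preceding proposition.

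My first step is to repackage the sum as a single universal Feynman weight. Letting $I^h := \sum_{k \ge 0} I^h_{0,k+2}$ denote the full metric-interaction, one has
\ben
\sum_{k} \cT_k(\epsilon) \;=\; W_{\gamma}(P_{\epsilon \to L}, I^h),
\een
where $\gamma$ is the one-vertex wheel; the $k$-th component in the external-leg expansion recovers $\cT_k$. I then apply the BRST differential $\{S,-\}$ to this universal weight and use the Leibniz rule to separate the result into (a) the $Q$-action on external fields and on the propagator $P_{\epsilon\to L}$, (b) the $\{I,-\}$-action, which attaches an $I^h$- or $I^X$-vertex to one of the external legs and thus produces a wheel with two or more vertices, and (c) the action of $\{S,-\}$ on the vertex $I^h$ inside the self-loop.

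The classical master equation rewrites piece (c) as a combination of $\{I^h,I^h\}$, $\{I^h,I^X\}$, the curvature term $F_{\ell_1}$, and boundary contributions; when these are inserted back into the self-loop, only multi-vertex wheels or local constants appear. By the preceding proposition, every wheel with two or more vertices has a finite $\epsilon\to 0$ limit and represents a class that is already $\epsilon$-constant in $(\sO(\sE) \otimes C^\infty((\epsilon,L)), \{S,-\})$, hence BRST-trivial for our purposes. What remains is a sum of single-vertex wheels in which $\{S,-\}$ has effectively shifted the jet-expansion order of $h$ by one, yielding cohomological relations of the schematic form $\cT_k \sim \cT_{k\pm 1}$ modulo exact and UV-finite pieces. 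Iterating this identity collapses all $k \neq 2$ contributions onto $\cT_2$.

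The hard part will be the combinatorial bookkeeping in piece (c): tracking the symmetry factors, signs and Leibniz terms that arise when $\{S,-\}$ acts on the self-loop weight, and identifying which jet-shift corresponds to each term of the classical master equation. A subsidiary difficulty is to verify that the reduction terminates precisely at $k=2$ rather than at $k=0$ or $k=1$; this should reflect the fact that the trace of $h^{(2)}_\infty$ against $h^{-1}$ is the minimal jet-order expression representing the Ricci tensor, while traces of lower jets are $\{S_X,-\}$-exact and higher jets reduce to it via the Chevalley--Eilenberg differential on the $\fg_X$-modules $\Sym^\bullet(T_X^\vee)$ appearing in the classification of Proposition \ref{prop:obsdef}.
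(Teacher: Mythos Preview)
Your approach has a genuine gap, and it diverges substantially from the paper's argument.

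The central error is your step (b)/(c): you claim that multi-vertex wheels, because they are UV-finite as $\epsilon\to 0$, ``represent a class that is already $\epsilon$-constant \ldots\ hence BRST-trivial for our purposes.'' This inference fails. The lemma asserts that $\sum_k \cT_k(\epsilon)$ and $\cT_2(\epsilon)$ are \emph{cohomologous} in $(\sO(\sE)\otimes C^\infty((\epsilon,L)),\{S,-\})$; having a finite $\epsilon\to 0$ limit is a statement about the $C^\infty((\epsilon,L))$ factor and has nothing to do with being $\{S,-\}$-exact. An $\epsilon$-independent functional can perfectly well represent a nontrivial class. So when your Leibniz expansion of $\{S,-\}$ on the tadpole spits out two-vertex wheels, you cannot discard them on analytic grounds. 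Without that step, the iterative scheme $\cT_k \sim \cT_{k\pm 1}$ does not close, and you never isolate $\cT_2$. Your discussion of why the iteration should terminate at $k=2$ (via the Ricci tensor and the classification in Proposition~\ref{prop:obsdef}) is reasoning backward from the desired answer rather than from the mechanism you set up.

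The paper's proof avoids all of this. It is a one-line cohomological argument on the target: the tadpole contraction of $\sum_k I^h_{0,k}$ produces a functional built from an element of $\clie^*(\fg_X;\Sym^2(T_X^\vee))$; the identification $\{S_X,-\}=d_{\rm Lie}$ (from the proof of Proposition~\ref{CME}) means that $\{S,-\}$-cohomology on such $\psi$-independent functionals reduces to Chevalley--Eilenberg cohomology; and the quasi-isomorphism $j_\infty:\Sym^2(T_X^\vee)\xrightarrow{\simeq}\clie^*(\fg_X;\Sym^2(T_X^\vee))$ then says any such class is represented by its $\Sym^0(\fg_X^\vee)$-component, which is exactly $\cT_2$. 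No Feynman bookkeeping, no use of the preceding UV-finiteness proposition, and no classical master equation are needed here---the CME already went into establishing that $\{S_X,-\}$ squares to zero and commutes with $\{S_h,-\}$. You should reorganize your argument around $j_\infty$ rather than around graph combinatorics.
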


\begin{proof}
Recall the infinite jet map 
\[
j_\infty :  \Sym^2 (T_X^\vee) \hookrightarrow \Omega^*_X\left(\sJ \left(\Sym^2 (T_X^\vee)\right)\right) = \clie^*\left(\fg_X ; \Sym^2(T_X^\vee)\right),
\]
which is actually a quasi-isomorphism.  By construction (see Section \ref{sect:classical}), 
\[
\sum_k I^h_{0,k} = \int_\Sigma j_\infty (h) (\partial \phi , \overline{\partial} \phi ; e^\phi).
\]
Further, bracketing with the Lie theoretic part of the action, $S_X$, recovers the Chevalley-Eilenberg differential, i.e., $\{S_X, -\} = d_{\rm Lie}$. Finally, as $\{S_X, -\}$ and $\{S_h, -\}$ commute, the proposition follows from the fact that $j_\infty$ is a quasi-isomorphism.
\end{proof}

%
%
%
%

Next, it is a straightforward computation with scalar heat kernels to see that $\cT_2 (\epsilon)$ is actually purely $\log \epsilon$ divergent.

\begin{lem}\label{lem:logdiv}
The Feynman weight $\cT_2(\epsilon) $ is purely $\log \epsilon$ divergent, so
\[
\cT_2(\epsilon) = \log(\epsilon) \cT_2
\]
for some local functional $\cT_2$.
\end{lem}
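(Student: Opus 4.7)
The plan is a direct scalar heat-kernel computation.  The Feynman weight $\cT_2(\epsilon)$ comes from a single-vertex tadpole: the vertex $I^h_{0,4}$ has kernel proportional to $h^{(2)}_\infty(\partial\phi,\bar\partial\phi;\phi,\phi)$, and the self-loop carries the propagator $P_{\epsilon\to L}$, which pairs two of its four $\fg_X$-valued legs.  After Wick contraction, $\cT_2(\epsilon)$ reduces to a sum of local functionals in the external $\phi$'s, each weighted by a coincident-point value of the propagator or one of its derivatives.

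First I would use the explicit form of $P_{\epsilon\to L}$ from Section~\ref{sec:prop} to observe three facts about the scalar kernel $p_t(z,w)=\tfrac{1}{4\pi t}e^{-|z-w|^2/4t}$: (i) its coincident-point value integrates to the purely logarithmic quantity $P_{\epsilon\to L}(z,z) = \tfrac{1}{4\pi}\log(L/\epsilon)\cdot h^\sharp$; (ii) all first derivatives $\partial_z P(z,w)|_{z=w}$ and $\bar\partial_z P(z,w)|_{z=w}$ vanish by parity of the Gaussian; and (iii) the mixed second derivative satisfies $\partial_{z_1}\bar\partial_{z_2}P_{\epsilon\to L}|_{z_1=z_2}=\tfrac{1}{16\pi}\int_\epsilon^L t^{-2}\,dt$, which is strictly $\epsilon^{-1}$-singular with no $\log\epsilon$.

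Enumerating the self-loop contractions on $I^h_{0,4}$ then gives three cases.  The $\phi$--$\phi$ contraction yields exactly $\log(\epsilon)\cdot \cT_2$ for the local functional $\cT_2 = (4\pi)^{-1}\int_\Sigma h^{(2)}_\infty(\partial\phi,\bar\partial\phi;h^\sharp)\,d^2z$.  Single-derivative contractions pairing a $\partial\phi$- or $\bar\partial\phi$-leg with a plain $\phi$-leg vanish by (ii).  The mixed $\partial\phi$--$\bar\partial\phi$ contraction produces a strictly $\epsilon^{-1}$-singular local polynomial in the external fields, which under the renormalization scheme fixed in Section~\ref{sect:scalar} is declared purely singular and absorbed into $I^{CT}(\epsilon)$; it contributes no $\log\epsilon$ to the weight.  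Summing the three cases yields $\cT_2(\epsilon)=\log(\epsilon)\,\cT_2$.

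The main obstacle is bookkeeping of tensor indices: packaging the scalar $\log\epsilon$ prefactor together with the $\fg_X$-valued data from $h^{(2)}_\infty$ and $h^\sharp \in \fg_X\tensor \fg_X$ to verify that $\cT_2$ descends to a globally defined local functional on $\sE$.  This is essentially a verification that the trace of $h^{(2)}_\infty$ against $h^\sharp$ is a tensorial object on $(X,h)$, which it is by construction of the jet $h_\infty$ in Section~\ref{sect:classical}.  Obtaining the explicit form of $\cT_2$ above is precisely what enables its identification with the Ricci curvature in the subsequent computation of $\beta^{(1)}_h$.
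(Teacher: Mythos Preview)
Your approach is the same as the paper's at its core: the paper's entire proof is your observation (i), namely that the scalar propagator at coincident points equals $\int_\epsilon^L \tfrac{1}{4\pi t}\,dt = \tfrac{1}{4\pi}\log(L/\epsilon)$, which is purely $\log\epsilon$ divergent. The paper does not enumerate the derivative contractions at all; your cases (ii) and (iii) are a genuine addition, and your computations there are correct.

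There is, however, a small logical slip in how you dispose of case (iii). The $\epsilon^{-1}$ term you find in the $\partial\varphi$--$\bar\partial\varphi$ contraction is part of the raw tadpole weight $\cT_2(\epsilon)$; saying it is ``absorbed into $I^{CT}(\epsilon)$'' does not remove it from $\cT_2(\epsilon)$, since counterterm subtraction occurs only when one assembles $I[L]$, not in the definition of the bare graph weight. Taken literally, your own analysis shows $\cT_2(\epsilon) = \log(\epsilon)\,\cT_2 + \epsilon^{-1}(\text{mass-type term}) + (\text{finite})$, so the displayed equality in the lemma holds only modulo the non-logarithmic singular and finite pieces. This is harmless for the application: the $\beta$-function is extracted via Proposition~\ref{prop:logdiv} as $\mathrm{sing}_{\log\epsilon}\cT_2(\epsilon)$, and your argument correctly isolates that coefficient as coming solely from the jet--jet contraction. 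The paper sidesteps the issue by tacitly considering only that contraction; you have made explicit why the others do not interfere.
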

\begin{proof}
The analytic part of the propagator is of the form $\int_{t=\epsilon}^L \frac{1}{4 \pi t} e^{-|x-y|^2/4t} \d t$.
Thus, the analytic contribution to the weight of the tadpole diagram (where we set $x = y$) is purely log $\epsilon$ divergent.
\end{proof}

We can summarize the preceding arguments as follows:
\[
\beta^{(1)} = \left [\sO^{(1)}_\beta \right ] = [\mathrm{sing}_{\log \epsilon} \cT_2 (\epsilon)] = [\cT_2] \in H^0 (\sO_{loc}(\sE)) .
\]

We would now like to realize the $\beta$-function in terms of the geometry of the target Riemannian manifold $(X,h)$. To begin, following Section \ref{sect:betafunc}, we could describe a basis for $H^0 (\sO_{loc} (\sE))$. Alternatively, we recall the map
\[
\mathfrak{R} : \Sym^2 (T_X^\vee) \to \sO_{loc} (\sE)
\]
we constructed in Section \ref{sect:obsdef}.

\begin{thm}\label{thm:main}
As elements of $H^0 (\sO_{loc} (\sE))$, we have an equality
\[
\beta^{(1)} = -\frac{1}{12 \pi} \left [ \mathfrak{R} (\mathrm{Ric}) \right ].
\]
\end{thm}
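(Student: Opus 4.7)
The plan is to reduce $\beta^{(1)}$ to a single explicit Feynman integral, evaluate it in geodesic normal coordinates centered at a point of $X$, and then identify the resulting target-space tensor as the Ricci tensor of $h$.

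First, I would consolidate the reductions already in place. By the chain of equalities preceding the theorem statement we have $\beta^{(1)} = [\mathcal{T}_2] \in H^0(\mathcal{O}_{loc}(\mathcal{E}))$, where $\mathcal{T}_2$ is the local functional extracted from the $\log\epsilon$-divergent part of the tadpole whose single vertex is $I^h_{0,4} = \frac{1}{2!}\int_{\Sigma} h_\infty^{(2)}(\partial\varphi,\bar\partial\varphi;\varphi\otimes\varphi)\,d^2z$. This is the only piece of the quantum theory that could produce a non-trivial one-loop $\beta$; the $\hbar J[L]$ correction contributes only through trees and so is UV finite, and the $\beta_X$ piece vanishes by Proposition \ref{prop:betaisbetah}.

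Next, I would compute $\mathcal{T}_2$ in local coordinates. The propagator pairs $\varphi$-legs and, through the gauge fixing operator $Q^{GF}=\star\otimes h^\sharp$, contracts the target indices of the two internal $\varphi$'s using $h^{kl}$. Writing $h_\infty^{(2)}$ in a coordinate chart, the surviving contraction of the tadpole produces a local functional of the schematic form
\[
\mathcal{T}_2(\epsilon) \;=\; \alpha \log(\epsilon) \int_{\Sigma}\; h^{kl}(x)\,\partial_k\partial_l h_{ij}(x)\;\partial\varphi^i\,\bar\partial\varphi^j\,d^2z + (\text{CE-exact}),
\]
where the analytic constant $\alpha$ comes from the scalar loop integral $\int_\epsilon^L \frac{dt}{4\pi t}\sim -\frac{1}{4\pi}\log\epsilon$, together with the factorial from the vertex and the automorphism factor of the self-loop. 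The other possible Wick contractions (e.g.\ a $\partial\varphi$-leg with an internal $\varphi$-leg) either vanish or contribute only finite counter-terms, not to the $\log\epsilon$ coefficient; this has to be checked by a short power-counting argument using that $P_{\epsilon\to L}$ is a one-dimensional Laplacian-Green's-function integral in two source dimensions.

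Third, I would invoke the freedom to work in Riemann normal coordinates centered at an arbitrary point $p \in X$. Different choices of splitting for $\fg_X$ (equivalently, different connections used to identify $\Omega^*_X(\mathcal{J}_X)$ with $C^*(\fg_X)$) differ by Chevalley--Eilenberg coboundaries, so it is legitimate to compute the jet expansion in normal coordinates and work modulo exact terms in the obstruction deformation complex. In such coordinates $\partial_k\partial_l h_{ij}(p) = -\tfrac{1}{3}\bigl(R_{ikjl}+R_{iljk}\bigr)(p)$, so $h^{kl}(p)\,\partial_k\partial_l h_{ij}(p) = -\tfrac{2}{3}\,R_{ij}(p)$. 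Substituting this identity and tracking the normalization of $\mathfrak{R}$ (which adds a factor of $\frac{1}{k!}$ in each jet expansion), the result is of the form $-\frac{1}{12\pi}\mathfrak{R}(\mathrm{Ric})$ modulo exact terms, which is the desired class in $H^0(\mathcal{O}_{loc}(\mathcal{E}))$.

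The main obstacle is bookkeeping: reconciling the analytic prefactor $-\frac{1}{4\pi}$ from the heat-kernel integral with the combinatorial factors (the $\frac{1}{2!}$ from the jet expansion of $h_\infty$, the symmetry factor of the tadpole, and the $\frac{1}{k!}$'s implicit in the definition of $\mathfrak{R}$) and with the $-\frac{2}{3}$ produced by passing from $\partial^2 h$ to Ricci. A secondary subtlety is verifying that non-normal-coordinate contributions and the non-$\partial\bar\partial$-type contractions really do land in the image of $\{S,-\}$ rather than contributing genuine cohomology; this requires re-using the Chevalley--Eilenberg acyclicity exactly as in the proof of Lemma \ref{lem:cohomologous}.
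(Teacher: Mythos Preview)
Your proposal is correct and follows essentially the same route as the paper. The paper's proof is terser: it simply asserts that the four-valent vertex $I^h_{0,4}$ is determined by the covariant curvature tensor (because the $L_\infty$ structure on $\fg_X$ was chosen via the Levi--Civita connection), contracts the self-loop to produce a trace over jet indices, and reads off $-\tfrac{1}{12\pi}\int_\Sigma \mathrm{Ric}(\partial\varphi,\bar\partial\varphi)$, then invokes Lemma~\ref{lem:cohomologous} once more to identify this with $\mathfrak{R}(\mathrm{Ric})$. Your explicit passage through geodesic normal coordinates is exactly the computation justifying that first assertion, and the paper itself mentions this as an alternative argument in the remark immediately following the theorem. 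Your concern about the $\partial\varphi$--$\varphi$ contractions is handled by noting that $\partial_x P_{\epsilon\to L}(x,y)|_{x=y}=0$ by parity, while the $\partial\varphi$--$\bar\partial\varphi$ contraction is $\epsilon^{-1}$-divergent rather than $\log\epsilon$-divergent, so neither enters $\mathcal{T}_2$; this confirms the power-counting you flagged.
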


\begin{proof}

First note that in the $\epsilon \to 0$ limit, $P_{\epsilon \to L}$ becomes a delta function on the diagonal: that is, we take a trace over pairs of jet indices.  Next, up to a constant, the covariant curvature tensor $R_{ikj\ell}$ corresponding to the metric $h$ determines the $4$-valent vertex $I^h_{0,4}$; for $k>4$, the $I^h_{0,k}$ are just the jet expansions of this tensor. Combining the previous two observations, we have an equivalence of local functionals

\[
\cT_2 = -\frac{1}{4 \pi} \int_\Sigma j_2 (h) (\partial \phi , \overline{\partial} \phi; \phi_k , \phi_l)  \delta_{kl}
= -\frac{1}{4 \pi}  \int_\Sigma \left (\frac{1}{3}\right )  R_{ikj \ell} (\partial \phi, \phi, \overline{\partial} \phi,\phi) \delta_{k \ell}
= -\frac{1}{12 \pi} \int_\Sigma R_{ikjk} (\partial \phi, \phi, \overline{\partial} \phi,\phi).
\]
Thus we have
\[
\cT_2  = - \frac{1}{12 \pi}  \int_\Sigma \mathrm{Ric} (\partial \phi , \overline{\partial} \phi).
\]
The same argument as in Lemma \ref{lem:cohomologous} proves that
\[
-\frac{1}{12 \pi} \mathfrak{R} (\mathrm{Ric}) \quad \text{ and }  \quad - \frac{1}{12 \pi}  \int_\Sigma \mathrm{Ric} (\partial \phi , \overline{\partial} \phi)
\]
are cohomologous as elements of $\sO_{loc} (\sE)$.
\end{proof}

Using the image of the map $\mathfrak{R} : \Sym^2 (T_X^\vee) \to \sO_{loc} (\sE)$, we can rephrase the theorem as follows.

\begin{cor}\label{prop:betaricci}
With respect to the renormalization scheme where  $\epsilon^{-1}$ and $\log \epsilon$ are purely singular, we have
\[
\beta^{(1)} (h) = -\frac{1}{12 \pi} \mathrm{Ric} (h) \in \Sym^2 (T_X^\vee) \subset H^0 (\sO_{loc} (\sE)).
\]
\end{cor}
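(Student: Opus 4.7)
The plan is to treat this corollary as a direct translation of Theorem \ref{thm:main}, simply unpacking the relationship between the map $\mathfrak{R}$ and the inclusion $\Sym^2(T_X^\vee) \hookrightarrow H^0(\sO_{loc}(\sE))$ supplied by Proposition \ref{prop:obsdef}. First, I would invoke Theorem \ref{thm:main} to obtain the identity
\[
\beta^{(1)} = -\frac{1}{12\pi}\,[\mathfrak{R}(\mathrm{Ric})] \in H^0(\sO_{loc}(\sE)).
\]
All the analytic work (extracting the $\log\epsilon$ divergence of $\cT_2(\epsilon)$, cohomologous reduction from $\sum_k \cT_k(\epsilon)$ to $\cT_2(\epsilon)$, and identification of the surviving jet coefficient with the covariant Riemann tensor) has already been carried out there.

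Next, I would invoke Proposition \ref{prop:obsdef} together with the explicit quasi-isomorphism $\mathfrak{R} : F \to (\Def)^{\Aff(\CC)}$ constructed immediately after it. Under the decomposition $F = \Riem(X,h)[1] \oplus \Omega^3_{cl,X}$, the summand of interest is the degree-zero part of $\Riem(X,h)[1]$, which is precisely $\Sym^2(T_X^\vee)$. The map $\mathfrak{R}$, defined on $\alpha \in \Sym^2(T_X^\vee)$ by
\[
\mathfrak{R}(\alpha)(\varphi,\psi) = \int j_\infty(\alpha)(\partial\varphi, \bar\partial\varphi; e^\varphi)\, d^2z,
\]
therefore realizes the inclusion $\Sym^2(T_X^\vee) \hookrightarrow H^0(\sO_{loc}(\sE))$ asserted in the statement of the corollary. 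Applying this identification to $\mathrm{Ric}(h) \in \Sym^2(T_X^\vee)$ converts the equality furnished by Theorem \ref{thm:main} into the equality claimed here.

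The only bookkeeping to verify is that the renormalization scheme stipulated in the corollary (where $\epsilon^{-1}$ and $\log\epsilon$ are declared purely singular) is the same scheme used implicitly in the proof of Theorem \ref{thm:main}; this is manifest, since Lemma \ref{lem:logdiv} shows $\cT_2(\epsilon) = \log(\epsilon)\,\cT_2$ and this scheme is exactly what allows the identification $\beta^{(1)} = [\mathrm{sing}_{\log\epsilon}\cT_2(\epsilon)] = [\cT_2]$ made just before the theorem. The main (very mild) obstacle is thus purely notational: confirming the shift conventions in $\Riem(X,h)[1]$ put $\Sym^2(T_X^\vee)$ in cohomological degree zero of the deformation complex, so that $\mathrm{Ric}(h)$ genuinely lives in $H^0(\sO_{loc}(\sE))$ and the statement of the corollary is sensible as written.
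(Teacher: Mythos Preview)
Your proposal is correct and matches the paper's approach: the paper presents this corollary with no proof beyond the sentence ``Using the image of the map $\mathfrak{R} : \Sym^2 (T_X^\vee) \to \sO_{loc} (\sE)$, we can rephrase the theorem as follows,'' and your argument simply makes that rephrasing explicit by invoking Theorem~\ref{thm:main} together with the quasi-isomorphism $\mathfrak{R}$ from Section~\ref{sect:obsdef}. The additional checks you perform (renormalization scheme consistency and degree bookkeeping) are reasonable and do not diverge from the paper's reasoning.
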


\begin{rmk}
One could give an alternative proof of Theorem \ref{thm:main} as follows.  In Section \ref{sect:obsdef}, we have identified $H^0(\sO_{loc} (\sE))$ as a direct sum.  Next, argue locally on the target: locally the $\L8$ algebra $\fg_X$ is curved, but abelian, so there are no interesting vertices of type $I_X$.  Consequently, we know that $\beta^{(1)}$ is a tensor on the target. Finally, we choose geodesic normal coordinates and compute this tensor at a point in $X$.
\end{rmk}

\appendix
\section{Homotopy renormalization group flow}\label{app:app}

The homotopy renormalization group flow equation can be described in terms of Feynman graphs. Note that our description is for an arbitrary functional on a space of fields $\sE$. Further, we will work relative to an arbitrary dg algebra $\cA$ equipped with a nilpotent ideal $\cI$.

\begin{dfn}
A graph $\cG$ consists of the following data:
\begin{enumerate}
 \item A finite set of vertices $V(\cG)$;
 \item A finite set of half-edges $H(\cG)$;
 \item An involution $\sigma: H(\cG)\rightarrow H(\cG)$. The set of fixed points of this map is denoted by $T(\cG)$ and is
called the set of tails of $\cG$. The set of two-element orbits is denoted by $E(\cG)$ and is called the set of internal edges of
$\cG$;
 \item A map $\pi:H(\cG)\rightarrow V(\cG)$ sending a half-edge to the vertex to which it is attached;
 \item A map $g:V(\cG)\rightarrow \mathbb{Z}_{\geqslant 0}$ assigning a genus to each vertex.
\end{enumerate}
\end{dfn}
It is clear how to construct a topological space $|\cG|$ from the above abstract data. A graph $\cG$ is called $connected$ if
$|\cG|$ is connected. The genus of the graph $\cG$ is defined to be 
\[
g(\cG):=b_1(|\cG|)+\sum_{v\in V(\cG)}g(v),
\] 
where
$b_1(|\cG|)$ denotes the first Betti number of $|\cG|$. Let 
\[
\sO^+(\sE)\subset \sO(\sE)[[\hbar]]
\] 
be the subspace consisting of those
functionals  which are at least cubic modulo $\hbar$ and the nilpotent ideal $\mathcal{I}$ in the base ring
$\mathcal{A}$.  Let $F\in \sO^+(\sE)$ be a functional,  which can be expanded as
\[
F=\sum_{g,k\geq 0}\hbar^g F_{g}^{(k)}, \quad F_{g}^{(k)}\in \sO^{(k)}(\sE).
\]
We view each $F_{g}^{(k)}$ as an
$S_k$-invariant linear map
\[
F_{g}^{(k)}: \mathcal{E}^{\otimes k}\rightarrow\mathcal{A}.
\]
With the propagator $P_{\epsilon \to L}$, we will describe the {\it (Feynman) graph weights}
\[
W_\cG(P_{\epsilon \to L},F)\in \sO^+(\sE)
\] 
for any connected graph $\cG$. We label each vertex $v$ in $\cG$ of genus $g(v)$ and valency $k$ by
$F^{(k)}_{g(v)}$. This defines an assignment
\[
F(v):\mathcal{E}^{\otimes H(v)}\rightarrow \cA,
\]
where $H(v)$ is the set of half-edges of $\cG$ which are incident to $v$.
Next, we label each internal edge $e$ by the propagator 
\[
P_e=P_{\epsilon \to L}\in\mathcal{E}^{\otimes H(e)},
\]
where $H(e)\subset H(\cG)$ is the two-element set consisting of the half-edges forming $e$. We can then contract
\[
\otimes_{v\in V(\cG)}F(v): \mathcal{E}^{H(\cG)}\rightarrow \cA
\]
with 
\[
\otimes_{e\in E(\cG)} P_e\in\mathcal{E}^{H(\cG)\setminus T(\cG)}
\] 
to yield a linear map
\[
W_\cG(P_{\epsilon \to L},F) : \mathcal{E}^{\otimes T(\cG)}\rightarrow \cA.
\]

\begin{dfn}
We define the (homotopy) RG flow operator with respect to the propagator $P_{\epsilon \to L}$ 
\[
   W(P_{\epsilon \to L}, -): \sO^+(\sE)\to \sO^+(\sE), 
\]
by
\begin{equation}\label{RG-flow}
W(P_{\epsilon \to L}, F):=\sum_{\cG}\frac{\hbar^{g(\cG)}}{\lvert \text{Aut}(\cG)\rvert}W_\cG(P_{\epsilon \to L}, F)
\end{equation}
where the sum is over all connected graphs.
\end{dfn}

Equivalently, it is useful to describe the (homotopy) RG flow operator formally via the simple equation 
\[
e^{W(P_{\epsilon \to L}, F)/\hbar}=e^{\hbar \partial_{P_{\epsilon \to L}}} e^{F/\hbar}.
\]

\begin{dfn} A family of functionals $F[L] \in \sO^+(\sE)$ parametrized by $L>0$ is said to satisfy the homotopy renormalization group flow equation (hRGE) if for each $0 < \epsilon < L$
\[
    F[L]=W(P_{\epsilon \to L}, F[\epsilon]).
\]
\end{dfn}


\bibliographystyle{amsplain}
\bibliography{beta}

\end{document}